\documentclass[11pt]{article}
\usepackage{times}
\usepackage[margin=1in]{geometry}
\usepackage{graphicx}
\usepackage{enumerate}
\usepackage{rotating}
\usepackage{verbatim}
\usepackage{amsmath}
\usepackage{picins,boxedminipage}
\usepackage{amsfonts}
\usepackage{setspace}
\usepackage{pstricks}
\usepackage{pst-node}
\usepackage{pst-plot}
\usepackage{pst-tree}
\usepackage{floatflt,url,hyperref}
\usepackage{txfonts}  


\newtheorem{theorem}{Theorem}[section]
\newtheorem{definition}[theorem]{Definition}
\newtheorem{lemma}[theorem]{Lemma}
\newtheorem{conjecture}[theorem]{Conjecture}

\newtheorem{observation}[theorem]{Observation}
\newtheorem{claim}[theorem]{Claim}


\newcommand{\testbf}[1]{\mathbf{#1}}
\newcommand{\qed}{\mbox{}\hspace*{\fill}\nolinebreak\mbox{\rule{6pt}{6pt}}}
\newenvironment{proof}{\vspace{-2mm}\noindent {\bf Proof:}}{\qed\par\medskip}

\addtolength{\marginparsep}{-2ex}
\addtolength{\marginparwidth}{-2ex}


\begin{document}
\title{When the Cut Condition is Enough: A Complete Characterization
  for Multiflow Problems in Series-Parallel Networks\footnote{This
    work is supported in part by NSF grants CCF-0728869,
    CCF-1016778, and IIS-0916565.}}

\author{
Amit Chakrabarti
\and Lisa Fleischer
\and Christophe Weibel
}

\maketitle
\def\thepage {} 
\thispagestyle{empty}



\begin{abstract}
  Let $G=(V,E)$ be a supply graph and $H=(V,F)$ a demand graph defined
  on the same set of vertices. An assignment of capacities to the
  edges of $G$ and demands to the edges of $H$ is said to satisfy the
  \emph{cut condition} if for any cut in the graph, the total demand
  crossing the cut is no more than the total capacity crossing it.
  The pair $(G,H)$ is called \emph{cut-sufficient} if for any
  assignment of capacities and demands that satisfy the cut condition,
  there is a multiflow routing the demands defined on $H$ within the
  network with capacities defined on $G$.

  We prove a previous conjecture, which states that when the supply
  graph $G$ is series-parallel, the pair $(G,H)$ is cut-sufficient if
  and only if $(G,H)$ does not contain an \emph{odd spindle} as a minor;
  that is, if it is impossible to contract edges of $G$ and delete edges
  of $G$ and $H$ so that $G$ becomes the complete bipartite graph
  $K_{2,p}$, with $p\geq 3$ odd, and $H$ is composed of a cycle
  connecting the $p$ vertices of degree $2$, and an edge connecting the
  two vertices of degree $p$. We further prove that if the instance is
  \emph{Eulerian} --- that is, the demands and capacities are integers
  and the total of demands and capacities incident to each vertex is
  even --- then the multiflow problem has an integral solution. We
  provide a polynomial-time algorithm to find an integral solution in
  this case.

  In order to prove these results, we formulate properties of tight
  cuts (cuts for which the cut condition inequality is tight) in
  cut-sufficient pairs. We believe these properties might be useful in
  extending our results to planar graphs.
\end{abstract}
\begin{figure}[h]
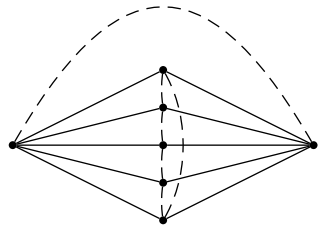

\begin{center}
\psset{unit=0.5cm,xunit=2.0cm,arrows=-,shortput=nab,linewidth=0.5pt,arrowsize=2pt 5,labelsep=1.5pt}
\pspicture(0,0)(2,5)
\dotnode(0,2){u}
\dotnode(1,0){a}
\dotnode(1,1){b}
\dotnode(1,2){c}
\dotnode(1,3){d}
\dotnode(1,4){e}
\dotnode(2,2){v}
\ncline{u}{a}
\ncline{u}{b}
\ncline{u}{c}
\ncline{u}{d}
\ncline{u}{e}
\ncline{v}{a}
\ncline{v}{b}
\ncline{v}{c}
\ncline{v}{d}
\ncline{v}{e}
\psset{linestyle=dashed}
\ncarc{a}{b}
\ncarc{b}{c}
\ncarc{c}{d}
\ncarc{d}{e}
\ncarc[arcangle=30]{e}{a}
\ncarc[arcangle=60,ncurv=1.4]{u}{v}
\endpspicture
\caption{A $5$-spindle. Supply edges are solid and demand edges are dashed.}
\label{fig:abstract}
\end{center}
\end{figure}
\vspace{2cm}
\clearpage
\pagenumbering{arabic}


\section{Introduction} \label{sec:intro}

When does a network admit a flow that satisfies a given collection of
point-to-point demands? This broad question has led to a number of
important results over the last several decades. The most fundamental of
these considers the case of a {\em single} demand, from a source vertex
to a sink vertex. In this case, the network is able to satisfy the
demand if and only if for every cut separating the source from the sink,
the total capacity of network edges crossing the cut is no less than the
demand: this holds regardless of the topology of the network. This is
the famous max-flow min-cut theorem, celebrated both for its elegance
and its very wide applicability across computer science, graph theory,
and operations research.

Things get much more interesting, and intricate, when we generalize to
the multicommodity case. It is easy to see that in order to have a flow
satisfying all demands, it is necessary that for all cuts the total
capacity crossing the cut is no less than the total demand crossing it.
This is called the {\em cut condition}.  Unlike in the single-commodity
case, this is no longer a sufficient condition in
general~\cite{Okamura81}. This has led to two kinds of generalizations:
(1) finding conditions on the topology of the network and/or the
structure of the demands that make the cut condition sufficient, and (2)
understanding how ``far'' from sufficient the cut condition can be. We
shall discuss both categories of results below, after the necessary
basic definitions. The work presented in this paper falls into the first
category.

The simplest example demonstrating that the cut condition does not
suffice is the network $K_{2,3}$, with unit capacities and unit demand
between each pair of non-adjacent vertices. This example has a natural
generalization to the network $K_{2,p}$ for odd $p \ge 3$, as
suggested by Figure~\ref{fig:abstract}. Our main theorem says that for
the important class of series-parallel networks, these examples ---
which we call {\em odd spindles} --- are (in a sense) the {\em only}
ones where the cut condition does not suffice.

The single-commodity flow problem has another nice property that does
not extend to the multicommodity case: if the demand and the capacities
are integers and the network can satisfy the demand, then it can do so
with an {\em integral} flow. In our work, we show that integral
multicommodity flow instances on series-parallel networks that satisfy
the cut condition and avoid the above odd spindles admit {\em
half-integral} flows satisfying the demands (in fact, we show a stronger
result which implies this; see below). Moreover, for such instances, we
give a polynomial time algorithm to compute such a flow.

\subsection{Basic Definitions and Background}

Given an undirected graph $G=(V,E)$, with capacities $c_e$ on the edges
$e\in E$, let $\mathcal{P}$ be the set of simple paths in $G$.  A {\em
multiflow} is an assignment $f:\mathcal{P}\rightarrow\mathbb{R_+}$. It
is said to be {\em feasible} if, for each $e\in E$, we have $\sum_{P\in
\mathcal{P}(e)}f_P \le c_e$, where $\mathcal{P}(e)$ is the set of paths
in $G$ that contain the edge $e$.  Let $H=(V,F)$ be another graph on the
same set of vertices, with demands $D_i$ on the edges $i\in F$.
The multiflow $f$ is said to {\em satisfy} $H$ if for each edge $i\in
F$, we have $\sum_{P\in \mathcal{P}[i]}f_P\geq D_i$, where
$\mathcal{P}[i]$ is the set of paths in $G$ that connect the endpoints
of $i$.  The tuple $(G,H,c,D)$ forms an instance of the multiflow (or
multicommodity flow) problem, which consists of finding whether there
exists a feasible multiflow in $G$ satisfying $H$; if so, the instance
is said to be \emph{routable}. We call $G$ and $H$ the {\em supply
graph} and {\em demand graph} (respectively) of the instance.

For each set $C\subseteq V$, the {\em cut} $\delta_G(C)$ generated by
$C$ in $G$ is defined to be the set of edges in $G$ with exactly one
endpoint in $C$.  We define $\delta_H(C)$ similarly.  The {\em surplus}
$\sigma(C)$ of $C$ is the total capacity of the edges in $\delta_G(C)$
minus the total demand of the edges in $\delta_H(C)$:
$\sigma(C)=\sum_{e\in\delta_G(C)}c_e - \sum_{i\in\delta_H(C)}D_i$.  The
cut condition is then the statement that every cut has nonnegative
surplus: $\sigma(C)\geq 0$ for all $C\subseteq V$. As noted above, an
instance $(G,H,c,D)$ must satisfy the cut condition in order to be
routable. Our goal is to understand when this condition is sufficient.

The graph pair $(G,H)$ is {\em cut-sufficient} if for all assignments of
capacities $c$ and demands $D$ that satisfy the cut condition, the
instance $(G,H,c,D)$ is routable. One of the earliest cut-sufficiency
theorems is due to Hu~\cite{Hu63} and states that $(G,H)$ is
cut-sufficient if $H$ is the union of two stars, i.e., if all of its
edges can be covered by two vertices. Notice that this theorem applies
to a general $G$, but it greatly restricts $H$. Network flow literature
abounds with other cut-sufficiency
theorems~\cite{Hu63,Lomonosov85,Okamura83,Okamura81,Schrijver89,Seymour81a,Seymour81}.
Many of these impose conditions on both $G$ and $H$; a well-known
example is the Okamura-Seymour Theorem, which states that a pair $(G,H)$
is cut-sufficient if $G$ is planar and all edges of $H$ have their
endpoints on a single face of $G$~\cite{Okamura81}.
Schrijver~\cite[Chapter~70]{SchrijverBook} surveys several
cut-sufficiency theorems and many related concepts and topics.

\subsection{Our Contributions}

We give a sharp characterization of cut-sufficient graph pairs where the
supply graph is {\em series-parallel}. Further, for integral multiflow
instances on such cut-sufficient pairs, we show that the cut condition
together with a natural ``Eulerian'' condition imply that a feasible
integral solution exists; we also give a polynomial time algorithm to
find an integral solution. Finally, our work here suggests to us a
conjecture that would characterize cut-sufficiency in {\em planar}
graphs. The details follow.

We define a \emph{$p$-spindle} to be a pair of graphs $(G,H)$ such that
the supply graph $G$ is $K_{2,p}$, with $p \ge 3$, and the demand graph
$H$ consists of a cycle connecting the $p$ vertices of degree $2$ in
$G$, and an additional demand edge between the two remaining vertices.
An {\em odd spindle} is a $p$-spindle with $p$ odd.

\begin{theorem}[Fractional Routing Theorem; characterization of cut-sufficiency]\label{thm:fraction}
  If the supply graph $G$ is series-parallel, then the pair $(G,H)$ is
  cut-sufficient if and only if the pair $(G,H)$ cannot be reduced to an
  odd spindle by contraction of edges of $G$ and deletion of edges of
  $G$ and $H$.
\end{theorem}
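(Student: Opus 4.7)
The plan is to attack the two implications separately, and as is typical for such characterizations the ``only if'' direction is the easy one. For the \emph{necessity} direction (cut-sufficient implies no odd-spindle minor) I would argue the contrapositive. First I would exhibit a bad instance on a single odd $p$-spindle: unit capacities on every supply edge of $K_{2,p}$ and unit demands on every edge of $H$ (the odd cycle on the $p$ outer vertices plus the hub-to-hub edge). A direct inspection confirms the cut condition, and infeasibility follows from a parity count: routing the hub-to-hub demand consumes one unit on two edges incident to the hubs, and after that the remaining odd cycle of demands cannot be partitioned into paths through the two hubs without overusing some supply edge. Then I would verify that cut-sufficiency is preserved under the allowed minor operations: given a bad instance on $(G',H')$ obtained from $(G,H)$ by contracting supply edges and deleting edges, lift it to $(G,H)$ by assigning zero capacity/demand to deleted edges and sufficiently large capacity to contracted edges. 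Hence odd-spindle minors obstruct cut-sufficiency.

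For the \emph{sufficiency} direction I would induct on $|E(G)|+|E(H)|$ using the series-parallel structure of $G$. Preliminary cleanup handles the easy cases: delete zero-weight edges, treat the $2$-connected blocks of $G$ independently, and merge parallel supply edges into one edge of summed capacity. After these reductions, either $G$ is a single edge (base case, immediate) or $G$ is $2$-connected and series-parallel, so it contains a vertex $v$ of degree exactly two in $G$; let $a,b$ be its $G$-neighbors with capacities $c_a,c_b$. The main reduction is a \emph{splitting-off} at $v$: pick a pairing of the demand edges incident to $v$, replace each chosen pair $\{v,s\},\{v,t\}$ by a single demand $\{s,t\}$ (discarded if $s=t$), absorb any $\{v,a\}$ or $\{v,b\}$ demand directly into the capacity of the corresponding supply edge, and delete $v$ to obtain a strictly smaller instance on $G-v$, possibly with a new parallel $ab$-edge. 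Induction produces a multiflow on the reduced pair, which then lifts to a multiflow on $(G,H)$ by routing each split pair through $v$.

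The heart of the argument, and the main obstacle, is the existence of a \emph{good} splitting-off: a pairing that both preserves the cut condition and does not create a new odd-spindle minor in the reduced pair. The second requirement is the delicate one, since a carelessly chosen pairing at $v$ can produce an odd cycle of fresh demands whose endpoints become exactly the $K_{2,p}$-hubs of a newly visible spindle. This is where I expect the paper's tight-cut machinery to do the real work: one studies the structure of cuts $C$ with $\sigma(C)=0$ separating $v$, extracts a combinatorial invariant on the demand endpoints at $v$, and uses it together with the no-odd-spindle hypothesis to guarantee a pairing whose resulting cycle parities are always even while the cut condition is maintained on every tight cut. Once this structural lemma is proved, the induction closes and the characterization follows; indeed the abstract's hint that tight-cut properties ``might be useful in extending our results to planar graphs'' strongly suggests that this tight-cut analysis is the technical core on which the theorem hinges.
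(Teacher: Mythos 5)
Your ``only if'' direction is essentially the paper's argument (a bad unit instance on the odd spindle, plus the standard lifting showing that minors preserve cut-insufficiency), though the infeasibility of the spindle instance is cleaner via counting --- the $p+1$ demands each need a path of length at least $2$, so any routing uses total volume at least $2p+2$, exceeding the total capacity $2p$ --- since your parity/partition argument as stated only addresses integral routings, while cut-sufficiency concerns fractional ones.

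The sufficiency direction, however, has a genuine gap, and it sits exactly where the theorem's difficulty lives. First, the reduction operation is not sound as described: if you replace the demands $\{v,s\},\{v,t\}$ at the degree-$2$ vertex $v$ by a single demand $\{s,t\}$ and delete $v$, a multiflow for the reduced instance routes $\{s,t\}$ along a path avoiding $v$, and nothing tells you how to route the original demands $(v,s)$ and $(v,t)$, which must leave $v$ through the edges $(v,a)$ and $(v,b)$ whose capacities you never decrement; the sound degree-$2$ reduction instead pushes each unit of demand at $v$ to $a$ or to $b$ while reducing the capacity of the corresponding edge. Second, and more fundamentally, the existence of a ``good'' choice --- one preserving the cut condition \emph{and} leaving the reduced pair free of odd-spindle minors (the reduced demand graph acquires new edges, so the hypothesis does not transfer automatically) --- is precisely the key lemma, and you explicitly leave it unproved, saying you expect the tight-cut machinery to supply it. The paper does not take this route at all: it proves the contrapositive with no induction on $G$. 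Using LP duality and complementary slackness it shows (Theorem~\ref{thm:bubble}) that if some demand admits a path crossing every tight cut at most once, the pair is cut-sufficient; then, for a simple pair with flow-cut gap exceeding $1$, it exhibits a demand all of whose paths are covered by bubbles (Lemma~\ref{lem:existbubbles}, via a split-pair orientation and a minimal non-compliant demand in the bracketing order), and finally contracts along the bubbles to a $K_{2,m}$ instance and invokes the known characterization for $K_{2,m}$ (Theorem~\ref{thm:k2m}) to extract the odd-spindle minor (Lemma~\ref{lem:bubbles}). Without a proof of your splitting-off lemma --- which is not known to follow easily and would itself be a substantial result --- your argument does not establish the theorem.
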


Schrijver~\cite[Section~70.11]{SchrijverBook} gives a number of
sufficient conditions for cut-sufficiency; our characterization above is
sharper than all of these when $G$ is series-parallel. The above result
was conjectured in Chekuri et al.~\cite[Conjecture 3.5]{Chekuri10}. To
prove it, we first revisit the connection between multiflow problems and
metric embeddings via linear programming duality. Unlike previous works
that used this approach, we exploit {\em complementary slackness} to
derive some LP-based conditions for cut-sufficiency, in
Section~\ref{sec:new}.  These conditions do not refer to the structure
of $G$, and so could be useful in extending our results from
series-parallel graphs to more general classes.  The proof of
Theorem~\ref{thm:fraction} itself appears in Section~\ref{sec:fraction}. 

We say that an instance $(G,H,c,D)$ is \emph{Eulerian} if all capacities
$c_e$ and demands $D_i$ are integers and $\sigma(C)$ is even for all
$C\subseteq V$; recall that $\sigma$ depends on $c$ and $D$.
\begin{theorem}[Integral Routing Theorem]\label{thm:integral}
  If $G$ is series-parallel, $(G,H)$ is cut-sufficient, and the
  multiflow instance $(G,H,c,D)$ satisfies the cut condition and is
  Eulerian, then the problem has an integral solution.  Moreover, 
  under these conditions, an integral solution can be computed in 
  polynomial time.
\end{theorem}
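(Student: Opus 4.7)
The plan is to prove Theorem~\ref{thm:integral} by induction on the structure of the series-parallel supply graph $G$, using Theorem~\ref{thm:fraction} at each level to ensure that a fractional routing exists and then rounding to an integer routing by exploiting the Eulerian parities. The base case is $G$ consisting of a single edge $st$, in which every demand edge is a parallel $st$-edge; here the cut condition bounds the demand total by $c_{st}$ and the Eulerian condition forces matching parities, so an integer routing is immediate.

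For the inductive step, I would distinguish the two decomposition cases. In the \emph{parallel} case, $G = G_1 \cup G_2$ with $V(G_1) \cap V(G_2) = \{s,t\}$, and the task is to split each demand $i \in F$ into two integer parts $D_i^1 + D_i^2 = D_i$, to be routed in $G_1$ and $G_2$ respectively. A fractional routing (via Theorem~\ref{thm:fraction}) gives a fractional split, and the goal is to round so that both sub-instances $(G_j, H_j, c|_{G_j}, D^j)$ are Eulerian and satisfy the cut condition; the inductive hypothesis then produces integer routings on each $G_j$ which glue to an integer routing on $G$. The \emph{series} case, $V(G_1) \cap V(G_2) = \{v\}$, is analogous: each demand $i = xy$ with $x \in V(G_1)\setminus\{v\}$ and $y \in V(G_2)\setminus\{v\}$ is replaced by an auxiliary demand $xv$ in $G_1$ and $vy$ in $G_2$, of common integer size chosen to respect both the cut condition and the Eulerian parity at $v$.

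The main obstacle is executing this rounding step. We must show that, given a fractional routing, there exists an integer assignment of the $D_i^j$ (or of the auxiliary demand sizes through $v$) so that (i) each sub-instance still satisfies the cut condition, (ii) each sub-instance is still Eulerian, and (iii) each sub-pair $(G_j, H_j')$ remains cut-sufficient. Property (iii) is inherited because an odd-spindle minor in a reduced sub-instance pulls back through the contraction/deletion operations to one in $(G, H)$, contradicting cut-sufficiency of $(G, H)$. For (i) and (ii), the tight-cut complementary-slackness framework developed in Section~\ref{sec:new} will identify the cuts forced to equality by any fractional routing; these tight cuts constrain the integer rounding, and the Eulerian hypothesis guarantees that the constraints are simultaneously satisfiable modulo~$2$. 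A standard parity-exchange argument (swapping half-units between pairs of demands along alternating $G{+}H$ cycles) then converts a fractional assignment into an integer one.

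The polynomial-time claim follows by computing an SP decomposition in linear time, performing the rounding at each decomposition node in polynomial time (via a small integer program whose integrality is guaranteed by the argument above), and recursing; the recursion depth is $O(|V(G)|)$, so the total running time is polynomial.
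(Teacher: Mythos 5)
Your induction on the series--parallel decomposition stalls exactly at the step you call the main obstacle, and the sketch you give for it does not go through. In the parallel composition $G=G_1\cup G_2$ glued at $\{s,t\}$, a demand $(x,y)$ with $x$ interior to $G_1$ and $y$ interior to $G_2$ cannot be ``split into $D_i^1+D_i^2$ routed in $G_1$ and $G_2$ respectively'': every path for it passes through $s$ or $t$ (possibly both), so the sub-instances must carry \emph{new} demand edges $(x,s)$ and $(x,t)$ whose values encode a routing decision; moreover a demand internal to $G_1$ may legitimately detour through $G_2$, so the $G_1$-subproblem also needs an auxiliary supply edge $st$ standing for the residual capacity of $G_2$. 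Hence the sub-instances are not restrictions of $(G,H)$, and your inheritance argument for property (iii) breaks: a minor of $(G,H)$ may only \emph{delete} demand edges, never split one demand $(x,y)$ into two demand edges $(x,s)$ and $(x,t)$, so an odd spindle in the sub-pair using such edges need not pull back to $(G,H)$, and cut-sufficiency of the sub-pair --- which your inductive hypothesis requires --- is not established. Worse, the existence of an \emph{integer} split that simultaneously keeps the cut condition and the Eulerian parity in both halves is precisely the hard content of the theorem, and it is only asserted: the ``parity-exchange along alternating cycles swapping half-units'' presupposes a half-integral fractional solution, which nothing guarantees, and no argument is given that the tight cuts identified by complementary slackness leave any integral split feasible. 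The polynomial-time claim has the same circularity: it defers to ``a small integer program whose integrality is guaranteed by the argument above,'' i.e.\ by the very lemma that is missing.

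For contrast, the paper does not recurse on the SP decomposition at all. It fixes one demand $d=(u,v)$, takes a fractional solution (available by cut-sufficiency), embeds $G$ with $u,v$ on the outer face, and decomposes the flow for $d$ into non-crossing paths $P_1,\ldots,P_k$; it then pushes one unit of $d$ onto the vertices of $P_1\cap P_k$ (Lemma~\ref{lem:push}, which shows the relevant cuts have surplus at least $2$), and routes the resulting unit demands along a path through $P_1\cup P_k$ chosen so that, on each cycle of $P_1\cup P_k$, the side avoiding vertices linked to $v$ is used; Lemmas~\ref{lem:twice} and~\ref{lem:unit} show each central cut loses at most $2$ surplus, and the Eulerian condition guarantees non-tight cuts have that much slack (tight cuts crossed more than once are excluded because positive flow uses the path). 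Routing a unit changes every surplus by an even amount, so the smaller instance is again Eulerian and one recurses; polynomiality comes from first routing the integral parts $\lfloor f^i_P\rfloor$ of a fractional flow decomposition, leaving at most $m|F|$ units for the unit-by-unit procedure. If you want to salvage your decomposition route, you must state and prove the splitting lemma (integer split at $\{s,t\}$ or at a cut vertex preserving cut condition, Eulerian parity, and cut-sufficiency of the modified sub-pairs, the latter plausibly via Theorem~\ref{thm:k2m}-style auxiliary edges); as it stands that lemma is the theorem in disguise.
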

This implies that under the same assumptions except for the Eulerian
condition, the multiflow problem has a half-integral solution. Similar
uses of the Eulerian condition are ubiquitous in the
literature~\cite{Okamura83,Okamura81,Schrijver89}. We prove the above
result in Section~\ref{sec:integral}.  The algorithm is described
in Section~\ref{sec:alg}.

Planar supply graphs allow one other obstruction to cut-sufficiency,
apart from the odd spindles. We conjecture, in
Section~\ref{sec:conclusion}, that there are no further examples: this
would extend our results to instances where $G$ is planar.

\subsection{Other Related Work}

A different approach to the relation between multiflows and cuts was
pioneered by Leighton and Rao~\cite{Leighton99}, who sought to
understand how ``far'' from sufficient the cut condition could be. To be
precise, let us define the {\em maximum concurrent flow} for a multiflow
instance $(G,H,c,D)$ to be the largest fraction $\phi$ such that
$(G,H,c,\phi D)$ is routable.  In this paper, we adopt the equivalent
approach of studying the \emph{minimum congestion} $\alpha \ge 1$ such
that $(G,H,\alpha c,D)$ is routable: it is easy to see that
$\phi=1/\alpha$ for any instance.  For a pair of graphs $(G,H)$, the
\emph{flow-cut gap} is defined as the maximum, over all choices of
demands and capacities that satisfy the cut condition, of the minimum
congestion. The larger this gap the further the pair $(G,H)$ is from
cut-sufficiency. Clearly, a pair is cut-sufficient if and only if its
flow-cut gap is $1$.

There has been intense research on finding the flow-cut gaps for various
classes of graphs, a line of work originally motivated by the problem of
approximating sparsest
cuts~\cite{Aumann98,Chakrabarti08,Gunluk07,Gupta99,Linial94}.  The class
of series-parallel instances is notable, as it is one of the very few
classes for which there are precise bounds on the flow-cut gap:
Chakrabarti et al.~\cite{Chakrabarti08} show that the gap cannot be more
than $2$, whereas Lee and Raghavendra~\cite{Lee07} show that it can be
as close to $2$ as desired. Chekuri et al.~\cite{Chekuri09} show that
series-parallel instances have \emph{integral multiflows} that do not
use more than $5$ times the capacity of the supply graph.
A special
case of the integer multiflow problem is the \emph{disjoint paths problem},
where $D_i = 1$ for all $i$ and $c_e = 1$ for all $e$.  
In general, the disjoint paths problem is NP-complete even when
restricted to series-parallel graphs~\cite{Vygen01}.

The seminal work of Linial et al.~\cite{Linial94} connected flow-cut
gaps to metric embeddings via LP duality: we now briefly explain this
connection, which we also use in our work. Every positive length
function $l$ on the edges of a graph determines a \emph{shortest-path
metric}, which is a distance function $d$ on the vertices of the graph,
such that $d(u,v)=\min_{P\in \mathcal{P}[u,v]}\sum_{e\in P}l_e$; here
$\mathcal{P}[u,v]$ denotes the set of paths between the vertices $u$ and
$v$. Every assignment of non-negative real values $x_C$ to subsets $C$ of
vertices of a graph determines a \emph{cut-cone metric}, which is a
distance function $d$ on the vertices of the graph defined by
$d(u,v)=\sum_{C:|\{u,v\}\cap C|=1}x_C$.  For any two distance functions
$d$ and $d'$ defined on the vertices of a graph such that $d\geq d'$,
the \emph{distortion} from $d$ to $d'$ is defined to be $\max_{u\ne v}
d(u,v)/d'(u,v)$.  For a distance function $d$ and a family of metrics
$\mathcal{M}$, the minimum distortion embedding of $d$ into
$\mathcal{M}$ is a distance function $d'$ in $\mathcal{M}$ that
minimizes the distortion from $d$ to $d'$.  Linial et
al.~\cite{Linial94} show that the maximum congestion required for a
particular supply graph $G$ equals the maximum distortion required to
embed any possible shortest-path metric on $G$ into the family of
cut-cone metrics.\footnote{It is a simple exercise to show that the
family of cut-cone metrics coincides with that of $\ell_1$-embeddable
metrics.} We shall call this the \emph{congestion-distortion
equivalence} theorem.



\section{Definitions and Preliminaries}\label{sec:notation}


A subset of vertices $C\subseteq V$ and the corresponding cut
$\delta_G(C)$ are called \emph{central} if both $C$ and
$V\setminus C$ are connected in $G$. It is well-known
and easy to prove that if the surplus $\sigma$ is nonnegative for all
central cuts, then the cut condition is satisfied~\cite[Theorem
70.4]{SchrijverBook}. A subset $C$ and the cut $\delta_G(C)$ are
\emph{tight} if $\sigma(C)=0$.

We assume in this article that the supply graph $G$ is biconnected. It
is not hard to show that if $G$ is not biconnected, the multiflow
problem can be solved separately on its biconnected components. A
biconnected graph is \emph{series-parallel} if and only if it does not
contain $K_4$ as a minor. A pair of graphs $(G,H)$ is series-parallel if
the supply graph $G$ is series-parallel. 

We use an extension of graph minors to pairs $(G,H)$ of supply and
demand graph, as proposed in~\cite{Chekuri09}.

\begin{definition}Let $(G,H)$ and $(G',H')$ be two pairs of graphs.
  Then $(G',H')$ is a \emph{minor} of $(G,H)$ if we can obtain
  $(G',H')$ from $(G,H)$ by contracting and deleting edges of $G$, and
  deleting edges of $H$.
\end{definition}
Here, \emph{deleting} an edge means removing it from the graph, and
\emph{contracting} an edge means removing it and merging its
endpoints.

\subsection{Surplus Identities}

Recall that the \emph{surplus} $\sigma(X)$ of $X\subseteq V$ is the
total capacity minus the total demand crossing the cut $\delta_G(X)$.
Additionally, for $X$ and $Y$ disjoint, let $\delta_G(X,Y)$ and
$\delta_H(X,Y)$ be the set of edges in $G$, respectively $H$, with one
endpoint in $X$ and one in $Y$, and let $\sigma(X,Y):=\sum_{e\in
\delta_G(X,Y)}c_e-\sum_{e\in \delta_H(X,Y)}D_i$. In particular,
$\sigma(X,V\setminus X) = \sigma(X)$. The surplus function $\sigma$
satisfies the following useful identities.

\begin{lemma}\label{lem:easy1}
Let $A$ and $B$ be two subsets of $V$. Then \\
  $\langle a \rangle$ If $B_1,\ldots,B_k$ is a partition of $B$, then
  $\sigma(A,B)=\sigma(A,B_1)+\cdots+\sigma(A,B_k)$.\\
  $\langle b \rangle$ In particular, if $B=V\setminus A$, then
  $\sigma(A)=\sigma(A,B_1)+\cdots+\sigma(A,B_k).$\\
  $\langle c \rangle$  $   \sigma(A\cup B)+\sigma(A\cap B) 
    = \sigma(A)+\sigma(B)-2\sigma(A\setminus B,B\setminus A),$\\
    $\langle d \rangle$  $ \sigma(A\setminus B)+\sigma(B\setminus A) 
    = \sigma(A)+\sigma(B)-2\sigma(A\cap B,V\setminus (A\cup B)).$\\
  $\langle e \rangle$ In particular, if $A$ and $B$ are disjoint, then
  $\sigma(A\cup B) = \sigma(A)+\sigma(B)-2\sigma(A,B).$
\end{lemma}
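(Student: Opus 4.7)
The plan is to prove all five identities by reducing everything to expressions in the ``atomic'' quantities $\sigma(X_i,X_j)$ for sets $X_i,X_j$ that form pieces of a common partition of $V$, then doing a straightforward bookkeeping argument. The key observation is that $\sigma(X,Y)$ is bilinear in the two arguments: each edge of $G$ or $H$ contributes its (signed) weight to $\sigma(X,Y)$ exactly when one endpoint lies in $X$ and the other in $Y$, so splitting either argument splits the sum. This already gives $\langle a \rangle$ directly, and $\langle b \rangle$ is the case $B=V\setminus A$ of $\langle a \rangle$.

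For $\langle c \rangle$ I would set $I=A\cap B$, $S=A\setminus B$, $T=B\setminus A$, $O=V\setminus(A\cup B)$, which partition $V$. Applying $\langle b \rangle$ to each of $A,B,A\cup B, A\cap B$ expands each surplus into a sum of terms $\sigma(X_i,X_j)$ among these four blocks:
\begin{align*}
\sigma(A) &= \sigma(S,T)+\sigma(S,O)+\sigma(I,T)+\sigma(I,O),\\
\sigma(B) &= \sigma(S,T)+\sigma(T,O)+\sigma(I,S)+\sigma(I,O),\\
\sigma(A\cup B) &= \sigma(S,O)+\sigma(T,O)+\sigma(I,O),\\
\sigma(A\cap B) &= \sigma(I,S)+\sigma(I,T)+\sigma(I,O).
\end{align*}
Subtracting the last two lines from the sum of the first two leaves exactly $2\sigma(S,T)=2\sigma(A\setminus B,B\setminus A)$, which rearranges to $\langle c \rangle$.

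For $\langle d \rangle$ I would apply the same partition and observe that the roles of the ``middle'' pair $(S,T)$ and the ``outer'' pair $(I,O)$ get interchanged: now one expands $\sigma(A\setminus B)=\sigma(S)$ and $\sigma(B\setminus A)=\sigma(T)$ through $\langle b \rangle$, subtracts from $\sigma(A)+\sigma(B)$, and the surviving terms collapse to $2\sigma(I,O)=2\sigma(A\cap B,V\setminus(A\cup B))$. (Alternatively, $\langle d \rangle$ follows from $\langle c \rangle$ by replacing $A$ with $V\setminus A$ and using that $\sigma(X)=\sigma(V\setminus X)$.) Finally $\langle e \rangle$ is immediate from $\langle c \rangle$ by setting $A\cap B=\emptyset$, since $\sigma(\emptyset)=0$ and $A\setminus B=A$, $B\setminus A=B$.

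No step here is really an obstacle; the only thing to be careful about is that $\sigma(X,Y)$ is defined only for disjoint $X,Y$, so every expansion must be done with respect to the four-block partition $\{I,S,T,O\}$ rather than by manipulating overlapping sets directly. Once that discipline is maintained, all five identities reduce to inspection of which pairs of blocks appear on each side.
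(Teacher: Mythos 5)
Your proof is correct and follows essentially the same route as the paper: both expand $\sigma(A)$, $\sigma(B)$, $\sigma(A\cup B)$, $\sigma(A\cap B)$ (and for $\langle d\rangle$, $\sigma(A\setminus B)$, $\sigma(B\setminus A)$) over the four-block partition $\{A\cap B,\,A\setminus B,\,B\setminus A,\,V\setminus(A\cup B)\}$ using $\langle a\rangle$--$\langle b\rangle$ and cancel terms. Your extra observations (bilinearity of $\sigma(\cdot,\cdot)$ for $\langle a\rangle$, and deriving $\langle d\rangle$ from $\langle c\rangle$ by complementing $A$) are fine but do not change the substance of the argument.
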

\begin{proof}
  $\langle a \rangle$ and $\langle b \rangle$ are easy to prove, and
  are left as an exercise.  Let us use $\overline{X}$ to denote
  $V\setminus X$, for each subset $X\subseteq V$.  By
  $\langle a \rangle$ and  $\langle b \rangle$, we have
  \begin{align*}
  \sigma(A\cup B) &= \sigma(A\cup B,\overline{A\cup B})
    =\sigma(A\setminus B,\overline{A\cup B})
    +\sigma(A\cap B,\overline{A\cup B})+\sigma(B\setminus A,\overline{A\cup B}), \\
  \sigma(A\cap B) &= \sigma(A\cap B,\overline{A\cap B})
    =\sigma(A\cap B,\overline{A\cup B})+\sigma(A\cap B,A\setminus B)
    +\sigma(A\cap B,B\setminus A), \\
  \sigma(A) &= \sigma(A,\overline A)
    =\sigma(A\cap B,\overline{A\cup B})+\sigma(A\cap B,B\setminus A)
    +\sigma(A\setminus B,\overline{A\cup B})+\sigma(A\setminus B,B\setminus A), \\
  \sigma(B) &= \sigma(B,\overline B)
    =\sigma(A\cap B,\overline{A\cup B})+\sigma(A\cap B,A\setminus B)
    +\sigma(B\setminus A,\overline{A\cup B})+\sigma(B\setminus A,A\setminus B).
  \end{align*}
  Simplifying, we get $\langle c \rangle$. Additionally, we have
  \begin{align*}
  \sigma(A\setminus B) &= \sigma(A\setminus B,\overline{A\setminus B})
    =\sigma(A\setminus B,\overline{A\cup B})+\sigma(A\setminus B,B\setminus A)
    +\sigma(A\setminus B,A\cap B), \\
  \sigma(B\setminus A) &= \sigma(B\setminus A,\overline{B\setminus A})
    =\sigma(B\setminus A,\overline{A\cup B})+\sigma(B\setminus A,A\setminus B)
    +\sigma(B\setminus A,A\cap B).
  \end{align*}
  By comparing to the equations for $\sigma(A)$ and $\sigma(B)$, we
  get $\langle d \rangle$. Finally, $\langle e \rangle$ is just a
  restatement of $\langle c \rangle$ for $A$ and $B$ disjoint.
\end{proof}

\subsection{Properties of Biconnected and Series-Parallel Graphs}
\label{sec:graph-props}

We now establish a number of simple but useful properties of biconnected
and series-parallel graphs that arise at various points in the proofs
of our main theorems. The reader who wishes to focus on the main theorems
may safely skip to Section~\ref{sec:new}.

\begin{lemma}\label{lem:cycle}
  In a series-parallel graph, a simple cycle does not intersect any
  central cut more than twice.
\end{lemma}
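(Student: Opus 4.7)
My plan is to prove the contrapositive: if $G$ contains a simple cycle that intersects some central cut $\delta_G(C)$ in four or more edges, then $G$ contains $K_4$ as a minor and hence is not series-parallel. Since crossings of a cut come in pairs, I may assume the cycle crosses $\delta_G(C)$ in exactly an even number $2k \ge 4$ of edges.

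Traversing the cycle $\mathcal{C}$ in one direction, the crossing edges partition it into $2k$ internally disjoint arcs that alternate between arcs entirely inside $C$ and arcs entirely inside $V \setminus C$. Focus on the case $2k=4$ (the case $2k>4$ only makes things easier, since we can first contract consecutive same-side arcs together). Label the four arcs cyclically as $P_1, P_2, P_3, P_4$, with $P_1, P_3 \subseteq C$ and $P_2, P_4 \subseteq V\setminus C$, and let $e_1,e_2,e_3,e_4$ be the four crossing edges joining them. Contracting each $P_i$ to a single vertex $a_i$ (which is legitimate because each $P_i$ is a path) leaves the four edges $e_1,\ldots,e_4$ as a $4$-cycle on $\{a_1,a_2,a_3,a_4\}$.

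The remaining step is to produce the two diagonals $a_1a_3$ and $a_2a_4$. Because $\delta_G(C)$ is central, the induced subgraph $G[C]$ is connected, so there is a path in $G[C]$ from $V(P_1)$ to $V(P_3)$. Choose such a path $Q$ of minimum length; then its endpoints lie in $V(P_1)$ and $V(P_3)$ respectively, and no internal vertex of $Q$ belongs to $V(P_1)\cup V(P_3)$. Moreover, the internal vertices of $Q$ lie in $C$, hence are disjoint from $V(P_2)\cup V(P_4)$. Contracting the internal edges of $Q$ therefore yields an edge $a_1a_3$ after the earlier contractions of $P_1$ and $P_3$. An exactly symmetric argument, using connectivity of $G[V\setminus C]$, yields an edge $a_2a_4$. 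Combining these contractions, we recover $K_4$ as a minor of $G$, contradicting that $G$ is series-parallel.

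The main subtlety, and really the only one, is ensuring that the auxiliary path $Q$ inside $C$ (and its analogue $Q'$ inside $V\setminus C$) is chosen so that contracting it does not accidentally merge $a_1$ with $a_3$ or create spurious identifications with $a_2$ or $a_4$. Taking $Q$ to be a shortest path between $V(P_1)$ and $V(P_3)$ in $G[C]$ takes care of this, since minimality forces $Q$ to be internally disjoint from $V(P_1)\cup V(P_3)$, and being contained in $C$ keeps it disjoint from the vertices on the other side of the cut.
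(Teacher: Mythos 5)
Your proof follows essentially the same route as the paper's: a simple cycle meeting a central cut four or more times yields alternating arcs on the two sides, and connectivity of both $C$ and $V\setminus C$ (centrality) supplies the two ``diagonal'' connections that turn the cycle into a $K_4$ minor. In the case of exactly four crossings your write-up is complete, and in fact more careful than the paper's sketch: choosing $Q$ (and $Q'$) as shortest connecting paths makes the four branch sets connected, pairwise disjoint, and adjacent in all six required ways.

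The one real weak point is the parenthetical disposing of the case $2k>4$. As written it does not parse: the arcs alternate sides along the cycle, so there are no ``consecutive same-side arcs'' to contract together. Nor does the obvious repair work, namely keeping $P_1,P_2,P_3$ and lumping the rest of the cycle into a fourth branch set: that lump contains further arcs lying inside $C$, and the shortest path in $G[C]$ from $V(P_1)$ to $V(P_3)$ may pass through one of them, destroying the disjointness of your branch sets. The fix uses exactly the tool you already have, applied globally: among all paths in $G[C]$ whose endpoints lie on two \emph{distinct} inside arcs, take a shortest one $Q$; minimality forces its internal vertices to avoid every inside arc (and, being in $C$, every outside arc), and its endpoints select two inside arcs $I_a,I_b$. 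Take $I_a$ and $I_b$ as two branch sets, take the two portions of the cycle strictly between them as the other two branch sets (each is a connected subpath containing at least one outside arc, and each is joined to $I_a$ and $I_b$ by the crossing edges at its ends), and join these two portions by a path in $G[V\setminus C]$, truncated at its last vertex in one portion and its first vertex in the other. Together with $Q$ this gives all six adjacencies of $K_4$ for every $2k\ge 4$. (Alternatively, shortcut the cycle through $Q$ along whichever portion crosses the cut more often: this produces a simple cycle with strictly fewer, but still at least four, crossings, and you can induct down to your $2k=4$ case.) With this repair your argument is correct; to be fair, the paper's own proof is just as terse about this point.
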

\begin{proof}
  A cycle intersects any cut an even number of times. Suppose a cycle
  $Q$ intersects a cut $\delta_G(C)$ four times or more. Then we can
  choose four vertices $u_1,u_2,u_3,u_4$ in order on $Q$ such that
  $u_1$ and $u_3$ are on one side of $\delta_G(C)$ and $u_2$ and $u_4$
  on the other. Then there is a path connecting $u_1$ to $u_3$ on one
  side of $\delta_G(C)$, and a path connecting $u_2$ to $u_4$ on the
  other. This creates a $K_4$ minor, which cannot exist in a
  series-parallel graph.
\end{proof}

\begin{lemma}\label{lem:sut}
  In a biconnected graph $G$, for any three distinct vertices $s$, $u$ 
  and $t$, there is a simple path from $s$ to $t$ containing $u$.
\end{lemma}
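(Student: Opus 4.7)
The plan is to deduce the statement from Menger's theorem. Since $G$ is biconnected, between any two vertices of $G$ there exist two internally vertex-disjoint paths. I would like to apply this to the pair $(u,w)$ after adjoining an auxiliary vertex $w$ that is attached only to $s$ and $t$.

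Concretely, first I would form $G'$ by adding a new vertex $w$ together with the two edges $ws$ and $wt$. The first step is to verify that $G'$ is still biconnected. A candidate cut vertex is either $w$ or a vertex of $G$: removing $w$ gives back $G$, which is connected; removing any $v \in V(G) \setminus \{s,t\}$ keeps $G - v$ connected (since $G$ is biconnected) and $w$ stays attached through $s$ or $t$; removing $v = s$ (or symmetrically $v = t$) leaves $G - s$ connected and $w$ attached through $t$. Hence $G'$ has no cut vertex.

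Second, I would apply Menger's theorem to the vertices $u$ and $w$ in $G'$ to obtain two internally vertex-disjoint $(u,w)$-paths $Q_1, Q_2$. Because the only neighbors of $w$ in $G'$ are $s$ and $t$, one of these paths must reach $w$ through $s$ and the other through $t$. Stripping the final vertex $w$ from each gives a simple path $P_s$ from $u$ to $s$ and a simple path $P_t$ from $u$ to $t$ whose internal vertex sets are contained in the (disjoint) internal vertex sets of $Q_1$ and $Q_2$, and whose endpoints $s$ and $t$ are distinct and did not appear internally in either $Q_1$ or $Q_2$.

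Finally, concatenating the reversal of $P_s$ with $P_t$ at the common endpoint $u$ yields the desired simple $s$-to-$t$ path through $u$. The only real step is the augmentation and biconnectivity check for $G'$; once that is done, the rest is a direct application of Menger's theorem, so I do not anticipate a substantive obstacle.
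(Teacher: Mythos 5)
Your proof is correct, but it takes a genuinely different route from the paper's. The paper argues directly inside $G$: it takes two vertex-disjoint $s$--$u$ paths $P_1,P_2$ (biconnectivity), a $u$--$t$ path $P$ avoiding $s$ (connectivity of $G-s$), and then reroutes at the last vertex of $P$ lying on $P_1\cup P_2$ to splice together a simple $s$--$u$--$t$ path. You instead use the classical vertex-addition trick: attach an auxiliary vertex $w$ to $s$ and $t$, check that the augmented graph $G'$ is still $2$-connected (your case analysis of potential cut vertices is right), and apply Menger/Whitney to $u$ and $w$ to get two internally disjoint $u$--$w$ paths, which must enter $w$ through $s$ and through $t$ respectively; deleting $w$ and concatenating gives the path. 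Your approach pushes all the combinatorial work into the standard expansion lemma plus the global Menger theorem, which makes the argument shorter and less ad hoc, while the paper's proof is more self-contained, using only the two-disjoint-paths fact and an explicit splicing argument of the kind it reuses in Lemma~\ref{lem:suvt}. One small imprecision: you say $s$ and $t$ ``did not appear internally in either $Q_1$ or $Q_2$,'' but $s$ \emph{is} the internal (penultimate) vertex of the path reaching $w$ through $s$, and likewise for $t$; what you actually need, and what internal disjointness does give you, is that $s$ does not lie on the other path $Q_2$ and $t$ does not lie on $Q_1$, so the concatenation $P_s^{-1}P_t$ is simple. With that wording fixed, the argument is complete.
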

\begin{proof}
  Since $G$ is biconnected, there are two vertex-disjoint paths $P_1$
  and $P_2$ from $s$ to $u$. Biconnectivity also implies there is a path
  $P$ from $u$ to $t$ disjoint from $s$. If $P$ does not intersect $P_1$
  (or $P_2$), then $P_1$ (or $P_2$) followed by $P$ creates a simple
  path connecting $s$--$u$--$t$ in that order. Otherwise, let $w$ be the last
  vertex of $P$, from $u$ to $t$, that is in $P_1$ or $P_2$. Since $P_1$
  and $P_2$ are disjoint, $w$ is in only one of them, say $P_2$. Then
  $P_1$, the part of $P_2$ from $u$ to $w$, and the part of $P$ from $w$
  to $t$ is simple, and connects $s$--$u$--$t$ in that order.
\end{proof}

\begin{lemma}\label{lem:suvt}
  In a biconnected graph $G$, for any two vertices $s$ and $t$ and any
  edge $(u,v)$, there is a simple path from $s$ to $t$ containing the
  edge $(u,v)$.
\end{lemma}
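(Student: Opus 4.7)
The plan is to reduce this statement to the previous lemma (Lemma~\ref{lem:sut}) by the standard trick of subdividing the edge $(u,v)$. Introduce a new vertex $w$ in the middle of the edge $(u,v)$: let $G'$ be the graph obtained from $G$ by deleting the edge $(u,v)$ and adding a new vertex $w$ together with the two edges $(u,w)$ and $(w,v)$.

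First I would verify that $G'$ is still biconnected. Removing a vertex $x \notin \{u,v,w\}$ from $G'$ leaves essentially $G - x$ with a subdivided edge, which is connected because $G - x$ is connected. Removing $u$ (respectively $v$) leaves $w$ as a pendant attached to $v$ (respectively $u$), but the rest still forms $G - u$ (respectively $G - v$), which is connected by biconnectivity of $G$. Removing $w$ itself leaves $G$ minus the edge $(u,v)$, which is connected because in a biconnected graph deleting a single edge keeps the graph connected. So $G'$ is biconnected.

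Next, I would handle the degenerate cases $s \in \{u,v\}$ or $t \in \{u,v\}$ separately: if, say, $s = u$ and $t \neq v$, then by biconnectivity of $G - (u,v)$ there is a simple $v$-$t$ path in $G$ avoiding the edge $(u,v)$, and prepending the edge $(u,v)$ itself gives the desired path (one must check the path avoids $u$, which follows from biconnectivity and a short argument, or one can again pass through $G'$). The symmetric cases are analogous, and if $\{s,t\} = \{u,v\}$ the edge itself works after invoking the existence of another $u$-$v$ path.

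In the main case where $s,t,u,v$ are all distinct (so $s,w,t$ are distinct in $G'$), apply Lemma~\ref{lem:sut} to $G'$ to obtain a simple path $P'$ in $G'$ from $s$ to $t$ passing through $w$. Because $w$ has degree exactly $2$ in $G'$, the path $P'$ must enter and leave $w$ through the two edges $(u,w)$ and $(w,v)$. Replacing the two-edge subpath $u$-$w$-$v$ by the single edge $(u,v)$ yields a walk $P$ in $G$ from $s$ to $t$ that uses the edge $(u,v)$; the walk visits each vertex of $G'$ other than $w$ exactly as often as $P'$ does (so at most once), hence $P$ is a simple path, completing the proof. The main thing to watch is only the bookkeeping in the degenerate cases; the core argument is just the subdivision reduction.
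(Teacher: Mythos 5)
Your proof is correct, but it gets to the conclusion by a different route than the paper. The paper also starts from Lemma~\ref{lem:sut}, but applies it twice inside $G$ itself, obtaining simple $s$--$t$ paths $P_{sut}$ through $u$ and $P_{svt}$ through $v$, and then performs path surgery: it picks the common vertex $w$ of the two paths closest to $u$ and splices $P_{sw}\cup P_{wu}\cup(u,v)\cup P_{vt}$, which requires some care (the shortcut case, the choice of $w$, and checking the spliced walk is simple). You instead subdivide the edge $(u,v)$ with a new vertex $w$, check that subdivision preserves biconnectedness (including the small fact that deleting an edge of a $2$-connected graph on at least three vertices leaves it connected), apply Lemma~\ref{lem:sut} once in the subdivided graph to the triple $s,w,t$, and contract back; since $w$ has degree $2$, the simple path through $w$ must traverse $u$--$w$--$v$, and suppressing $w$ gives a simple $s$--$t$ path in $G$ through the edge $(u,v)$. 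Your reduction buys a cleaner argument with no intersection bookkeeping, at the price of the auxiliary biconnectivity check; the paper's splicing avoids any auxiliary graph but needs the more delicate case analysis. One simplification to your write-up: the separate treatment of the degenerate cases $s\in\{u,v\}$ or $t\in\{u,v\}$ is unnecessary (and is the only place where your argument is a bit loose), because the only hypothesis needed to invoke Lemma~\ref{lem:sut} in $G'$ is that $s$, $w$, $t$ are distinct, which holds automatically whenever $s\neq t$ since $w$ is a brand-new vertex; the degree-$2$ contraction argument then covers those cases verbatim.
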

\begin{proof}
  By Lemma~\ref{lem:sut}, there are simple paths $P_{sut}$ and
  $P_{svt}$ from $s$ to $t$ containing $u$ and $v$ respectively. If
  $P_{sut}$ contains $v$, or $P_{svt}$ contains $u$, then using the
  edge $(u,v)$ to shortcut the path, we get a path from $s$ to $t$
  containing $(u,v)$. Let $P_{su}$, $P_{sv}$, $P_{ut}$ and $P_{vt}$ be
  the subpaths of $P_{sut}$ and $P_{svt}$ between corresponding
  vertices.
  In the set of vertices in $P_{sut}\cap P_{svt}$, let $w$ be a vertex
  closest to $u$ on $P_{sut}$ (in $P_{su}$ or $P_{ut}$). Without loss
  of generality, suppose that $w$ is in $P_{sv}$. Then let $P_{sw}$ be
  the subpath of $P_{sv}$ from $s$ to $w$, and $P_{wu}$ the subpath of
  $P_{sut}$ from $w$ to $u$. The path $P_{sw}\cup P_{wu}\cup (u,v)\cup
  P_{vt}$ goes from $s$ to $t$ and contains $(u,v)$.
\end{proof}

In a biconnected series-parallel graph $G$, a pair of vertices $(s,t)$
is a \emph{split pair} if the graph $G$ remains series-parallel after
adding an edge from $s$ to $t$.
In an oriented graph, a \emph{source} is a vertex that has only
outgoing edges, and a \emph{sink} is a vertex that has only incoming
edges.

\begin{lemma}\label{lem:orient}
  In a biconnected series-parallel graph $G$, for any split pair
  $(s,t)$, there is a unique way of orienting the edges of $G$ such
  that $G$ is acyclic, and $s$ and $t$ are the unique source and sink
  respectively. This orientation has the property that any simple path
  from $s$ to $t$ is oriented, and any oriented path can be extended
  into an oriented path from $s$ to $t$.
\end{lemma}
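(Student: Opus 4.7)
The plan is to proceed by induction on the recursive two-terminal series-parallel decomposition of $G$ with distinguished terminals $(s,t)$. Since $(s,t)$ is a split pair, the augmented graph $G+st$ is biconnected and series-parallel, and hence admits a standard decomposition as a two-terminal SP graph in which the virtual edge $st$ may be placed at the root; peeling this edge off exhibits $G$ itself as a two-terminal SP graph with terminals $(s,t)$ that is either a single edge, a parallel composition $G_1\parallel G_2$ of two SP graphs sharing terminals $(s,t)$, or a series composition $G_1\cdot G_2$ with terminals $(s,u)$ and $(u,t)$ respectively, glued at a vertex $u$.

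The base case of a single edge is immediate: orient $s\to t$. For the parallel case, the inductive hypothesis yields unique acyclic orientations $\omega_1,\omega_2$ of $G_1,G_2$, each having $s$ as the sole source and $t$ as the sole sink. Their union is still acyclic, since any directed cycle is connected and $G_1\cap G_2=\{s,t\}$ where $s$ has no in-edges and $t$ no out-edges, so the cycle must lie wholly in one $G_i$, contradicting induction; every vertex other than $s,t$ belongs to a single $G_i$ and inherits its in- and out-neighbors from there, so $s,t$ remain the only source and sink in the combined orientation. For the series case, the glue vertex $u$ is the sink of $\omega_1$ and the source of $\omega_2$, so it has both in- and out-edges and is neither source nor sink, while $s$ and $t$ remain unique as such; any directed cycle either lies within a single $G_i$ or crosses $u$, but crossing $u$ forces a cycle within $G_2$ alone (since $u$'s outgoing edges live in $G_2$ and the only way back to $u$ is through $G_2$), again contradicting induction. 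Uniqueness in both cases follows because any valid orientation of $G$ restricts to a valid orientation of each $G_i$, so the unique orientations of the pieces force the unique orientation of $G$.

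For the path properties, a simple $s$-$t$ path in the parallel case must lie wholly in one $G_i$: since $G_1\cap G_2=\{s,t\}$, once the path enters a component at $s$, the only way to leave it is via $t$, which ends the path, so the whole path is inside one component and is oriented by induction. In the series case, every $s$-$t$ path must visit $u$ and decomposes into an $s$-$u$ subpath in $G_1$ followed by a $u$-$t$ subpath in $G_2$, each oriented by induction. To extend an arbitrary oriented path from $a$ to $b$, observe first that in any acyclic orientation a directed walk cannot repeat a vertex (a repeat would force a directed cycle), so the concatenation of directed walks is automatically a simple directed path. It then suffices to exhibit directed walks $s\to a$ and $b\to t$: walking backward along in-edges from $a$ terminates at a source, which by uniqueness is $s$, and dually walking forward along out-edges from $b$ reaches $t$.

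The main obstacle is the very first step: justifying that the split-pair hypothesis yields a clean two-terminal series-parallel decomposition of $G$ with $(s,t)$ as the distinguished terminals. This is a standard structural fact for biconnected series-parallel graphs, essentially the two-terminal version of the SPQR-tree construction, but it is not explicitly established in the excerpt; once it is in hand, the remaining inductive bookkeeping for existence, uniqueness, acyclicity, and the two path properties is routine.
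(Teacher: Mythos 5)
Your argument is essentially correct, but it takes a genuinely different route from the paper. The paper never invokes a two-terminal series-parallel decomposition: it orients each edge $(u,v)$ directly, using Lemma~\ref{lem:suvt} to get an $s$--$t$ path through $(u,v)$ and observing that two $s$--$t$ paths traversing $(u,v)$ in opposite orders, together with the (virtual) $st$ edge, would form a $K_4$ minor, contradicting the split-pair hypothesis; uniqueness, acyclicity, the source/sink property, and the extension property are then verified by short direct arguments of the same flavor. Your induction on the series/parallel composition structure is cleaner once it gets going (existence, uniqueness, and both path properties all fall out of the recursive structure, and your observation that directed walks in an acyclic orientation are automatically simple handles the extension step nicely), but it buys this at the cost of the step you yourself flag: the claim that a split pair $(s,t)$ in a biconnected series-parallel $G$ makes $(G,s,t)$ a two-terminal series-parallel graph. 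That claim is true and classical (it is Duffin's characterization: $(G,s,t)$ is two-terminal series-parallel iff $G+st$ is biconnected and $K_4$-minor-free, equivalently the SPQR/Valdes--Tarjan--Lawler decomposition with $st$ as the reference edge), but it is comparable in depth to the lemma itself and is not available in the paper, so in a self-contained write-up you would need to either cite it explicitly or prove it---at which point the paper's direct $K_4$-minor argument is the more economical path. One further small point: in your uniqueness step for the series composition, the assertion that an arbitrary valid orientation of $G$ restricts to a valid orientation of $G_1$ needs the short argument that the glue vertex $u$ has no out-edge inside $G_1$ (an out-edge $u\to w$ in $G_1$ would, by following out-edges from $w$ to the unique sink $t$ and using that $u$ is a cut vertex, produce a directed path back to $u$ and hence a directed cycle); this is routine, as you say, but it should be said.
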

\begin{proof}
  Let $(u,v)$ be any edge in $G$. By Lemma~\ref{lem:suvt}, there is at
  least one simple path from $s$ to $t$ containing $(u,v)$. Suppose
  there are two such paths $P_1$ and $P_2$, connecting $s$--$u$--$v$--$t$ and
  $s$--$v$--$u$--$t$ in these orders respectively. These two paths plus
  an $(s,t)$ edge create a $K_4$ minor, which contradicts the fact that
  $(s,t)$ is a split pair. Therefore, there are either only paths
  connecting $s$--$u$--$v$--$t$ in that order, or only paths connecting
  $s$--$v$--$u$--$t$ in that order. We orient the edge $(u,v)$ in the order
  given by these paths. Trivially, any path from $s$ to $t$ is
  oriented. 

  We claim that orienting all edges in this way creates an acyclic
  orientation such that $s$ and $t$ are the unique source and sink
  respectively. Suppose that some vertex $u \ne s$ is a source. For any
  edge $(u,v)$, there is a simple path connecting $s$--$u$--$v$--$t$ in
  that order. Therefore this path is oriented, and so $u$ is not a
  source, a contradiction. Thus, $s$ is indeed the unique source.
  Symmetrically, $t$ is the unique sink. Suppose that the orientation
  creates an oriented cycle. For any edge $(u,v)$ in the cycle, there is
  a simple path $P$ connecting $s$--$u$--$v$--$t$ in that order. Let $w$
  and $z$ be the first and last vertex of the cycle in $P$. The cycle
  creates two paths from $w$ to $z$, one whose orientation must be
  inconsistent with the path connecting $s$--$w$--$z$--$t$ in that
  order; and so there are no oriented cycles. 
  
  Finally, for any oriented path, it is possible to extend it into an
  oriented path from $s$ to $t$ by adding edges at the beginning until
  it starts from $s$, and at the end until it ends at $t$.
\end{proof}

For an orientation of $G$ defined by a split pair $(s,t)$, if there is
an oriented path from $u$ to $v$, then $(u,v)$ is
\emph{compliant}. For any non-compliant pair of vertices $(u,v)$, let
$P_{sut}$ and $P_{svt}$ be two oriented paths from $s$ to $t$
containing $u$ and $v$ respectively. Let $P_{su}$, $P_{sv}$, $P_{ut}$
and $P_{vt}$ be the subpaths of $P_{sut}$ and $P_{svt}$ connecting the
two corresponding vertices. The pair $(w,z)$ is called the
\emph{terminals of $(u,v)$} if $w$ is the last common vertex of
$P_{su}$ and $P_{sv}$, and $z$ is the first common vertex of $P_{ut}$
and $P_{vt}$.
We prove now that the pair $(w,z)$ is independent of the choice of
$P_{sut}$ and $P_{svt}$. We say a pair of vertices $(w,z)$
\emph{separates} vertices $u$ from $v$ if $u$ and $v$ are in different
connected components of $V\setminus\{w,z\}$.
\begin{lemma}\label{lem:terminals}
  For any non-compliant pair $(u,v)$, there is a unique pair $(w,z)$
  of terminals of $(u,v)$. The pair $(w,z)$ is a $2$-vertex-cut
  separating $u$ from $v$. Furthermore, unless $s$ is $w$, $(w,z)$
  separates $u$ and $v$ from $s$, and unless $t$ is $z$, $(w,z)$
  separates $u$ and $v$ from $t$. Any simple cycle containing $u$ and
  $v$ also contains $w$ and $z$, is composed of two oriented paths
  from $w$ to $z$, and has $w$ as unique source and $z$ as unique
  sink.
\end{lemma}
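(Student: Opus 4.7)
The plan exploits the forbidden $K_4$ minor characterization of series-parallel graphs together with Lemmas~\ref{lem:orient} and~\ref{lem:cycle}. I start by identifying a \emph{canonical cycle} $R$ associated with the chosen paths $P_{sut}$ and $P_{svt}$: the paths $P_{su}$ and $P_{sv}$ share a common oriented prefix ending at $w$ (since $w$ is their last common vertex), and $P_{ut}$ and $P_{vt}$ share a common oriented suffix starting at $z$. The four remaining subpaths $P_{wu}, P_{uz}, P_{wv}, P_{vz}$ are then pairwise internally disjoint, by simplicity of $P_{sut}, P_{svt}$ together with the extremality of $w$ and $z$. Hence their union is a simple cycle $R$ through $w, u, z, v$ in that cyclic order, consisting of two oriented $w$-to-$z$ paths; in particular $w$ is the unique source and $z$ the unique sink of $R$.

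The heart of the argument shows that $\{w, z\}$ is a $2$-vertex cut separating $u$ from $v$ in $G$. Suppose otherwise: pick a simple $u$-to-$v$ path $Q$ in $G \setminus \{w, z\}$ whose interior is disjoint from $R$. Since $(u, v)$ is non-compliant, $Q$ is not oriented in either direction, so it has an internal vertex $q$ that is a local sink on $Q$ (both neighbors of $q$ on $Q$ point toward $q$). By Lemma~\ref{lem:orient}, extend $\{q\}$ to a directed $q$-to-$t$ path, which must leave $q$ along an edge outside $Q$; follow it until it first re-enters $R \cup Q$ at some vertex $r \ne q$, and case-split on whether $r$ lies on the $u$-arc of $R$, on the $v$-arc of $R$, or on $Q$. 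In each subcase, combining the appropriate subpaths of $R$, $Q$, and the extension exhibits four branch vertices and six pairwise internally-disjoint connecting paths, yielding a $K_4$ minor of $G$ and contradicting series-parallelity. The same scheme — with a hypothetical $s$-to-$u$ (or $s$-to-$v$) path avoiding $\{w, z\}$ in place of $Q$, and the oriented prefix $P_{sw}$ playing the role of one arc of $R$ — proves the separation of $\{u, v\}$ from $s$ whenever $s \ne w$; a symmetric argument using $P_{zt}$ handles $t$.

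Once the $2$-vertex cut property is in hand, the cycle characterization follows quickly. Any simple cycle $C$ through $u$ and $v$ consists of two internally-disjoint $u$-to-$v$ arcs; each arc must meet $\{w, z\}$ by the cut property, and by simplicity $C$ visits each of $w$ and $z$ at most once, so $C$ passes through both $w$ and $z$ with exactly one of them on each arc. A direction-change argument along the lines of Step~2 (using Lemma~\ref{lem:orient} to sprout an off-arc directed path at any local sink inside an arc) then shows that both arcs are oriented $w \to z$, so $w$ and $z$ are the unique source and sink of $C$. Uniqueness of the terminals is immediate: any alternative choice of paths yields terminals $(w', z')$ defining their own cycle $R'$ through $u, v$, which by the characterization passes through $w$ and $z$; symmetrically $R$ passes through $w'$ and $z'$, and the unique source/sink structure forces $w = w'$ and $z = z'$.

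The main obstacle throughout is the $2$-vertex cut step. A bare theta subgraph $R \cup Q$ is itself series-parallel and cannot on its own yield a forbidden minor, so the non-compliance hypothesis is essential for extracting a local sink on $Q$ and, via Lemma~\ref{lem:orient}, an extra off-cycle directed path to serve as the sixth $K_4$-minor branch; organizing the subsequent case analysis on where this extra path re-attaches to the theta, uniformly across all subcases, is the most delicate part of the argument.
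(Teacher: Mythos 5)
Your construction of the cycle $R$ is essentially right, but note that the internal disjointness of $P_{wu}$ from $P_{vz}$ (and of $P_{wv}$ from $P_{uz}$) does \emph{not} follow from simplicity of $P_{sut},P_{svt}$ plus the extremality of $w$ and $z$: those subpaths live on different sides of the two paths, and ruling out an intersection requires the acyclicity of the orientation together with non-compliance of $(u,v)$ (a common vertex would give an oriented $u$-to-$v$ or $v$-to-$u$ walk). That is fixable. The genuine gap is in your central $2$-vertex-cut step. First, if $u$ and $v$ lie in one component of $G\setminus\{w,z\}$, you are only guaranteed a path joining an interior vertex of the $u$-arc of $R$ to an interior vertex of the $v$-arc that is internally disjoint from $R$; you cannot in general take its endpoints to be $u$ and $v$, and for such a chord the non-compliance of $(u,v)$ no longer forces it to be non-oriented (an oriented chord from $P_{uz}$ to $P_{vz}$, say, contradicts nothing). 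Second, even granting a non-oriented chord $Q$ with an internal local sink $q$, the directed $q$-to-$t$ path you sprout need not re-enter $R\cup Q$ at all (nothing puts $t$ in $R\cup Q$), and if it first re-enters at a vertex of $Q$ itself, the resulting subgraph is still series-parallel, so no $K_4$ minor is exhibited. Thus your claim that ``each subcase'' yields a $K_4$ fails precisely in the cases that matter, and the same scheme is what you rely on for the separation from $s$ and $t$ and for orienting the arcs of an arbitrary cycle through $u$ and $v$.

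The paper sidesteps all of this by using the split-pair hypothesis directly: since $(s,t)$ is a split pair, one may add the edge $(s,t)$ and the graph stays series-parallel; then $w$ and $z$ are joined by three internally disjoint paths --- the $u$-arc, the $v$-arc, and a path through $s$ and $t$ using the new edge. Any path connecting two of these three paths while avoiding $\{w,z\}$ creates a $K_4$ minor of $G$ plus the $(s,t)$ edge, a contradiction. This one argument simultaneously gives that $(w,z)$ separates $u$ from $v$ and, because $s$ and $t$ lie on the third path, the separation from $s$ (unless $s=w$) and from $t$ (unless $t=z$). Your uniqueness argument and the cycle statement then follow much as in the paper, but as written the proposal does not establish the cut property, so the proof is incomplete; the missing idea is exactly the use of the added $(s,t)$ edge to manufacture the third $w$--$z$ path.
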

\begin{proof}
  By Lemma~\ref{lem:sut}, there are simple paths $P_{sut}$ and
  $P_{svt}$ from $s$ to $t$ containing $u$ and $v$ respectively; by
  Lemma~\ref{lem:orient}, these paths are oriented. So $(u,v)$ always
  has at least one pair $(w,z)$ of terminals. Since $(s,t)$ is a split
  pair, we can assume there is an $(s,t)$ edge and still have $G$
  series-parallel. Then there are at least three vertex-disjoint paths
  from $w$ to $z$, one through $u$, one through $v$, and one
  containing $(s,t)$. So any path connecting vertices from two of
  these three paths must contain $w$ or $z$, because otherwise the
  graph would contain a $K_4$ minor. This means that $(w,z)$ is a
  $2$-vertex-cut separating $u$ from $v$; and if $s$ is not $w$ or $t$
  is not $z$, then they are also separated from $u$ and $v$ by
  $(w,z)$. This is true for any pair of terminals of $(u,v)$.

  Let $C$ be any simple cycle containing $u$ and $v$. Since $(w,z)$
  separates $u$ from $v$, $C$ must contain $w$ and $z$. Since a simple
  cycle can intersect only two connected components of
  $G\setminus\{w,z\}$, $C$ does not contain $s$ or $t$, unless they
  are $w$ or $z$ respectively. So $C$ is composed of two oriented
  paths from $w$ to $z$, containing $u$ and $v$ respectively. And so
  $w$ and $z$ are the unique source and sink of $C$.

  Since any simple cycle containing $u$ and $v$ also contains all the
  pairs of terminals of $(u,v)$, and has any pair of terminals as
  unique source and unique sink, there is only one pair of terminals
  of $(u,v)$.
\end{proof}

For any orientation of $G$ defined by a split pair $(s,t)$, a pair of
vertices $(w,z)$ is said to \emph{bracket} another pair $(u,v)$ if
there is an oriented path from $w$ to $z$ containing $u$ and $v$. If
$(w,z)$ brackets $(u,v)$ but $w\ne u$ or $z\ne v$, then $(w,z)$
\emph{strictly brackets} $(u,v)$. Since the orientation is acyclic,
the bracketing relation is transitive.

\begin{lemma}
\label{lem:sepa-embed}
Let $G=(V,E)$ be a series-parallel graph, and let $u,v\in V$ be two
arbitrary vertices. Then $G$ can be embedded in the plane so that 
both $u$ and $v$ are on the outside face.
\end{lemma}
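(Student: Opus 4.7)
The plan is to reduce the claim to the planarity of the augmented graph $G':=G+(u,v)$. Once we know $G'$ is planar, any planar embedding of $G'$ can be transformed (by a stereographic projection, or equivalently by declaring a different face to be the outer face) so that one of the two faces bounded by the edge $(u,v)$ becomes the outer face; both $u$ and $v$ then lie on the outer face. Deleting the edge $(u,v)$ from this embedding produces a planar embedding of $G$ in which $u$ and $v$ are on the outer face.

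The main step is therefore to show that $G'$ is planar. By Wagner's theorem this amounts to verifying that $G'$ contains neither $K_5$ nor $K_{3,3}$ as a minor. Suppose for contradiction that $G'$ has an $H$-minor for some $H\in\{K_5,K_{3,3}\}$, specified by disjoint connected branch sets $\{B_h:h\in V(H)\}$ together with model edges realizing each edge of $H$. I would distinguish three cases according to how the extra edge $(u,v)$ is used by this model. If $(u,v)$ is not used, then the same branch sets yield an $H$ minor of $G$ itself; since both $K_5$ and $K_{3,3}$ contain $K_4$ as a minor (for $K_{3,3}$, use the branch sets $\{a_1,b_2\},\{a_2,b_1\},\{a_3\},\{b_3\}$ and check the six model edges), this contradicts the series-parallel hypothesis. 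If $(u,v)$ is used as a model edge between $B_i$ and $B_j$, then $G$ has an $H-e$ minor where $e$ is the corresponding edge of $H$; both $K_5-e$ and $K_{3,3}-e$ still contain $K_4$ as a minor (by analogous direct constructions), producing the same contradiction. If $(u,v)$ lies inside some branch set $B_i$ and is essential there, removing it splits $B_i$ into connected pieces $B_i^u\ni u$ and $B_i^v\ni v$; treating these as two separate branch sets gives, in $G$, a model of the graph $H'$ obtained from $H$ by splitting vertex $i$ into two vertices that partition its $H$-neighborhood. A short pigeonhole-and-contraction argument then shows that $H'$ still contains a $K_4$ minor.

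Having established that $G'$ is planar, the reduction described in the first paragraph completes the proof. I expect the main obstacle to be the third case of the minor argument: when $H=K_{3,3}$ and the three neighbors of the split vertex are partitioned unevenly between $B_i^u$ and $B_i^v$, extracting a $K_4$ minor requires contracting an auxiliary edge in the ``smaller'' piece together with a careful choice of branch sets among the remaining vertices. All other subcases are routine once the model is fixed.
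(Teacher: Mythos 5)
Your proposal is correct and takes essentially the same route as the paper: augment $G$ with the edge $(u,v)$, argue that $G'=G+(u,v)$ is planar because a $K_5$ or $K_{3,3}$ minor of $G'$ would force a $K_4$ minor of $G$, embed $G'$ with $(u,v)$ on the outer face, and delete the edge. The paper simply asserts the planarity step in one line, while your case analysis of how the minor model uses $(u,v)$ supplies the details it leaves implicit.
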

\begin{proof}
If $G$ is series-parallel, then it does not contain
a $K_4$ minor.  Hence adding any single edge $e$ to $G$ does not create
either a $K_5$ or a $K_{3,3}$ minor; it follows that adding $e$ to $G$
results in a planar graph. In particular, $G' = (V, E\cup\{(u,v)\})$ is
planar. We embed $G'$ in the plane so that $(u,v)$ is on
the outside face.  (See, e.g.,~\cite{Schnyder}.)  Removing $(u,v)$ from
the result gives an embedding of $G$ with $u$ and $v$ on the outside face.
\end{proof}


\section{Congestion-Distortion Equivalence via LP Duality and Consequences}\label{sec:new}

We now give our new proof of the congestion-distortion equivalence
theorem (see Section~\ref{sec:intro}), using only basic notions of
linear programming duality. Our proof will reveal several additional
relations between LP variables that are useful later: in particular,
they give us cut-sufficiency conditions based on certain LP variables.
The starting point of the proof is a well-known fact: multiflows are
tightly related to metrics, because the dual of the LP expressing a
multiflow problem can be interpreted as the problem of finding a certain
graph metric.

\subsection{The Proof via LP Duality}

Throughout this section, we fix a ``supply graph'' $G = (V,E)$ and a
``demand graph'' $H = (V,F)$. The crux of the proof is to identify a
certain nonlinear maximization problem~\eqref{eqMaster} in variables $c
= \{c_e\}_{e\in E}$, $D = \{D_i\}_{i\in F}$, $l = \{l_e\}_{e\in E}$, and
$d = \{d_i\}_{i\in F}$ that has the following two properties. First, for
each setting of $c$ and $D$ satisfying the cut condition, the
program~\eqref{eqMaster} reduces to a maximization LP whose dual is the
problem of finding the minimum congestion for the multiflow problem
$(G,H,c,D)$. Second, for each setting of $l$ and $d$ satisfying certain
metric inequalities, the program~\eqref{eqMaster} reduces to a different
maximization LP whose dual is (a generalization of) the problem of
finding the minimum distortion embedding, into the family of cut-cone
metrics, of the metric given by $l$ and $d$. It follows that the maximum
possible congestion over all capacity/demand settings equals the maximum
possible distortion over all length settings. We now give the details.

For each $i\in F$, let $\{P_1^i,P_2^i,\ldots\}$ be a listing of
$\mathcal{P}[i]$, the set of simple paths in $G$ connecting the
endpoints of the demand $i$. The problem of determining the minimum
congestion for the multiflow instance $(G,H,c,D)$ can be written as%
\footnote{In this section, boldface is used to distinguish variables
from parameters in the linear programs.}
\begin{equation*}\tag{P}\label{eqP}
  \begin{array}{rrrrcll}
  z(c,D) & = & \min & \testbf{\alpha} &&&\\
  &&s.t. & \displaystyle\sum_{i,j:\,e\in P_j^i} \testbf{f_j^i} &\leq& c_e\testbf{\alpha} &\forall e\in E\\
  &&   &  \displaystyle \sum_j\testbf{f_j^i} & \geq & D_i & \forall i\in F\\
  &&   &  \testbf{f_j^i} & \geq & 0 & \forall i,j,
  \end{array}
\end{equation*}
where $\testbf{f_j^i}$ is the variable indicating the amount routed on path
$P_j^i$. The dual linear program is the following:
\begin{equation*}\tag{D}\label{eqD}
  \begin{array}{rrrrcll}
  z(c,D) & = & \max & \displaystyle \sum_iD_i\testbf{d_i} &&&\\
  &&s.t. & \displaystyle \sum_ec_e\testbf{l_e} & = & 1 &\\
  &&     & \testbf{d_i} &\leq &\displaystyle \sum_{e\in P_j^i}\testbf{l_e} &\forall i,j\\
  &&     & \testbf{d_i}         & \geq & 0 & \forall i\in F\\
  &&     &        \testbf{l_e}  & \geq & 0 & \forall e\in E.
  \end{array}
\end{equation*}
The variables $\testbf{l_e}$ can be thought of as \emph{lengths} of the
edges of $G$, and $\testbf{d_i}$ as \emph{distances} between the endpoints 
of $i$. The second set of constraints are \emph{metric inequalities},
which ensure that $\testbf{d_i}$ is no more than the shortest-path
distance between the endpoints of $i$ induced by the lengths
$\testbf{l_e}$.

%

In order to find the flow-cut gap of a pair $(G,H)$, we need to find the
maximum value to~\eqref{eqP} (and~\eqref{eqD}) over all choices of
capacities $c$ and demands $D$ that satisfy the cut condition, which can
be expressed as
\begin{equation}\label{eqMast1}
  \begin{array}{rrcll}
  \max & z(\testbf{c},\testbf{D}) &&&\\
  s.t. &\displaystyle \sum_{i\in\delta_H(C)} \testbf{D_i} &\leq& \displaystyle\sum_{e\in\delta_G(C)}\testbf{c_e} &\forall C\subseteq V\\
       & \testbf{D_i}         & \geq & 0 & \forall i\in F,\\
       &        \testbf{c_e}  & \geq & 0 & \forall e\in E.
  \end{array}
\end{equation}
Since the linear program~\eqref{eqD} is a maximization problem, we can
write the problem of finding the flow-cut gap as a single maximization
problem on variables $\testbf{c}$, $\testbf{D}$, $\testbf{l}$ and
$\testbf{d}$:
\begin{equation*}\tag{+}\label{eqMaster}
  \begin{array}{rrcll}
    \max & \displaystyle \sum_i\testbf{D_id_i} &&&\\
   s.t. &  \displaystyle \sum_e\testbf{c_el_e} & = & 1 &\\
   & \testbf{d_i} &\leq &\displaystyle \sum_{e\in P_j^i}\testbf{l_e} &\forall i,j\\
   & \displaystyle\sum_{i\in\delta_H(C)} \testbf{D_i} &\leq& \displaystyle\sum_{e\in\delta_G(C)}\testbf{c_e} &\forall C\subseteq V\\
    & \testbf{D_i},\testbf{d_i}          & \geq & 0 & \forall i\in F\\
    &         \testbf{c_e},\testbf{l_e}  & \geq & 0 & \forall e\in E.
  \end{array}
\end{equation*}
This is not a linear program, since some of the variables
multiply each other. However, there are two ways we can transform it
into a linear program by setting some variables to be parameters. If
we fix $\testbf{c_e}$ for all $e$ and $\testbf{D_i}$ for all $i$ to be
parameters that satisfy the cut condition, we obtain the linear
program~\eqref{eqD}. But if we fix $\testbf{l_e}$ for all $e$ and
$\testbf{d_i}$ for all $i$ to be parameters that satisfy the metric
inequalities, we find a different linear program in variables
$\testbf{c_e}$ and $\testbf{D_i}$:
\begin{equation*}\tag{D$'$}\label{eqD'}
  \begin{array}{rrrrcll}
  w(l,d) & = & \max & \displaystyle \sum_id_i\testbf{D_i} &&&\\
  &&  s.t. & \displaystyle \sum_el_e\testbf{c_e} & = & 1 &\\
  &&       & \displaystyle\sum_{i\in\delta_H(C)} \testbf{D_i} &\leq& \displaystyle\sum_{e\in\delta_G(C)}\testbf{c_e} &\forall C\subseteq V\\
  &&  & \testbf{D_i}          & \geq & 0 & \forall i\in F\\
  &&  &         \testbf{c_e}  & \geq & 0 & \forall e\in E.
  \end{array}
\end{equation*}
The flow-cut gap problem (\ref{eqMast1}) and (\ref{eqMaster}) 
can then also be expressed as
\begin{equation}\label{eqMast2}
  \begin{array}{rrrcll}
  \max & w(\testbf{l},\testbf{d}) &&&\\
  s.t. & \testbf{d_i} &\leq &\displaystyle \sum_{e\in P_j^i}\testbf{l_e} &\forall i,j\\
       & \testbf{d_i}          & \geq & 0 & \forall i\in F\\
       &         \testbf{l_e}  & \geq & 0 & \forall e\in E.
  \end{array}
\end{equation}
Notice that in a solution achieving the maximum above, each
$\testbf{d_i}$ must equal the shortest-path distance between the
endpoints of $i$ induced by the lengths $\testbf{l_e}$.  The dual
of~\eqref{eqD'} is
\begin{equation*}\tag{P$'$}\label{eqP'}
\begin{array}{rrrrcll}
  w(l,d) &=& \min & \testbf{\gamma} &&&\\
  &&s.t. & \displaystyle\sum_{C:e\in \delta_G(C)} \testbf{x_C} &\leq& l_e\gamma &\forall e\in E,\\
  &&     & \displaystyle \sum_{C:i\in\delta_H(C)} \testbf{x_C} &\geq & d_i & \forall i\in F\\
  &&     &         \testbf{x_C} & \geq & 0 & \forall C\subseteq V.
\end{array}
\end{equation*}
The system~\eqref{eqP'} has a variable $\testbf{x_C}$ for each subset $C
\subseteq V$.  The values of these variables define a cut-cone metric;
call it $d'$. The first constraint says that the $d'$-length of an edge
$e$ is at most $\testbf{\gamma}$ times its ``true'' length $l_e$.  The
second constraint says that the $d'$-distance between the endpoints of a
demand $i$ is at least $d_i$, which, for $l$ and $d$ achieving the
maximum in~\eqref{eqMast2}, equals the ``true'' distance given by $l$.
Thus, (\ref{eqP'}) can be seen as approximating (at least between
endpoints of demands) the shortest-path metric induced by $l$ by a
cut-cone metric, within an approximation factor $\testbf{\gamma}$ as
small as possible.

As a clean special case, when $H$ is a complete graph on $V$, then
$d_e=l_e$ for each edge $e$, and thus the two constraints
in~\eqref{eqP'} say (respectively) that $d \ge d'/\gamma$ and that the
distortion from $d$ to $d'/\gamma$ is at most $\gamma$.  Thus, $d$
embeds into the family of cut-cone metrics with distortion at most
$\gamma$. The equivalence of (\ref{eqMast1}), (\ref{eqMaster}), and
(\ref{eqMast2}) means that the flow-cut gap of $(G,H)$ is equal to the
minimum distortion required to embed an arbitrary shortest-path metric
defined on $G$ into the family of cut-cone metrics. This completes the
proof of the congestion-distortion equivalence, entirely through basic
notions of linear programming.

%

\subsection{Implications of the New Proof}

Suppose that, for some pair of graphs $(G,H)$, with $G = (V,E)$, $H =
(V,F)$, we have an optimal solution $(c^*,D^*,l^*,d^*)$ to the nonlinear
program~\eqref{eqMaster}. By the properties of linear programming
duality, there are solutions $f^*$ and $x^*$ to the flow
problem~\eqref{eqP} and the cut metric problem \eqref{eqP'} that satisfy
complementary slackness. We call $(c^*,D^*,l^*,d^*,f^*,x^*)$ a
\emph{general solution} to the pair $(G,H)$.

\begin{lemma}
\label{lem:cs}
A general solution satisfies the following properties. \\
$\langle a \rangle$ If $x_C^*>0$, then 
$\sum_{i\in\delta_H(C)} D_i^* = \sum_{e\in\delta_G(C)}c_e^*$.  Thus, only
tight cuts have positive $x$-value in (\ref{eqP'}). \\
$\langle b \rangle$ If $f^{i*}_j > 0$ then $d^*_i = \sum_{e\in P^i_j} l^*_e$;
and for each $i$ with $D^*_i > 0$, there is a path $P^i_{j*}$
for which this is true. 

\end{lemma}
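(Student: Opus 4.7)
The plan is to apply complementary slackness twice, exploiting the fact that a general solution $(c^*,D^*,l^*,d^*,f^*,x^*)$ optimizes the nonlinear program~(\ref{eqMaster}) in two complementary ways. Freezing $(l,d)$ at $(l^*,d^*)$ makes $(c^*,D^*)$ optimal for~(\ref{eqD'}) with dual optimum $x^*$ for~(\ref{eqP'}); freezing $(c,D)$ at $(c^*,D^*)$ makes $f^*$ optimal for~(\ref{eqP}) with dual optimum $(l^*,d^*)$ for~(\ref{eqD}). Both of these couplings are precisely what the definition of a general solution supplies.

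For part $\langle a\rangle$ I work with the (\ref{eqD'})--(\ref{eqP'}) pair. The cut-condition inequality $\sum_{i\in\delta_H(C)}\testbf{D_i}\leq\sum_{e\in\delta_G(C)}\testbf{c_e}$ in~(\ref{eqD'}) has $\testbf{x_C}$ as its dual variable in~(\ref{eqP'}). Standard complementary slackness between a primal $\leq$-inequality and its nonnegative dual variable then says: if $x_C^*>0$, the inequality must hold with equality at $(c^*,D^*)$, which is exactly the statement that the cut generated by $C$ is tight.

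For part $\langle b\rangle$ I switch to the (\ref{eqP})--(\ref{eqD}) pair. The metric inequality $\testbf{d_i}\leq\sum_{e\in P^i_j}\testbf{l_e}$ in~(\ref{eqD}) is the dual constraint corresponding to the primal variable $\testbf{f^i_j}$, so complementary slackness immediately yields the first assertion: $f^{i*}_j>0$ forces $d^*_i=\sum_{e\in P^i_j}l^*_e$. For the second assertion, the primal demand constraint $\sum_j f^{i*}_j\geq D^*_i$ combined with the hypothesis $D^*_i>0$ guarantees at least one index $j^*$ with $f^{i*}_{j^*}>0$, and applying the first assertion to that path completes the argument.

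There is no real obstacle: the lemma is pure bookkeeping with complementary slackness, and the only subtlety is the decomposition of~(\ref{eqMaster}) into the two LP pairs above, which is already built into the definition of a general solution. The reason for isolating these two statements at this point is that parts $\langle a\rangle$ and $\langle b\rangle$ will be the workhorse handles on tight cuts and on the interaction between supporting paths and optimal lengths in the cut-sufficiency arguments that follow.
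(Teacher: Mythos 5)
Your argument is correct and follows essentially the same route as the paper: complementary slackness between (\ref{eqD'}) and (\ref{eqP'}) for $\langle a\rangle$, complementary slackness between (\ref{eqP}) and (\ref{eqD}) for the first half of $\langle b\rangle$, and the demand constraint of (\ref{eqP}) together with $D_i^*>0$ for the second half. The paper states exactly this, only more tersely.
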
 
\begin{proof}
$\langle a \rangle$ follows from complementary
  slackness applied to (\ref{eqP'}) and (\ref{eqD'}).
  The first part of $\langle b \rangle$ follows from complementary
  slackness applied to (\ref{eqP}) and (\ref{eqD}). The second part
  of $\langle b \rangle$ follows from the second constraint of
  (\ref{eqP}).  
\end{proof}

\begin{lemma}\label{lem:tight}
  There is a solution $x^*$ to the problem~\eqref{eqP'} such that only
  central cuts have a positive $x$-value.
\end{lemma}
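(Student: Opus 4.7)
The plan is to take any optimal solution $x^*$ of~\eqref{eqP'} and transform it by an uncrossing procedure into another optimal solution supported only on central cuts. I would proceed in two phases and verify that each step preserves the objective value $\gamma$ (by preserving supply constraints exactly and only increasing demand-constraint left-hand sides).

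In Phase~1, for each $C$ in the support of $x^*$ with $G[C]$ disconnected, let $C_1,\dots,C_k$ be the connected components of $G[C]$; set $x_C \gets 0$ and add the old value of $x_C$ to each $x_{C_i}$. The feasibility check is a case analysis: a supply edge $e \in \delta_G(C)$ lies in exactly one $\delta_G(C_i)$ (the one containing its $C$-endpoint), while a supply edge $e \notin \delta_G(C)$ lies in none of the $\delta_G(C_i)$ (its endpoints are either both outside $C$ or both inside the same component of $G[C]$), so each supply constraint is preserved as an equality. For a demand edge $i$, the left-hand side of its constraint is either unchanged or gains $2x_C$ (the latter precisely when $i$ joins two distinct components of $G[C]$), so the $\geq$ constraints still hold. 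By construction each new piece $C_i$ has $G[C_i]$ connected, so Phase~1 terminates after one pass through the initial support.

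In Phase~2, for each remaining $C$ in the support with $G[V\setminus C]$ disconnected, let $D_1,\dots,D_m$ be the components of $G[V\setminus C]$; set $x_C \gets 0$ and add the old value of $x_C$ to each $x_{V\setminus D_l}$. Feasibility follows by the analogous case analysis (swap the roles of $C$ and $V\setminus C$, using $\delta_G(C) = \delta_G(V\setminus C)$). The main obstacle I expect is verifying that each resulting piece $C_l' := V\setminus D_l$ is genuinely central. The argument is structural: $G[D_l]$ is connected by definition of $D_l$, and $G[C_l'] = G\!\left[C \cup \bigcup_{j\neq l} D_j\right]$ is connected because $G[C]$ is connected (ensured by Phase~1), each $D_j$ is connected, and each $D_j$ with $j\neq l$ has at least one supply edge into $C$. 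That last point is the key: since $D_j$ is a component of $G[V\setminus C]$, it has no supply edges to any other $D_{j'}$, hence every edge leaving $D_j$ in $G$ must land in $C$. Consequently every $C_l'$ is central, Phase~2 also terminates in one pass, and the final $x^*$ is an optimal solution of~\eqref{eqP'} whose support consists only of central cuts.
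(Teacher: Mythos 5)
Your proposal is correct and uses essentially the same uncrossing idea as the paper: redistribute the mass of a non-central $x_C$ onto connected pieces, noting that the only cut-metric distances that increase are between vertex pairs not joined by a supply edge, so every supply constraint of~\eqref{eqP'} is preserved at the same $\gamma$ while the demand constraints can only improve. The one difference is that your explicit Phase~2 (exploiting $\delta_G(C)=\delta_G(V\setminus C)$, together with the paper's standing assumption that $G$ is biconnected---hence connected---to guarantee each component $D_j$ of $G[V\setminus C]$ has a supply edge into $C$) spells out the disconnected-complement case, which the paper's terser induction on splitting a disconnected $C$ leaves implicit.
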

\begin{proof}
  Suppose that in the optimal solution, $x_C^*>0$ for some $C$ 
that can be decomposed into two sets $C_1$ and $C_2$
  that are not connected by any supply edge.  Adding the value of
$x_C^*$ to the values of $x_{C_1}^*$ and $x_{C_2}^*$ and setting
$x_C^*$ to zero increases the distance, in the
  cut metric defined by $x^*$, between all pairs of vertices $u\in
  C_1$, $v\in C_2$.  Since there is no supply edge from $u$ to $v$,
  there is no upper constraint on this distance in the linear
  program~\eqref{eqP'}; and so the new solution is still optimal. By
  induction, there is an optimal solution with $x_C^*=0$ for any
  non-central $C$.
\end{proof}

\noindent
We assume from now on that in a general solution, if $x_C^*>0$ then
$C$ is central.\footnote{Another way to do this is to decide from the
beginning that the optimization program~\eqref{eqMast1} only has cut
condition constraints on central cuts, as this is sufficient for
ensuring the cut condition is satisfied, which implies that the linear
program~\eqref{eqP'} only has variables $x_C$ for central cuts.}

We define a \emph{simple pair} to be a pair $(G,H)$ that has a general
solution such that $c_e^*>0$ and $l_e^* > 0$ for each $e$, and $D_i^*>0$
for each $i$.

\begin{lemma}\label{lem:simple}
For each pair $(G,H)$, there is a simple pair with the same flow-cut gap. 
\end{lemma}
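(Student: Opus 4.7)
The plan is to start with an optimal general solution $(c^*, D^*, l^*, d^*, f^*, x^*)$ of $(G,H)$ and repeatedly eliminate zero-valued variables via three graph operations, each of which preserves the flow-cut gap. The operations are: (a)~if $D_i^* = 0$, delete demand $i$ from $H$; (b)~if $c_e^* = 0$, delete supply edge $e$ from $G$; (c)~if $l_e^* = 0$ with $c_e^* > 0$ for a supply edge $e = (u,v)$, contract $e$ in $G$ (merging $u$ and $v$) and drop any resulting self-loop demand edges from $H$. Each operation strictly decreases the total edge count, so the process terminates in a pair $(G',H')$ whose projected general solution has $c_e^*, D_i^*, l_e^* > 0$ throughout, making $(G', H')$ simple by definition.

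For (a) and (b), preservation of the flow-cut gap is routine: restriction of $(c^*, D^*, l^*, d^*)$ to the remaining variables is feasible for the new pair with the same objective value (losing only terms multiplied by zero), while conversely any feasible solution of the new pair extends to the old pair by zeros. The auxiliary optima $f^*$ and $x^*$ also restrict, noting that $c_e^* = 0$ forces the flow to use none of $e$ and $D_i^* = 0$ simply removes an already-satisfied constraint.

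For (c), the key ingredient is the complementary-slackness identity between~\eqref{eqD'} and~\eqref{eqP'}: since $c_e^* > 0$, we have $\sum_{C \colon e \in \delta_G(C)} x_C^* = l_e^* \gamma = 0$, so no positive-$x$ cut separates $u$ from $v$. Hence $x^*$ projects cleanly to a solution on cuts of $G/e$; the cut condition for the contracted pair corresponds to the cut condition of $(G,H)$ restricted to cuts not separating $u$ from $v$; and the shortest-path metric induced by $l^*$ is unchanged by the contraction since $l_e^* = 0$. A demand between $u$ and $v$ becomes a self-loop, but it satisfies $d_i^* \leq l_e^* = 0$, so dropping it does not change the objective; the flow $f^*$ projects by contracting $e$ out of each path and shortcutting any path that visited both $u$ and $v$ without using $e$. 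Conversely, any feasible solution of the contracted pair extends back by setting $l_e = 0$ and $c_e$ large enough to satisfy all cut conditions that separate $u$ from $v$, with $c_e l_e = 0$ preserving the normalization $\sum_e c_e l_e = 1$.

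I expect (c) to be the main obstacle, since it requires the complementary-slackness identity together with careful handling of paths passing through both endpoints of $e$ and of demands that become self-loops. The other operations and the termination argument are essentially bookkeeping.
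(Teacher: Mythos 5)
Your proposal follows essentially the same route as the paper's proof: delete demands with $D_i^*=0$, delete supply edges with $c_e^*=0$, and contract supply edges with $l_e^*=0$, checking that each operation preserves the flow-cut gap (the paper phrases this as the constraints involving the removed edge becoming vacuous in~\eqref{eqMaster}, and gets by without the complementary-slackness projection of $x^*$ that you invoke for the contraction step). One small slip: when a supply edge $e$ with $c_e^*=0$ is deleted, a feasible solution of the new pair does \emph{not} extend to the old pair ``by zeros'' --- setting $l_e=0$ can violate the path constraints $d_i\le\sum_{e'\in P}l_{e'}$ for paths through $e$ --- instead one sets $c_e=0$ and $l_e$ arbitrarily large (or appeals to minor monotonicity of the gap), which is precisely the paper's observation that $l_e$ is then unconstrained.
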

\begin{proof}
  Suppose that some edge $e$ has a zero capacity in the optimal
  solution to~\eqref{eqMaster} (i.e. $c_e^*=0$). This means that there
  is no upper constraint on the value of $l_e$, and so the constraints
  for paths $P_j^i$ which contain edge $e$ put no restriction on the
  value of $d_i$; the constraints on cuts containing $e$ do not change
  if $c_e$ is in the expression and equal to zero, or removed from the
  expression. Thus, deleting the edge $e$ from $G$ does not change
  constraints on $d_i$ and $D_i$, and so the flow-cut gap remains the
  same. Similarly, if $l_e^*=0$, there is no upper constraint on the
  value of $c_e$, and so the constraints for cuts $C$ which contain
  edge $e$ put no restriction on the value of demands crossing $C$;
  the constraints on paths containing $e$ do not change if $l_e$ is in
  the expression and equal to zero, or removed from the expression.
  Thus, contracting the edge $e$ does not change constraints on $d_i$
  and $D_i$, and so the flow-cut gap remains the same. If $D_i^*=0$
  for some $i$, it makes no difference what constraints are on $d_i$,
  and so the flow-cut gap remains the same if the demand $i$ is
  deleted from $H$.
\end{proof}

We assume from now on that $(G,H)$ is a simple pair. In what follows,
recall that $\mathcal{P}[i]$ denotes the set of paths in $G$ that
connect the endpoints of the demand edge $i\in F$.  

\begin{lemma}\label{lem:main}
  Let $(G,H)$ be a simple pair with general solution 
  $(c^*,D^*,l^*,d^*,f^*,x^*)$. Then
  \begin{equation} \label{eqgamma}
    \forall\,i\in F~
    \forall\,P\in \mathcal{P}[i]\,:~~
    \sum_{C:i\in\delta_H(C)}x_C^*
    =   d_i^*
    \le \sum_{e\in P}l_e^* 
    =   \frac{1}{\gamma^*}\sum_{e\in P}\sum_{C:e\in \delta_G(C)} x_C^*,
  \end{equation}
with equality if $P$ is a shortest path for the shortest-path metric
defined by $l^*$, e.g., if the solution has a nonzero flow routing the
demand $i$ along $P$.
\end{lemma}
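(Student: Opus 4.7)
The proof is essentially a bookkeeping exercise that chains together complementary slackness for the primal-dual pair~\eqref{eqP'}/\eqref{eqD'} with the metric inequalities from~\eqref{eqD}. Since $(G,H)$ is a simple pair, every $c_e^*$, $l_e^*$, and $D_i^*$ is strictly positive, so all of the relevant complementary-slackness conditions yield equalities rather than merely inequalities. The plan is to establish each piece of the chain separately.

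First I would derive the leftmost equality $d_i^* = \sum_{C: i\in\delta_H(C)} x_C^*$. In~\eqref{eqP'}, each demand $i\in F$ gives rise to the constraint $\sum_{C: i\in\delta_H(C)} \testbf{x_C} \ge d_i$, and this constraint is dual to the variable $\testbf{D_i}$ in~\eqref{eqD'}. Since $D_i^*>0$ by simplicity, complementary slackness forces the constraint to be tight, giving the claimed identity. Next I would derive the rightmost equality $l_e^* = (1/\gamma^*)\sum_{C:e\in\delta_G(C)} x_C^*$ for every edge $e$: the primal constraint in~\eqref{eqP'} associated with edge $e$, namely $\sum_{C:e\in\delta_G(C)} \testbf{x_C} \le l_e\testbf{\gamma}$, is dual to the variable $\testbf{c_e}$ of~\eqref{eqD'}, and $c_e^*>0$ by simplicity, so complementary slackness again gives equality. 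Summing this identity over the edges of $P$ produces the rightmost equality in~\eqref{eqgamma}.

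The middle inequality $d_i^* \le \sum_{e\in P} l_e^*$ for any $P\in\mathcal{P}[i]$ is immediate: it is just the metric constraint $\testbf{d_i}\le\sum_{e\in P_j^i}\testbf{l_e}$ of~\eqref{eqD} evaluated at the optimal solution. Finally, for the equality statement: in any optimal solution to the maximization~\eqref{eqMast2} the value $d_i^*$ must coincide with the shortest-path distance between the endpoints of $i$ under the lengths $l^*$ (otherwise $d_i^*$ could be increased without violating any constraint, contradicting optimality), so the inequality is tight on every shortest path $P$. The special case in which some path $P^i_{j^*}$ carries nonzero flow, i.e.\ $f_{j^*}^{i*}>0$, is covered directly by Lemma~\ref{lem:cs}$\langle b\rangle$, which gives $d_i^* = \sum_{e\in P^i_{j^*}} l_e^*$ and in particular confirms that every flow-carrying path is a shortest path.

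There is no real obstacle here; the only point requiring a moment of care is to line up each constraint of~\eqref{eqP'} with the correct variable of~\eqref{eqD'} when invoking complementary slackness, and to use the hypothesis that $(G,H)$ is simple (which is what makes every complementary-slackness implication bite in the direction of equality).
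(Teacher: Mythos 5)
Your proof is correct and follows essentially the same route as the paper's: complementary slackness between~\eqref{eqP'} and~\eqref{eqD'}, using $c_e^*>0$ and $D_i^*>0$ from simplicity to force both families of constraints tight, the metric inequality from~\eqref{eqD} for the middle step, and Lemma~\ref{lem:cs}$\langle b\rangle$ (equivalently, optimality of $d_i^*$ as the shortest-path distance) for the tightness claim.
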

\begin{proof}
If $(c^*,D^*,l^*,d^*,f^*,x^*)$ is a general solution for the simple pair
$(G,H)$, then by definition we have $c_e^* > 0$ for all $e$ and $D_i^* >
0$ for all $i$. By complementary slackness applied
to (\ref{eqP'}) and (\ref{eqD'}), this implies that the inequalities 
of~\eqref{eqP'} are all tight:
\begin{equation*}
  l_e^*\gamma = \sum_{C:e\in \delta_G(C)} x_C^*\quad\forall e\in E,
\end{equation*}
\begin{equation*}
  d_i^* = \sum_{C:i\in\delta_H(C)} x_C^*\quad\forall i\in F.
\end{equation*}
Notably, this implies that the length of any edge $e$ in the cut-cone
metric defined by $x^*$ is always $\gamma^*$ times $l_e^*$. 
These equalities imply the equalities in (\ref{eqgamma}).
The inequality holds for all solutions. 
Lemma~\ref{lem:cs}$\langle b \rangle$ shows 
that the inequality is tight for at least one path, and hence it is
tight for the shortest path.
\end{proof}

And so, in the metric defined by $x^*$, for any path $P_j^i$ such
that $f_j^{i*}>0$, the ratio between the sum of lengths of edges in
the path and the distance between endpoints of the path is equal to
the flow-cut gap.

\begin{theorem}\label{thm:bubble}
  Let $(G,H)$ be a simple pair with general solution
  $(c^*,D^*,l^*,d^*,f^*,x^*)$. Suppose there exist $i\in F$ and
  $P\in\mathcal{P}[i]$ such that $P$ crosses each tight cut at most
  once, with tightness defined according to $c^*$ and $D^*$. Then
  $(G,H)$ is cut-sufficient, i.e., its flow-cut gap is one.  More
  explicitly, a multiflow problem on $(G,H)$ has a fractional solution
  for any choice of capacities and demands that satisfy the cut
  condition.
\end{theorem}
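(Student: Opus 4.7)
The plan is to show $\gamma^*\le 1$, where $\gamma^*$ is the flow-cut gap of $(G,H)$; this is equivalent to cut-sufficiency. Lemma~\ref{lem:main} supplies the chain
$$
d_i^* \;\le\; \sum_{e\in P}l_e^* \;=\; \frac{1}{\gamma^*}\sum_{e\in P}\sum_{C:\,e\in\delta_G(C)} x_C^*,
$$
so it suffices to show that the right-hand double sum is exactly $d_i^*$.

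To collapse this double sum I swap the order of summation and use the hypothesis. By Lemma~\ref{lem:cs}$\langle a\rangle$ only tight cuts contribute, since $x_C^*=0$ on every non-tight $C$, and by hypothesis $|P\cap\delta_G(C)|\le 1$ on every such cut. A standard parity argument---any path between the endpoints of $i$ crosses $\delta_G(C)$ an odd number of times iff $i\in\delta_H(C)$---then pins $|P\cap\delta_G(C)|$ to $1$ when $i\in\delta_H(C)$ and to $0$ otherwise, on every tight $C$. Therefore
$$
\sum_{e\in P}\sum_{C:\,e\in\delta_G(C)} x_C^* \;=\; \sum_{C} x_C^*\,|P\cap\delta_G(C)| \;=\; \sum_{C:\,i\in\delta_H(C)} x_C^* \;=\; d_i^*,
$$
the last equality being the other identity in Lemma~\ref{lem:main}.

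Substituting back gives $d_i^*\le d_i^*/\gamma^*$, so $\gamma^*\le 1$ once we know $d_i^*>0$. Positivity is immediate in a simple pair: since $D_i^*>0$, Lemma~\ref{lem:cs}$\langle b\rangle$ provides a path $P^i_{j*}\in\mathcal{P}[i]$ with $d_i^*=\sum_{e\in P^i_{j*}} l_e^*$, and since every $l_e^*$ is strictly positive and this path has at least one edge, $d_i^*>0$. The delicate step is the parity/double-counting combination in the second paragraph---once the sum $\sum_{e\in P}\sum_{C:\,e\in\delta_G(C)} x_C^*$ is recognized as $d_i^*$, the theorem falls out directly from the lemmas already in hand.
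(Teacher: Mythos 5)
Your proposal is correct and follows essentially the same route as the paper: the parity-plus-hypothesis argument pins down $|P\cap\delta_G(C)|$ on tight cuts, Lemma~\ref{lem:cs}$\langle a\rangle$ kills the non-tight cuts, and the identities of Lemma~\ref{lem:main} collapse the double sum to $d_i^*$, forcing $\gamma^*=1$. Your positivity argument for $d_i^*$ (via $D_i^*>0$, Lemma~\ref{lem:cs}$\langle b\rangle$, and $l_e^*>0$ for a simple pair) just spells out the step the paper states more tersely.
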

\begin{proof}
  Pick a tight set $C\subseteq V$. The number of times that $P$ crosses
  $C$ is odd if $i\in\delta_H(C)$ and even otherwise. Therefore, by the
  given condition, if $i\notin\delta_H(C)$, then $P$ must not cross $C$;
  otherwise $P$ must cross $C$ exactly once.  Recall that, by
  Lemma~\ref{lem:cs}$\langle a \rangle$, only tight cuts may have
  non-zero $x^*$-values. This implies that
  \[
  \sum_{C:i\in\delta_H(C)}x^*_C =\sum_{e\in P}\sum_{C:e\in \delta_G(C)} x^*_C.
  \]
  In view of Lemma~\ref{lem:main}, this means that
  \[
  \gamma\sum_{C:i\in\delta_H(C)}x^*_C \leq 
  \sum_{e\in P}\sum_{C:e\in \delta_G(C)} x^*_C =
  \sum_{C:i\in\delta_H(C)} x^*_C.
  \]
  Since the pair $(G,H)$ is
  simple, each $l^*_e$ is non-zero and thus, so is
  $\sum_{C:i\in\delta_H(C)} x^*_C$.  It follows that $\gamma=1$ and the
  flow-cut gap is one, as claimed.
\end{proof}


\section{Proof of the Fractional Routing Theorem} \label{sec:fraction}

In this section, we prove Theorem~\ref{thm:fraction}. Namely, for
series-parallel graphs $G$, we show that the pair $(G,H)$ is
cut-sufficient if and only if it does not contain an odd spindle
as a minor. The following special case of this theorem was proven
earlier in Chekuri et al.~\cite{Chekuri10}, and we use it in our proof.

\begin{theorem}[{\cite[Section 3.3]{Chekuri10}}] \label{thm:k2m}
  Suppose $G$ is $K_{2,m}$, with possibly an additional supply edge
  between the two vertices not of degree $2$. Then $(G,H)$ is
  cut-sufficient iff it does not contain an odd spindle as a
  minor.
\end{theorem}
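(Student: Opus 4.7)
The only-if direction is routine. Cut-sufficiency is closed under the minor operation: any bad instance $(c', D')$ on a minor $(G', H')$ lifts to a bad instance on $(G, H)$ by assigning capacity $0$ to deleted supply edges and a sufficiently large value $M$ to contracted supply edges. Both adjustments keep every cut's surplus non-negative, while any hypothetical feasible routing on $(G, H)$ would project down to one on $(G', H')$. So it suffices to verify that the $p$-spindle with unit weights is itself not cut-sufficient for odd $p$: the tight cuts $\{a_j\}$ force each cycle demand to split $50/50$ between the two hub-routes, which around an odd cycle forces $x_j = 1/2$ for every cycle edge, leaving no residual capacity to carry the $\{u,v\}$ demand.

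For the if direction, label the two hubs $u, v$ and the rim vertices $a_1, \ldots, a_m$; by Lemma~\ref{lem:simple} we may assume $(G, H)$ is a simple pair with no odd-spindle minor. The plan is to invoke Theorem~\ref{thm:bubble} by exhibiting some demand $i \in F$ and a path $P \in \mathcal{P}[i]$ that crosses every tight cut at most once. The easy case handles most of the possibilities: if $H$ contains any demand whose endpoints are adjacent in $G$ --- namely $\{u, a_j\}$, $\{v, a_j\}$, or $\{u, v\}$ when $uv$ is itself a supply edge --- then the length-$1$ single-edge supply path crosses any cut at most once, and Theorem~\ref{thm:bubble} immediately yields cut-sufficiency.

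Otherwise every demand is of the form $\{a_j, a_k\}$, or is $\{u, v\}$ with no parallel supply edge. Let $H_a$ denote $H$ restricted to the rim. The key structural observation is: if $\{u, v\} \in F$ and $H_a$ contains an odd cycle $a_{i_1} \cdots a_{i_{2k+1}}$, then $(G, H)$ contains a $(2k+1)$-spindle minor --- contract each supply edge $u a_j$ for $j \notin \{i_1, \ldots, i_{2k+1}\}$ (merging those $a_j$ into $u$), delete the resulting parallel $uv$ supply edges, and delete every demand edge outside the chosen odd cycle and $\{u, v\}$. This contradicts the hypothesis, so either (a) $\{u, v\} \notin F$ (with $H_a$ arbitrary), or (b) $\{u, v\} \in F$ and $H_a$ is bipartite.

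The main obstacle is now to explicitly construct a feasible multiflow in each of subcases (a) and (b). Writing $y_{jk} \in [0, D_{jk}]$ for the amount of $\{a_j, a_k\}$ routed through $u$ (the remainder $D_{jk} - y_{jk}$ going through $v$), and in subcase (b) writing $z_l \in [0, D_{uv}]$ for the amount of $\{u, v\}$ routed along $u$-$a_l$-$v$, feasibility becomes a linear system with capacity constraints on each edge $ua_j$ and $va_j$. I would verify feasibility by LP duality / the Gale--Hoffman theorem. In subcase (b), the bipartition $A_1 \cup A_2$ of $H_a$ suggests routing each demand $\{a_j, a_k\}$ with $a_j \in A_1, a_k \in A_2$ through one hub or the other in a way correlated with the bipartition; the cut condition on $\{u\} \cup A_1$ and similar cuts then provides the slack needed for the $\{u, v\}$ demand. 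In subcase (a), without the $\{u, v\}$ demand to accommodate, the LP reduces to a bipartite transportation problem whose Hall-type feasibility conditions correspond exactly to the cut conditions on all cuts $\{u\} \cup S$. Matching each potential dual violation to a specific violation of a cut condition is the technical crux of this special case.
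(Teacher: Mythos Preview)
The paper does not prove Theorem~\ref{thm:k2m} at all; it is quoted from Chekuri et al.~\cite{Chekuri10} and used as a black box inside Lemma~\ref{lem:bubbles}. So there is no ``paper's proof'' to compare against here.

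Judged on its own merits: your only-if direction is correct, and the structural reduction in the if direction is sound --- the odd-cycle-in-$H_a$ plus $\{u,v\}$-demand contraction to an odd spindle is exactly right, and the single-edge-path case is a clean application of Theorem~\ref{thm:bubble}.

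The real gap is that your subcases~(a) and~(b) are sketches, not proofs. You set up the natural LP in the splitting variables $y_{jk}$ (and $z_l$) and then defer to ``LP duality / Gale--Hoffman'' and ``Hall-type feasibility conditions.'' This is less automatic than it sounds. The system in the $y_{jk}$ is \emph{not} a transportation problem: each $y_{jk}$ appears in the capacity constraints at both $a_j$ and $a_k$, so what you actually need is a fractional degree-constrained-subgraph feasibility criterion, and you must match every potential violation of that criterion to a specific cut of the form $\{u\}\cup S$ or $\{v\}\cup S$. In subcase~(b) the phrase ``correlated with the bipartition'' does not specify a routing, and the bipartition of $H_a$ does not by itself tell you how to split the $\{u,v\}$ demand among the $z_l$. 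You have correctly located where the work is, but you have not done it.

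A smaller issue: invoking Lemma~\ref{lem:simple} may replace $G$ by a proper minor of $K_{2,m}$ (contracting $ua_j$ merges $a_j$ into $u$ and turns $va_j$ into a parallel $uv$ edge; deletions can leave pendant rim vertices). Your subsequent case analysis silently assumes the full $K_{2,m}$ structure survives. You should either observe that minors of $K_{2,m}+uv$ remain of this shape (or collapse to a star, which is trivially cut-sufficient), or avoid Lemma~\ref{lem:simple} in the hard subcases where you are arguing routability directly anyway.
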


The ``only if'' direction of Theorem~\ref{thm:fraction} is easy, and is
proven in Section 3.3 of~\cite{Chekuri10}; we reproduce the argument
here.  An odd spindle itself has a flow-cut gap of more than $1$,
as can be seen by setting the capacity of all supply edges and the
demand of all demand edges to $1$.  Let a pair $(G,H)$ contain a pair
$(G',H')$ as a minor, and let $(G',H',c',D')$ be an instance of the
multiflow problem. We assign capacities $c$ and demands $D$ to the pair
$(G,H)$ in the following way. To any supply edge or demand edge that is
deleted during the reduction from $(G,H)$ to $(G',H')$, we assign a
capacity or demand of $0$. To any supply edge that is contracted, we
assign a very large capacity. And to any edge of $(G,H)$ that is still
in $(G',H')$ after the reduction, we assign the capacity or demand of
the corresponding edge in $(G',H')$. Since $(G',H',c',D')$ satisfies the
cut condition, so does $(G,H,c,D)$.  For any multiflow solving the
instance $(G,H,c,D)$ with congestion $\gamma$, we build a multiflow
solving the instance $(G',H',c',D')$ with the same congestion $\gamma$,
by sending on each edge of $G'$ the same flow as on the corresponding
edge in $G$. Therefore, the minimum congestion for $(G,H)$ cannot be
less than the minimum congestion for $(G',H')$. And so, a pair $(G,H)$
cannot be cut-sufficient if it has as a minor a pair $(G',H')$ that is
not cut-sufficient.

We now prove the ``if'' direction. Suppose the pair $(G,H)$ has flow-cut
gap more than $1$. By Lemma~\ref{lem:simple}, we may
assume that $(G,H)$ is simple. 
For a demand $(u,v)$, a \emph{bubble} for $(u,v)$ is a central set defining a tight
cut, but containing neither $u$ nor $v$. The set $\mathcal{P}[u,v]$ (of
paths in $G$ between $u$ and $v$) is {\em covered by bubbles} if every
path in it crosses a bubble at least once. From
Theorem~\ref{thm:bubble}, and a parity argument, we get the
following:
\begin{observation}\label{obs:odd-cross}
  If $P\in\mathcal{P}[u,v]$ does not cross any bubble, then $P$
  crosses some tight cut $t > 1$ times, where $t$ is odd.
  \qed
\end{observation}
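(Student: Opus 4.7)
The plan is to combine the contrapositive of Theorem~\ref{thm:bubble} with a simple parity argument. Because $(G,H)$ is a simple pair with flow-cut gap strictly greater than $1$, Theorem~\ref{thm:bubble} in contrapositive form tells us that the given path $P\in\mathcal{P}[u,v]$ must cross \emph{some} tight cut at least twice. I would strengthen this observation slightly: the only cuts $C$ that actually enter the LP computation inside the proof of Theorem~\ref{thm:bubble} are those with $x_C^*>0$, and by the convention adopted after Lemma~\ref{lem:tight} all such cuts are central. Consequently the offending cut $C$ that $P$ crosses at least twice may be assumed to be a \emph{central} tight cut.

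Let $t$ denote the number of times $P$ crosses $\delta_G(C)$. The standard parity observation for paths says that $t$ is odd if and only if $u$ and $v$ lie on opposite sides of $C$, equivalently iff $(u,v)\in\delta_H(C)$. If $C$ separates $u$ from $v$, then $t$ is odd and $t\ge 2$, so $t\ge 3$, which is precisely the desired conclusion. Otherwise $u$ and $v$ are on the same side of $C$; after possibly replacing $C$ by $V\setminus C$ (which is also central and tight and induces the same edge set of the cut), we may assume $u,v\notin C$. But then $C$ is central, tight, and contains neither $u$ nor $v$, so $C$ is a bubble for $(u,v)$, and since $P$ crosses $C$ this contradicts the hypothesis that $P$ crosses no bubble. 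This case therefore cannot occur, and only the odd-crossing conclusion remains.

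The main obstacle I anticipate is precisely the centrality issue: if one only applies Theorem~\ref{thm:bubble} as literally stated, the tight cut it hands back need not be central, and then the ``$u,v$ on the same side'' case could a priori escape the bubble contradiction by producing a non-central tight cut. The resolution is the strengthening above---inspection of the proof of Theorem~\ref{thm:bubble} shows that its hypothesis only needs to constrain cuts $C$ with $x_C^*>0$, all of which are central by the convention in place, so its contrapositive actually yields a central tight cut crossed more than once. Once that refinement is in place, the parity dichotomy closes the argument in two short cases with no further work.
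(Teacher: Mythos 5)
Your proposal is correct and takes essentially the same route as the paper, which dispenses with the observation in one line as ``Theorem~\ref{thm:bubble} plus a parity argument'': the contrapositive gives a tight cut crossed at least twice, and parity plus the bubble hypothesis forces the odd case. Your refinement that the offending cut may be taken \emph{central} (since the proof of Theorem~\ref{thm:bubble} only constrains cuts with $x_C^*>0$, which are central by the convention after Lemma~\ref{lem:tight}) correctly handles a detail the paper leaves implicit, and is needed for the ``$u,v$ on the same side'' case to yield a genuine bubble.
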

To prove Theorem~\ref{thm:fraction}, we first prove that if there is a
demand $(u,v)$ such that $\mathcal{P}[u,v]$ is covered by
bubbles, then the instance must contain an odd spindle as a minor
(Lemma~\ref{lem:bubbles}). We then prove that there must be such a
demand (Lemma~\ref{lem:existbubbles}).

%

\subsection{Coverage by bubbles creates an odd spindle minor}

\begin{lemma}
\label{lem:bubbles}
If there is a demand $(u,v)$ such that $\mathcal{P}[u,v]$ is
covered by bubbles, then the instance must contain an 
odd spindle as a minor.
\end{lemma}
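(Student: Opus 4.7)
My plan is to extract from the bubble-covering hypothesis a pairwise-disjoint family of bubbles $B_1,\dots,B_p$ covering $\mathcal P[u,v]$, then show that contracting each $B_i$ together with the remaining $u$- and $v$-sides yields a $K_{2,p}$ minor of $G$, and finally argue that the induced demands on the contracted graph form a single $p$-cycle on the $b_i$'s with $p$ odd, so that the result, together with the $(u,v)$ demand, is an odd spindle.

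For the disjoint-cover step I would use Lemma~\ref{lem:easy1}. Given two tight bubbles $B,B'$, identity $\langle c\rangle$ reads $\sigma(B\cup B')+\sigma(B\cap B') = -2\sigma(B\setminus B',B'\setminus B)$ and identity $\langle d\rangle$ reads $\sigma(B\setminus B')+\sigma(B'\setminus B) = -2\sigma(B\cap B',V\setminus(B\cup B'))$. The nonnegativity of every cut surplus squeezes these identities so that a crossing pair can be replaced by its union and intersection (or by $B\setminus B'$ and $B'\setminus B$), keeping tightness and the property of avoiding $\{u,v\}$; by Lemma~\ref{lem:tight} I may stay within central cuts throughout. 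Iterating reduces the covering family to a laminar one and then to an antichain, producing the pairwise-disjoint family $\{B_1,\dots,B_p\}$ with $p\ge 2$, still covering $\mathcal P[u,v]$.

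For the $K_{2,p}$ extraction, let $G_0 = G-\bigcup_i \delta_G(B_i)$. Since the $B_i$ block every $u$-$v$ path, $u$ and $v$ lie in distinct components of $G_0$, and by minimality of the cover the only components of $G_0$ are the $u$-component $U$, the $v$-component $V'$, and each $B_i$ individually. Contracting $U\to u$, $V'\to v$, and each $B_i\to b_i$ yields a series-parallel graph $G'$ on $\{u,v,b_1,\dots,b_p\}$. Biconnectivity of $G$ forces each $b_i$ to be joined to both $u$ and $v$; the absence of a $K_4$ minor rules out direct $(u,v)$ or $(b_i,b_j)$ supply edges, since each such edge together with two bubble-paths through distinct $b_k$'s would exhibit a $K_4$. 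Hence $G'\cong K_{2,p}$.

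For the cycle on $H$ and the parity, simplicity of $(G,H)$ (all $D_i^*>0$) together with tightness $\sigma(B_i)=0$ forces each $b_i$ to have positive incident demand in the contracted $H'$; using identity $\langle e\rangle$ on the disjoint tight sets $B_1,\dots,B_p$, $\sigma(\bigcup B_i)=-2\sum_{i<j}\sigma(B_i,B_j)$, combined with the minimality of the covering family, I expect the demands on $\{b_1,\dots,b_p\}$ to aggregate to exactly a $2$-regular graph, which minimality refines to a single $p$-cycle. The demand $(u,v)$ survives contraction unchanged. For the parity, if $p$ were even, the cut $\{u\}\cup\{b_i:i\text{ even}\}$ of $G'$ would have capacity matching only half the contributions of each bubble while its demand would contain every cycle edge together with the $(u,v)$ edge, producing a cut-condition violation; so $p$ must be odd, and nontriviality yields $p\ge 3$. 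The main obstacle I anticipate is the uncrossing step: when identities $\langle c\rangle$ and $\langle d\rangle$ admit nonzero cross-surpluses, one must additionally invoke the series-parallel hypothesis (Lemma~\ref{lem:cycle} and the absence of $K_4$ minors) to force genuine disjointness of the covering bubbles, and a careful bookkeeping here is the crux of the proof.
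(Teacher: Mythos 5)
Your plan hinges on an uncrossing step that does not go through. The surplus identities of Lemma~\ref{lem:easy1} have cross terms $-2\sigma(A\setminus B,B\setminus A)$ and $-2\sigma(A\cap B,V\setminus(A\cup B))$ whose signs are \emph{not} controlled: $\sigma(X,Y)$ is capacity minus demand between $X$ and $Y$ and can be negative, so tightness of $A$ and $B$ does not let you replace a crossing pair by its union/intersection (or by the two differences) while preserving tightness. Worse, a pairwise-disjoint covering family need not exist at all: when a minimal covering family has exactly two bubbles $A,B$, the paper shows (via Lemma~\ref{lem:easy1}$\langle d\rangle$ applied to $A\cup R$ and $B\cup R$, where $R$ is the component of $V\setminus(A\cup B)$ containing $v$) that disjointness of $A$ and $B$ would \emph{violate} the cut condition, so $A$ and $B$ must intersect. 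In that case the relevant $K_{2,m}$ structure is entirely different from your picture: the hubs are $A\setminus B$ and $B\setminus A$ (shown central via a $2$-vertex-cut argument), and $u,v$ sit among the degree-$2$ vertices, which are the components of $A\cap B$ and of $V\setminus(A\cup B)$. Your construction, in which each bubble contracts to a degree-$2$ vertex and $u,v$ are the hubs, simply does not cover this case. (Even in the case of three or more bubbles, their disjointness is established not by uncrossing but by a $K_4$-minor argument exploiting minimality of the family and a third $u$--$v$ path.)

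The second gap is the endgame. After contracting to a $K_{2,p}$ supply graph, the induced demand graph is whatever $H$ projects to; there is no reason it should be $2$-regular, let alone a single $p$-cycle, and no parity argument on a hand-picked cut will conjure one. What the paper does instead is purely existential: it verifies that the contracted instance still satisfies the cut condition but is unroutable (each degree-$2$ vertex comes from a tight set, so its two supply edges are saturated by its own incident demands and no flow can pass from the $u$-side to the $v$-side, yet a $(u,v)$ demand remains), and then invokes Theorem~\ref{thm:k2m} --- the known characterization for $K_{2,m}$ supply graphs (possibly with an extra hub--hub edge, which is needed in the two-bubble case) --- to conclude that an odd spindle minor exists. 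Trying to exhibit the spindle directly, as you propose, amounts to reproving that theorem; without citing it, your argument that ``minimality refines the demands to a single odd cycle'' is an unsupported leap, and it is exactly where the difficulty of the lemma lives.
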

\begin{proof}
Let $F_{u,v}$ be a minimal family of bubbles covering all simple paths
from $u$ to $v$. We first claim that $|F_{u,v}| \ge 2$. Indeed, if
$F_{u,v} = \{B\}$ for a bubble $B$, then the vertices $u$ and $v$ are in
different connected components of $V\setminus B$. This contradicts the
fact that $B$ is central.  We now distinguish the following two cases:
$|F_{u,v}|\geq 3$, or $|F_{u,v}|=2$. The proof for each case proceeds
using a sequence of claims.
\bigskip

\noindent\textsc{Case 1 of Lemma~\ref{lem:bubbles}:} $|F_{u,v}| \ge 3$.

\begin{claim} If $|F_{u,v}|\geq 3$, then the bubbles in $F_{u,v}$
  are disjoint, and there is no edge in $G$ going from one bubble to 
  another.
\end{claim}
\begin{proof}
  For each bubble in $F_{u,v}$, there is a path crossing it that does
  not cross any other bubble in $F_{u,v}$ (otherwise we could remove
  that bubble and $F_{u,v}$ would not be minimal). Suppose bubbles $A$
  and $B$ intersect. Let $P_A$, $P_B$ be the paths through $A$ and $B$
  respectively. Consider $p\in A\cap B$. Since $A$ and $B$ are both
  connected, there is a path in $A$ from a node in $P_A\cap A$ to $p$
  and a path in $B$ from $P_B \cap B$ to $p$. This creates a $K_4$ minor
  with any third path from $u$ to $v$, which exists since
  $|F_{u,v}|\geq 3$, contradicting the series-parallelness of $G$; and so
  $A$ and $B$ do not intersect. If there is an edge connecting the
  bubbles $A$ and $B$, there is again a path in $A\cup B$ connecting
  $P_A$ to $P_B$, which again creates a $K_4$ minor.
\end{proof}

We contract every edge that does not cross one of the cuts defined by
the bubbles in $F_{u,v}$. We get one vertex $f_i$ for each bubble, one
vertex $u'$ for the part of the graph reachable from $u$ without
crossing the bubbles, and one vertex $v'$ for the part reachable from
$v$ without crossing the bubbles.  We prove there are no other
vertices.

\begin{claim}\label{clm:contract-to-k2m}
  The contracted supply graph is a $K_{2,m}$.
\end{claim}
\begin{proof}
  We know that in the uncontracted graph, there is a path connecting
  $u$ to $v$ through each bubble, disjoint from the other bubbles.
  And so, there is an edge from $u'$ and $v'$ to each $f_i$, and these
  vertices induce a $K_{2,m}$ subgraph. Suppose there is another vertex
  $x$. The vertex $x$ cannot be adjacent to $u'$ or $v'$, because the
  edge between them would have been contracted. It cannot be connected
  to two different vertices $f_i$ and $f_j$, because this would create
  a $K_4$. So it is connected to a single $f_i$, and it is a leaf. But
  the set $\{f_i\}$ defines a tight cut, and since $x$ is a leaf, the
  set $\{f_i,x\}$ would define a cut with a smaller surplus than
  $\{f_i\}$, which is not possible. So $x$ does not exist.
\end{proof}

The contracted instance has a $K_{2,m}$ supply graph. Each vertex $f_i$
of degree $2$ defines a tight cut, since it is the result of contracting
a tight set. So in any fractional solution to the contracted instance,
the two supply edges leaving $f_i$ have just enough capacity to route
the demands incident to $f_i$, and no flow can go from $u'$ to $v'$
through $f_i$. Since there is a demand from $u'$ to $v'$, this means
that the instance does not have a solution, and therefore, by
Theorem~\ref{thm:k2m}, it contains an odd spindle as a minor.

This finishes the case $|F_{u,v}| \ge 3$.
\bigskip

\noindent\textsc{Case 2 of Lemma~\ref{lem:bubbles}:} $|F_{u,v}| = 2$.
\bigskip

Suppose $F_{u,v} = \{A,B\}$, for distinct bubbles $A$, $B$.  By a
sequence of claims, we prove that if we contract every edge that does
not cross a bubble in $F_{u,v}$, we get an instance with a $K_{2,m}$
supply graph, satisfying the cut condition, but unroutable. Appealing to
Theorem~\ref{thm:k2m} again, we conclude that the instance contains an
odd spindle as minor.

\begin{claim}\label{clm:2inter}
  If $A$ and $B$ are two bubbles covering every simple path from $u$
  to $v$, then $A$ and $B$ intersect.
\end{claim}
\begin{proof}
  Let $R$ be the connected component of $V\setminus(A\cup B)$
  containing $v$. Let $X=A \cup R$ and $Y=B \cup R$. 
  Suppose $A$ and $B$ are disjoint. Then $X\cap Y = R$.

  Now $\sigma(X\setminus Y) = \sigma(A) = 0$, and $\sigma(Y\setminus X) = \sigma(B) =
  0$. By Lemma~\ref{lem:easy1}$\langle d \rangle$, we have $\sigma(X)+\sigma(Y) = \sigma(X\setminus Y) +
  \sigma(Y\setminus X) + 2\sigma(X\cap Y,V\setminus (X\cup Y))<0$, because
  $\sigma(X\cap Y,V\setminus (X\cup Y))$ includes the demand $(u,v)$
  and $X\cap Y = R$ which is disconnected from the rest of the graph
  by $A$ and $B$. However, by the cut condition, $\sigma(X) \ge 0$ and
  $\sigma(Y) \ge 0$, a contradiction. Therefore $A$ and $B$ intersect.
\end{proof}

\begin{claim}\label{clm:2cut}
  There are two vertices, taken from $A\setminus B$ and
  $B\setminus A$ respectively, that form a $2$-vertex-cut of $G$,
  separating it into at least three connected components, with $u$ and
  $v$ in different components.
\end{claim}
\begin{proof}
  Let $U$ be the connected component (in $G$) of $V\setminus (A\cup B)$
  containing $u$, and let $R$ be the connected component of
  $V\setminus(A\cup B)$ containing $v$. Since $A$ is central, there is
  a path from $u$ to $v$ outside $A$ which goes through $B\setminus
  A$. Symmetrically, there is a path from $u$ to $v$ outside $B$ which
  goes through $A\setminus B$. These two paths form a cycle $C$ going
  through $U$, $A\setminus B$, $R$ and $B\setminus A$ in order. By
  Claim~\ref{clm:2inter}, there is a vertex $x\in A\cap B$. Since $x$
  is in $A$, there is a path $P_a$ in $A$ from $x$ to $C\cap
  (A\setminus B)$. Let $a$ be the endpoint of $P_a$ on $C$. Since $x$
  is in $B$, there is a path $P_b$ in $B$ from $x$ to $C\cap
  (B\setminus A)$. Let $b$ be the endpoint of $P_b$ on $C$. The paths
  $P_a$ and $P_b$ only intersect in $A\cap B$. So there are three
  vertex-disjoint paths in $G$ from $a$ to $b$, one through $U$, one 
  through $R$, and one through $A\cap B$. So $(a,b)$ must be a 
  $2$-vertex-cut, for otherwise $G$ would have a $K_4$ minor.
\end{proof}

\begin{claim}\label{clm:abcentral}
  The sets $A\setminus B$ and $B\setminus A$ are both central.
\end{claim}
\begin{proof}
  By Claim~\ref{clm:2cut}, $A\setminus B$ and $B\setminus A$ contain a
  pair of vertices that is a vertex $2$-cut separating $u$ from $v$.
  We use Lemma~2.4 of \cite{Chekuri09}, which proves that in a
  series-parallel graph, this implies that $A\setminus B$ and $B\setminus A$
  are both central.
\end{proof}

\begin{claim}\label{clm:contract-ab}
  If we contract every edge of $G$ that is neither in $\delta_G(A)$,
  nor in $\delta_G(B)$, and merge parallel edges, we get a $K_{2,m}$,
  with possibly one extra supply edge connecting the two vertices not
  of degree $2$.
\end{claim}
\clearpage
\begin{proof}%
\vspace{-1.15\baselineskip} 
\parpic[r]{
  \begin{minipage}{2in}
  \begin{center}
\psset{unit=0.8cm,arrows=-,shortput=nab,linewidth=0.5pt,arrowsize=2pt 5,labelsep=1.5pt}
\pspicture(0,0.3)(4,4.3)
\psframe[framearc=.5](1.65,0)(2.35,1.7)
\psframe[framearc=.5](1.65,2.3)(2.35,4)
\pscircle(2,0.35){0.2}
\pscircle(2,0.85){0.2}
\pscircle(2,1.35){0.2}
\pscircle(2,2.65){0.2}
\pscircle(2,3.15){0.2}
\pscircle(2,3.65){0.2}
\pscircle(0,2){0.5}
\pscircle(4,2){0.5}
\dotnode[linecolor=white,linewidth=5pt](0,2){a}
\dotnode[linecolor=white,linewidth=5pt](4,2){b}
\dotnode[linecolor=white](2,0.35){k1}
\dotnode[linecolor=white](2,0.85){k2}
\dotnode[linecolor=white](2,1.35){k3}
\dotnode[linecolor=white](2,2.65){k4}
\dotnode[linecolor=white](2,3.15){k5}
\dotnode[linecolor=white](2,3.65){k6}
\ncline{a}{k1}
\ncline{a}{k2}
\ncline{a}{k3}
\ncline{a}{k4}
\ncline{a}{k5}
\ncline{a}{k6}
\ncline{b}{k1}
\ncline{b}{k2}
\ncline{b}{k3}
\ncline{b}{k4}
\ncline{b}{k5}
\ncline{b}{k6}
\ncline[linestyle=dashed]{a}{b}
\rput(0,2.75){$A\setminus B$}
\rput(4,2.75){$B\setminus A$}
\rput[lB](2.5,0){$A\cap B$}
\rput[lB](2.5,4){$V\setminus (A\cup B)$}
\endpspicture
\end{center}
  \end{minipage}
}%
  \hspace{3ex}
  Since $A\setminus B$ is central, it is connected. Similarly,
  $B\setminus A$ is connected. The rest of the graph is composed of
  $A\cap B$, which has at least one connected component by
  Claim~\ref{clm:2inter}, and $V\setminus(A\cup B)$, which has at
  least two connected components containing $u$ and $v$
  respectively. There is an edge connecting $A\setminus B$ to each
  connected component of $A\cap B$ (because both are in $A$, which is
  central), and there is an edge connecting $A\setminus B$ to each
  connected component of $V\setminus (A\cup B)$ (because neither is in
  $B$, which is central). Similarly, there is an edge connecting
  $B\setminus A$ to each connected component of $A\cap B$ and
  $V\setminus (A\cup B)$. This implies that for each connected
  component of $A\cap B$ and $V\setminus (A\cup B)$, there is a path
  connecting $A\setminus B$ to $B\setminus A$ through that
  component. As a consequence, there is never an edge going from a
  connected component of $A\cap B$ to a connected component of
  $V\setminus (A\cup B)$, because this would create a $K_4$ with
  $A\setminus B$ and $B\setminus A$, which are also connected through
  at least another connected component of $V\setminus (A\cup B)$.
\end{proof}

Let us perform the contraction described in Claim~\ref{clm:contract-ab}.
After the contraction, the endpoints of the demand edge $(u,v)$ are
still separated by the sets $A$ and $B$, which are still tight. So the
contracted instance is not routable, even though it satisfies the cut
condition. And so, the contracted pair of graphs is not
cut-sufficient. But since the contracted supply graph is a $K_{2,m}$,
Theorem~\ref{thm:k2m} implies that the contracted pair contains an odd
spindle as minor.  Therefore, so does the original pair $(G,H)$. We
are now done with the case $|F_{u,v}| = 2$.
\hfill\hfill\hfill\hfill\hfill\hfill\emph{This completes the proof of
  Lemma~\ref{lem:bubbles}.}
\end{proof}

\subsection{Identifying a bubble-covered demand}

To finish the proof of Theorem~\ref{thm:fraction}, we must show that the
conditions of Lemma~\ref{lem:bubbles} are satisfied, so that our
instance $(G,H)$ {\em does} have an odd spindle as a minor. The next
lemma shows precisely this. The proof of this lemma uses the notions of
{\em split pairs} and {\em bracketing}, defined in
Section~\ref{sec:graph-props}.

\begin{lemma}
\label{lem:existbubbles}
If a simple pair $(G,H)$ has flow-cut gap greater than $1$, then there
is a demand $(u,v)$ such that $\mathcal{P}[u,v]$ is covered by bubbles.
\end{lemma}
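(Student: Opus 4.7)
We argue the contrapositive. Suppose $(G,H)$ is a simple pair and that, for every demand $i=(u,v)\in F$, some path in $\mathcal{P}[u,v]$ avoids every bubble for $(u,v)$; we aim to deduce that $\gamma^*\le 1$, contradicting the hypothesis. Fix a general solution $(c^*,D^*,l^*,d^*,f^*,x^*)$; by Lemmas~\ref{lem:cs} and~\ref{lem:tight}, the support of $x^*$ consists of central tight cuts. For a fixed demand $(u,v)$ each such $\delta(C)$ is either a bubble for $(u,v)$, crossed an even number of times by every $u$-$v$ path, or a separating cut, crossed an odd number of times. Thus, by Theorem~\ref{thm:bubble}, it suffices to produce one demand $(u,v)$ together with a single path $P\in\mathcal{P}[u,v]$ crossing every tight central cut at most once---equivalently, missing every bubble and crossing each separating tight cut exactly once.

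The plan is to pick an \emph{innermost} demand in the series-parallel structure of $G$ and then to straighten its bubble-avoiding path. Fix a split pair $(s,t)$ of $G$ and orient $G$ acyclically via Lemma~\ref{lem:orient}. Choose a demand $(u,v)$ to be innermost in the following sense: if $(u,v)$ is compliant, require that no other demand has endpoints strictly bracketed by $(u,v)$; if $(u,v)$ is non-compliant with terminals $(w,z)$ from Lemma~\ref{lem:terminals}, require that no other demand has endpoints strictly bracketed by $(w,z)$. The effect is that no other demand is confined inside the bracket associated with $(u,v)$, so any rerouting that stays within this bracket cannot introduce a crossing of a bubble for any demand other than $(u,v)$ itself.

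Let $P\in\mathcal{P}[u,v]$ be a bubble-avoiding path, guaranteed by hypothesis. If $P$ crosses some separating tight central cut $\delta(C)$ three or more times, Lemma~\ref{lem:cycle} forbids any simple subcycle of $P$ from meeting $\delta(C)$ more than twice, localizing the excess crossings to a single arc of $P$ that exits $C$ and later re-enters. Since $\delta(C)$ is central, the side of $\delta(C)$ containing the endpoints of this arc is connected, so the arc can be rerouted through that side; by the innermost property the reroute stays inside the bracket of $(u,v)$, and any bubble it could meet would already have been one of the bubbles that the original $P$ had to avoid---so no new bubble is crossed. Iterating this straightening finitely many times yields a bubble-avoiding path crossing each separating tight cut exactly once, and Theorem~\ref{thm:bubble} then gives $\gamma^*\le 1$, the desired contradiction.

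The main obstacle is the straightening step: we must verify rigorously that each reroute through $C$ introduces no new bubble crossing and that the process terminates. This will combine Lemma~\ref{lem:cycle}, the $2$-vertex-cut structure supplied by Lemma~\ref{lem:terminals}, and the minimality of $(u,v)$ under bracketing, all of which rely critically on $G$ being series-parallel. Without this structure a reroute that removes excess crossings of one tight cut could easily create a new crossing of another tight cut---in particular a bubble---undoing the progress.
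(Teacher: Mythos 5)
There is a genuine gap, and it sits exactly at the step you yourself flag as ``the main obstacle'': the straightening/rerouting argument. Your claim that the rerouted arc crosses no new bubble because ``any bubble it could meet would already have been one of the bubbles that the original $P$ had to avoid'' is a non sequitur: avoiding all bubbles is a property of the particular path $P$, not of the demand, and the rerouted arc is a different path passing through a region of $C$ that $P$ never visited. Nothing prevents it from crossing a bubble, or from crossing some other separating tight cut twice more, so the local reroute need not make progress at all. Likewise, your minimality-under-bracketing choice of $(u,v)$ is invoked to ``prevent new bubble crossings,'' but bracketing says nothing about where tight cuts lie relative to the reroute; it constrains demands, not cuts. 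In fact the paper's proof shows the straightening is in general \emph{blocked}: with $P$ chosen to minimize the total excess crossing count $m_j$, Claim~\ref{clm:ub0} proves that \emph{every} path from $u$ to $b$ inside $C$ crosses some tight cut at least twice more than $P$ does, so no reroute of the kind you describe exists. Your proof cannot be completed as stated because the statement you would need for the straightening step is false under the very circumstances where it matters.

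The paper's resolution is structurally different from what you propose: rather than straightening the path, it exploits the blockage. Starting from a \emph{non-compliant} demand $(u,v)$ (whose existence needs Theorem~3.1 of~\cite{Chekuri09} --- your innermost choice over all demands, compliant or not, skips this) with terminals minimal under bracketing, and a minimal-$m_j$ bubble-free path crossing a tight cut $C$ at least three times, it builds the $2$-vertex-cut $(a,b)$ and then, via the surplus identities of Lemma~\ref{lem:easy1} (Claims~\ref{clm:ub2} and~\ref{clm:ub1}), shows that the obstructing tight cuts force the existence of a \emph{demand edge} $i$ spanning two disjoint paths inside $S_u^*$ (or $S_v^*$). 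That demand is then shown to be non-compliant with terminals strictly bracketed by the terminals of $(u,v)$, contradicting minimality. So the contradiction comes from discovering a new, more deeply nested non-compliant demand --- a mechanism entirely absent from your proposal --- and not from producing a path that crosses every tight cut at most once, which is what Theorem~\ref{thm:bubble} would need and what the blocked reroute prevents you from obtaining.
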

\begin{proof}
  We choose an arbitrary split pair in graph $G$, and orient $G$
  accordingly. By Theorem~3.1 of~\cite{Chekuri09}, there must be at
  least one non-compliant demand. We then choose a non-compliant
  demand $(u,v)$ such that its pair of terminals does not strictly
  bracket the pair of terminals of any other non-compliant
  demand. This is always possible, since in the set of pairs of
  terminals, the bracket relation is a partial order and must have a
  minimal pair.

  Suppose $\mathcal{P}[u,v]$ is {\em not} covered by bubbles. We shall
  demonstrate a contradiction with our choice of $(u,v)$.  By
  Observation~\ref{obs:odd-cross}, a path in $\mathcal{P}[u,v]$ not
  covered by a bubble must cross some tight cut an odd number of times,
  more than once.



\parpic[r]{
  \begin{minipage}{3.3cm}
\begin{center}
\psset{unit=0.8cm,arrows=-,shortput=nab,linewidth=0.5pt,arrowsize=2pt 5,labelsep=2pt}
\pspicture(-1,-0.1)(3.3,3.8)
\dotnode(0,3){u}
\dotnode(2,2.5){a}
\dotnode(0,0.5){b}
\dotnode(2,0){v}
\nput{-135}{u}{$u$}
\nput{45}{a}{$a$}
\nput{-135}{b}{$b$}
\nput{45}{v}{$v'$}
\ncline[linewidth=1pt]{u}{a}
\ncline[linewidth=1pt]{b}{a}
\ncline[linewidth=1pt]{b}{v}
\ncarc{b}{u}
\ncarc{a}{v}
\psline[linestyle=dotted,linewidth=1pt](1,0)(1,3.8)
\uput[180](1,3.8){$C$}
\psframe[linestyle=dotted,linewidth=1pt,framearc=.5](-1.2,3.2)(0.8,2.5)
\uput[0](-1.2,2.8){$S_1$}
\psframe[linestyle=dotted,linewidth=1pt,framearc=.5](-1,1.7)(0.8,-0.1)
\uput[0](-1,1.3){$S_3$}
\psframe[linestyle=dotted,linewidth=1pt,framearc=.5](3.2,-0.2)(1.2,0.5)
\uput[180](3.2,0.2){$S_4$}
\psframe[linestyle=dotted,linewidth=1pt,framearc=.5](3,1.3)(1.2,3)
\uput[180](3,1.7){$S_2$}
\uput[0](1.2,1.3){$P$}
\endpspicture
\end{center}
  \end{minipage}
}%
Let $P_1,\ldots,P_k$ be the paths in $\mathcal{P}[u,v]$
not covered by any bubble.  For each
$j \in \{1,\ldots,k\}$, let $\mathcal{C}_j$ be the set of tight cuts that
$P_j$ crosses an odd number of times, three or more, and let $m_j$ be
the sum over all cuts $C' \in \mathcal{C}_j$ of the number of times
that $P_j$ crosses $C'$.  By Observation~\ref{obs:odd-cross}, each
$\mathcal{C}_j$ is nonempty.  We choose a path $P = P_j$ such that
$m_j$ is minimal. Let $C$ be a cut in $\mathcal{C}_j$ (therefore $P$
crosses $C$ at least three times), and let $S_1$, $S_2$, $S_3$ and
$S_4$ be the first four connected components in order of
$P\setminus\delta_G(C)$, with $S_1,S_3\subseteq C$ (see figure).
Since $C$ is central, $S_1$ and $S_3$ are connected by a path $P_{13}$
inside of $C$, and $S_2$ and $S_4$ are connected by a path $P_{24}$
outside of $C$. Let $a$ be the endpoint of $P_{24}$ in $S_2$, let $b$
be the endpoint of $P_{13}$ in $S_3$, and let $v'$ be the endpoint of
$P_{24}$ in $S_4$. Note that there are three vertex-disjoint paths
from $a$ to $b$, and so $(a,b)$ is a $2$-vertex-cut separating $u$
from $v'$, for otherwise $G$ would have a $K_4$ minor.

The proof proceeds using a sequence of claims. The following arguments
use $C$, $u$ and $b$, but apply symmetrically to $V\setminus C$, $v'$
and $a$.
\begin{claim}\label{clm:ub0}
  Any path from $u$ to $b$ inside $C$ must cross some tight cut at
  least twice more than $P$.
\end{claim}
\begin{proof}
  If a path from $u$ to $b$ crosses no tight cut more than once, then we
  shortcut $P$ with that path, and get a simple path $P'$ from $u$ to
  $v$ that does not cross any bubble. Therefore $P' = P_\ell$ for some
  $\ell \in \{1,\ldots,k\}$. Now $P_\ell$ crosses $\delta_G(C)$ twice less than
  $P$, and does not cross any other tight cut more times than $P$;
  therefore $m_\ell < m_j$, contradicting the minimality of $m_j$.
\end{proof}

Recall
that $(a,b)$ is a $2$-vertex-cut separating $u$ from $v'$. Let $S_u$
and $S_v$ be the connected components of $V\setminus\{a,b\}$
containing $u$ and $v'$, and let $S_u^*=S_u\cup\{a,b\}$ and
$S_v^*=S_v\cup\{a,b\}$.
For subsets $S,C\subseteq V$ and vertices $u,b\in C$, we say that
\emph{$S$ separates $u$ from $b$ inside $C$} if $u$ and $b$ are in two
different connected components of $C\setminus S$.

\begin{claim}\label{clm:new}
  There is a $2$-vertex-cut $(x,y)$ in $S_u^*$, with not both $x$ and
  $y$ in $\{a,b\}$, with two vertex-disjoint paths $Q_1$, $Q_2$ from
  $x$ to $y$, with $Q_1\setminus\{x,y\}$ and $Q_2\setminus\{x,y\}$ not
  containing $a$ or $b$, and a demand $i$ from $Q_1\setminus\{x,y\}$
  to $Q_2\setminus\{x,y\}$.
\end{claim}
\begin{proof}
  Since $C$ is central, $u$ and $b$ are connected inside $C$, and by
  Claim~\ref{clm:ub0}, any path from $u$ to $b$ crosses some tight cut
  at least twice more than $P$. Either there is a single tight cut
  crossed by all such paths, or there is not. We prove these separate
  cases in Claim~\ref{clm:ub2} and Claim~\ref{clm:ub1} respectively.
\end{proof}

\begin{claim}\label{clm:ub2}
  If all paths from $u$ to $b$ inside $C$ cross twice the same tight
  cut, then there is a vertex $x\in Q$ separating $u$ from $b$ inside
  $C$, and a vertex $y\in P\setminus C$, such that $x$ and $y$ are
  connected by two vertex-disjoint paths $Q_1$ and $Q_2$ that do not
  contain $b$, with a demand edge going from some vertex in $Q_1
  \setminus\{x,y\}$ to some vertex in $Q_2 \setminus\{x,y\}$. Either
  $y$ is $a$, or $y$ is in the connected component of
  $V\setminus\{a,b\}$ that contains $u$.
\end{claim}
\begin{proof}%
\vspace{-1.15\baselineskip} 
\parpic[r]{
  \begin{minipage}{2.6cm}
\begin{center}
\psset{unit=0.8cm,arrows=-,shortput=nab,linewidth=0.5pt,arrowsize=2pt 5,labelsep=2pt}
\pspicture(-0.7,-0.2)(2,4.4)
\dotnode(0,3.5){u}
\dotnode(2,2){a}
\dotnode(0,0.5){b}
\dotnode(2,0){v}
\dotnode(0,2){x}
\dotnode(0.9,2){p}
\nput{135}{u}{$u$}
\nput{45}{a}{$y$}
\nput{-135}{b}{$b$}
\nput{180}{x}{$x$}
\nput{-45}{p}{$p$}
\nput{-45}{v}{$v'$}
\ncline{u}{a}
\ncline{u}{x}
\ncline{b}{x}
\ncline{p}{x}
\ncline{p}{a}
\ncline{b}{a}
\ncline{b}{v}
\psframe[linestyle=dotted,linewidth=1pt,framearc=.5](-1.1,1.65)(1.5,2.35)
\psline[linestyle=dotted,linewidth=1pt](0.6,0)(0.6,4)
\uput[180](0.6,4){$C$}
\uput[0](-1.1,2){$S$}
\endpspicture
\end{center}
  \end{minipage}
}%
  \hspace{3ex}
  Let $S$ be the central set defining the tight cut crossed twice by
  all paths, with $S$ containing neither $u$ nor $b$. Since $S$ is not
  crossed by $P$ on the way from $u$ to $b$, $S$ does not contain $a$;
  and since the pair $(a,b)$ is a $2$-vertex-cut separating $u$ from
  $v'$, $S$ does not contain $v'$.

  Let $U$ be the connected component of $C\setminus S$ containing
  $u$. Let $B=C\setminus(S\cup U)$. By Lemma~\ref{lem:easy1}$\langle e \rangle$,
  $\sigma(U)-2\sigma(S,U)=\sigma(S\cup U)-\sigma(S)\geq 0$. Then
  by Lemma~\ref{lem:easy1}$\langle a \rangle$,
  \begin{equation}\label{eq:ub2-1}
    \sigma(U)\geq 2\sigma(S,U)=2\sigma(S\cap C,U)+2\sigma(S\setminus C,U).
  \end{equation}
  Since $P$ is not covered by a bubble, $\sigma(C\setminus
  U)>0$. By Lemma~\ref{lem:easy1}$\langle e \rangle$, since $U\subseteq C$, $\sigma(U)-2\sigma(C\setminus
  U,U)=\sigma(C)-\sigma(C\setminus U)< 0$. Then by Lemma~\ref{lem:easy1}$\langle a \rangle$,
  \begin{equation}\label{eq:ub2-2}
    \sigma(U)< 2\sigma(C\setminus U,U)=2\sigma(S\cap C,U)+2\sigma(B,U)
  \end{equation}
  Subtracting~\eqref{eq:ub2-1}~from~\eqref{eq:ub2-2}, we get that
  $\sigma(S\setminus C,U)<\sigma(B,U)$. Since there is no supply edge
  from $U$ to $B$, $\sigma(B,U)\leq 0$, which proves that there is a
  demand from some vertex $q\in U$ to some vertex $p\in S\setminus
  C$. Since the subpath of $P$ from $u$ to $b$ has vertices outside
  $C$, there is a path $Q'$ connecting $p$ to $P$ outside of $C$. Let
  $y$ be the endpoint of $Q'$ in $P$. Since $Q$ intersects $S$, there
  must be a path $Q''$ connecting $p$ to $Q$ inside $S$. Let $x$ be
  the endpoint of $Q''$ in $Q$. The paths $P$ and $Q$ form a cycle
  containing the vertices $u$, $y$, $b$ and $x$. The paths $Q'$ and
  $Q''$ form a path from $x$ to $y$ disjoint from that cycle, and so
  there are three vertex-disjoint paths from $x$ to $y$, and $(x,y)$
  is a $2$-vertex-cut separating $u$, $p$ and $b$, otherwise there
  would be a $K_4$. So $x$ separates $u$ from $b$ in $C$.

  Recall that there is a demand from $p$ to some vertex $q\in U$. By
  Lemma~\ref{lem:sut}, there is a simple path $Q_1$ from $x$ to $y$
  containing $p$, and a simple path $Q_2$ from $x$ to $y$ containing
  $q$. The paths $Q_1$ and $Q_2$ must be vertex-disjoint; otherwise
  there would be a path from $p$ to $q$ disjoint from $\{x,y\}$, and
  since $U$ is a connected component of $C\setminus S$, a path inside
  $U$ from $q$ to $u$ disjoint from $\{x,y\}$, contradicting the fact
  that $(x,y)$ separate $p$ from $u$. Finally, $(a,b)$ is a
  $2$-vertex-cut, so since there is a path from $u$ to $y$ through $x$
  disjoint from $\{a,b\}$, $y$ cannot be in a different connected
  component of $V\setminus\{a,b\}$ than $u$.
\end{proof}

\begin{claim}\label{clm:ub1}
  If there is no single tight cut crossed twice by all paths
  connecting $u$ and $b$ inside $C$, then there is a demand edge going
  from one of those paths to another.
\end{claim}
\begin{proof}%
\vspace{-1.15\baselineskip} 
\parpic[r]{
  \begin{minipage}{2.6cm}
\begin{center}
\psset{unit=0.8cm,arrows=-,shortput=nab,linewidth=0.5pt,arrowsize=2pt 5,labelsep=2pt}
\pspicture(-0.7,0)(3.3,3.7)
\dotnode(0,3){u}
\dotnode(2,2.5){a}
\dotnode(0,0.5){b}
\dotnode(2,0){v}
\dotnode(-0.5,1.75){x1}
\dotnode(0,1.75){x2}
\dotnode(0.5,1.75){x3}
\nput{135}{u}{$u$}
\nput{45}{a}{$a$}
\nput{-135}{b}{$b$}
\nput{-45}{v}{$v'$}
\ncline{u}{a}
\ncline{u}{x1}
\ncline{u}{x2}
\ncline{u}{x3}
\ncline{b}{x1}
\ncline{b}{x2}
\ncline{b}{x3}
\ncline{b}{a}
\ncline{b}{v}
\psframe[linestyle=dotted,linewidth=1pt,framearc=.5](-1.2,1.4)(0.7,2.1)
\psline[linestyle=dotted,linewidth=1pt](1,0)(1,3.5)
\uput[180](1,3.5){$C$}
\uput[0](-1.2,1.75){$S$}
\endpspicture
\end{center}
  \end{minipage}
}%
  \hspace{3ex}
For every path connecting $u$ to $b$ inside $C$, we choose a tight cut
crossed twice, and we contract all edges of the path that do not cross
that tight cut. Each of the paths now has two edges. Let $S$ denote
the set of vertices in the middle of these paths. There are no supply
edges from a vertex in $S$ to any vertex except $u$ or $b$ because
that would create a $K_4$ minor. Since every path from $u$ to $b$
crosses some central tight cut twice, every vertex in $S$ defines a
bubble separating $u$ from $b$ inside $C$. The supply graph induced by
$u$, $b$ and vertices in $S$ is a $K_{2,m}$.
By assumption, there is no single tight cut crossed twice by all
paths, so $\sigma(S)>0$, even though every vertex in $S$ defines a
tight cut. And so, by Lemma~\ref{lem:easy1}$\langle e \rangle$, there
must exist demands between vertices of $S$.

\end{proof}

\begin{figure}[ht]
\begin{center}
    \psset{unit=0.8cm,xunit=1cm,arrows=-,shortput=nab,linewidth=0.5pt,arrowsize=2pt 5,labelsep=1.5pt}
    \pspicture(0,0.8)(4,3.5)
    \dotnode(1,3.3){x}
    \dotnode(3,3.3){y}
    \dotnode(2,4){d1}
    \dotnode(2,2.6){d2}
    \dotnode(1,0.7){xp}
    \dotnode(3,0.7){yp}
    \dotnode(2,0){d1p}
    \dotnode(2,1.4){d2p}
    \dotnode(4,2){b}
    \dotnode(0,2){a}
    \nput{135}{x}{$x$}
    \nput{45}{y}{$y$}
    \nput{-135}{xp}{$x'$}
    \nput{-45}{yp}{$y'$}
    \nput{-135}{a}{$a$}
    \nput{-45}{b}{$b$}
    \ncarc{a}{x}
    \ncarc{xp}{a}
    \ncarc{y}{b}
    \ncarc{b}{yp}
    \ncarc{x}{d1}
    \ncarc{d1}{y}
    \ncarc{d2}{x}
    \ncarc{y}{d2}
    \ncarc{d1p}{xp}
    \ncarc{yp}{d1p}
    \ncarc{xp}{d2p}
    \ncarc{d2p}{yp}
    \ncline{a}{b}
    \ncline[linestyle=dashed]{d1}{d2}
    \ncline[linestyle=dashed]{d1p}{d2p}
    \uput[0](2,3.3){$i$}
    \uput[45](2.4,3.8){$Q_1$}
    \uput[-45](2.5,2.9){$Q_2$}
    \uput[0](2,0.7){$i'$}
    \psline[linestyle=dotted,linewidth=1pt](-0.2,2.2)(4.4,2.2)
    \uput[90](4.2,2.2){$S_u$}
\endpspicture
\end{center}
\caption{Subgraph showing the relations of the $2$-vertex-cut $(a,b)$,
  $(x,y)$ and $(x',y')$. One of $x$ or $y$ may be $a$ or $b$, but not
  both. One of $x'$ or $y'$ may be $a$ or $b$, but not both. The
  demands $i$ and $i'$ are dashed.}\label{fig:ab}
\end{figure}

Note that Claim~\ref{clm:new} also applies to $S_v^*$, and so there is
in $S_v^*$ a $2$-cut $(x',y')$, with two vertex-disjoint paths from
$x'$ to $y'$, and a demand $i'$ connecting these two paths (see
Figure~\ref{fig:ab}).

Recall that $(s,t)$ is a split pair. Since $(a,b)$ is a $2$-vertex-cut
connected by three disjoint paths, $s$ and $t$ cannot be in different
connected components of $V\setminus\{a,b\}$, because otherwise an
$(s,t)$ edge would create a $K_4$. So at least one of $S_u$ or $S_v$
contains neither $s$ nor $t$.

\begin{claim}
  Suppose $S_u$ contains neither $s$ nor $t$. Then $i$ is
  non-compliant; the pair $(x,y)$ which separates its endpoints is its
  pair of terminals, and this pair of terminals is strictly bracketed
  by the pair $(w,z)$ of terminals of $(u,v)$.
\end{claim}
\begin{proof}
  By Lemma~B.4, for any $v'\in S_u$, there is a simple path from $s$ to
  $t$ containing $v'$, so there is a simple path from $s$ to $t$ that
  goes through $S_u$, and so contains $a$ and $b$. Without loss of
  generality, assume that the path meets $a$ before $b$ on the way from
  $s$ to $t$. Then since the orientation is acyclic, there is no simple
  path from $s$ to $t$ that meets $b$ before $a$.  Since any edge in $G$
  is oriented in the direction it appears on any simple path from $s$ to
  $t$, then any edge in $S_u^*$ is oriented in the direction it appears
  on any simple path from $a$ to $b$. So $a$ is the unique source in
  $S_u^*$, and $b$ the unique sink. Any simple path from $a$ to $b$
  through an endpoint of $i$ contains $x$ and $y$, and does not contain
  the other endpoint of $i$. So $i$ is a non-compliant demand, and
  $(x,y)$ is its pair of terminals, which is bracketed by $(a,b)$. Note
  that $(x,y)$ is not the same as $(a,b)$.

\begin{figure}[ht]
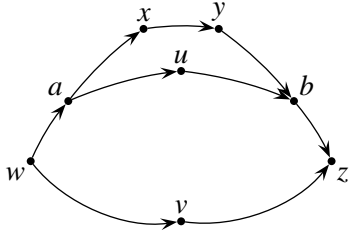

\begin{center}
    \psset{unit=0.8cm,xunit=1cm,arrows=->,shortput=nab,linewidth=0.5pt,arrowsize=2pt 5,labelsep=1.5pt}
    \pspicture(0,0.8)(4,3.7)
    \dotnode(0,1){w}
    \dotnode(4,1){z}
    \dotnode(0.5,2){a}
    \dotnode(3.5,2){b}
    \dotnode(2,2.5){u}
    \dotnode(2,0){v}
    \dotnode(1.5,3.2){x}
    \dotnode(2.5,3.2){y}
    \nput{-135}{w}{$w$}
    \nput{-45}{z}{$z$}
    \nput{135}{a}{$a$}
    \nput{45}{b}{$b$}
    \nput{90}{u}{$u$}
    \nput{90}{v}{$v$}
    \nput{90}{x}{$x$}
    \nput{90}{y}{$y$}
    \ncarc{w}{a}
    \ncarc{a}{u}
    \ncarc{u}{b}
    \ncarc{b}{z}
    \ncarc{a}{x}
    \ncarc{x}{y}
    \ncarc{y}{b}
    \ncarc[arcangle=-30]{w}{v}
    \ncarc[arcangle=-30]{v}{z}
\endpspicture
\end{center}
\caption{Relative positions of $w$, $z$, $a$, $b$, $x$, $y$, $u$ and
  $v$. It is possible that $w=a$, or $z=b$. The vertex $u$ may be on
  the path from $a$ to $b$ containing $x$ and $y$, but does not need
  to be.}\label{fig:wz}
\end{figure}

We prove that $(a,b)$ is bracketed by the pair $(w,z)$ of terminals of
$(u,v)$, which means that $(x,y)$ is bracketed by $(w,z)$. By
Lemma~\ref{lem:terminals}, any cycle $C$ containing $u$ and $v$ also
contains the terminals $w$ and $z$ of the demand $(u,v)$, and is
composed of two oriented paths from one terminal to the other, say
from $w$ to $z$, and $w$ is the unique source of $C$ and $z$ its
unique sink. The cycle $C$ must contain $a$ and $b$ since $(a,b)$ is a
$2$-vertex-cut separating $u$ from $v$. Since any simple path from $a$
to $b$ in $S_u^*$ is oriented from $a$ to $b$, the part of $C$ in
$S_u^*$ is oriented from $a$ to $b$. So neither $w$ nor $z$ is in
$S_u$, because then they would not be source or sink of $C$. So $C$
contains a path $Q$ from $w$ to $z$ through $u$, and $Q$ contains $a$
and $b$; so $(a,b)$ is bracketed by $(w,z)$. So $(x,y)$ is bracketed
by $(w,z)$.
\end{proof}
Since at least one of $S_u$ and $S_v$ contains neither $s$ nor $t$, at
least one of $(x,y)$ and $(x',y')$ is bracketed by $(w,z)$,
contradicting our choice of $(u,v)$. \hfill\hfill\hfill\hfill\hfill\hfill\emph{This completes the proof of
  Lemma~\ref{lem:existbubbles}.}
\end{proof}


\section{Integrally Routable Series-Parallel Instances}\label{sec:integral}
In this section, we prove Theorem~\ref{thm:integral}, which we restate
here:

\begin{theorem}\label{thm:integral2}
  Let $(G,H,c,D)$ form an instance of the multicommodity flow problem,
  such that $G$ is series-parallel, $(G,H)$ is cut-sufficient, and
  $(G,H,c,D)$ is Eulerian. Then the instance has an integral solution
  if and only if it satisfies the cut condition, and that integral
  solution can be computed in polynomial-time.
\end{theorem}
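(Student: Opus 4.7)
The easy direction---existence of an integer solution implies the cut condition---is immediate, so we focus on the other direction. The plan is to induct on $\sum_e c_e$, routing at each step one integer unit of flow along a carefully chosen simple path. After deleting zero-capacity supply edges and zero-demand demand edges---which preserves all hypotheses since it only takes minors of $(G,H)$---we may assume $c_e, D_i \ge 1$ throughout. The inductive step rests on the integer analogue of Theorem~\ref{thm:bubble}: there exist a demand $(u,v) \in F$ and a simple path $P \in \mathcal{P}[u,v]$ such that $P$ crosses every tight cut at most once.

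To prove this existence, I exploit the fractional multiflow $f$ guaranteed by cut-sufficiency and the cut condition. Decompose $f$ into path flows, pick any demand $i$ with $D_i \ge 1$, and take any path $P = P^i_j$ with $f^i_j > 0$. For a tight cut $C$, feasibility of $f$ gives $\sum_{e \in \delta_G(C)} f_e \le \sum_{e \in \delta_G(C)} c_e$, while demand satisfaction combined with the fact that every flow path for a demand $i' \in \delta_H(C)$ crosses $\delta_G(C)$ at least once gives $\sum_{e \in \delta_G(C)} f_e \ge \sum_{i' \in \delta_H(C)} D_{i'}$. Tightness of $C$ forces equality, and equality in turn forces every flow path $P^{i'}_{j'}$ with $f^{i'}_{j'} > 0$ to cross $C$ exactly once if $i' \in \delta_H(C)$ and not at all otherwise. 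In particular $P$ crosses every tight cut at most once; moreover $c_e \ge f^i_j > 0$ on every $e \in P$ together with integrality of $c$ forces $c_e \ge 1$ along $P$.

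Route one integer unit along $P$, setting $c'_e := c_e - \mathbf{1}[e \in P]$ and $D'_i := D_i - 1$. The Eulerian condition is preserved: at every vertex the supply-plus-demand total decreases by an even number (two supply edges at internal vertices of $P$, one supply plus one demand edge at $u$ and $v$), so $\sigma$ remains even on every cut. Tight cuts $C$ stay tight: those with $(u,v) \in \delta_H(C)$ lose exactly one of supply and one of demand (since $P$ crosses them once), while those with $(u,v) \notin \delta_H(C)$ are untouched by $P$. Non-tight central cuts have $\sigma(C) \ge 2$ by the Eulerian condition (even and positive), and a path analogue of Lemma~\ref{lem:cycle}---more than two crossings of a central cut by a simple path would yield a $K_4$ minor---shows that $P$ crosses any central cut at most twice, so $\sigma$ decreases by at most $2$ and remains nonnegative. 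Cut-sufficiency of the pair is preserved since no new odd spindle minor can appear under edge deletions. Induction applied to the reduced instance yields an integer multiflow, and adding the single routed unit completes the construction.

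For the polynomial-time algorithm, at each iteration I compute a feasible fractional multiflow by linear programming, apply standard path-flow decomposition to extract a positive-flow path $P$, and route the largest integer amount consistent with the preceding analysis. Each iteration either saturates a supply edge, fulfills a demand, or newly tightens a central cut, and since the number of central cuts of a series-parallel graph is polynomially bounded, the process terminates in polynomially many iterations. The chief obstacle is the existence of the good path $P$; borrowing a fractional solution reduces it to a short counting argument over tight cuts, after which the remaining parity and surplus bookkeeping is routine.
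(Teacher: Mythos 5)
The central step of your argument---that the chosen positive-flow path $P$ crosses every central cut at most twice, via a ``path analogue'' of Lemma~\ref{lem:cycle}---is false, and the proof collapses without it. Lemma~\ref{lem:cycle} genuinely uses closedness of the curve: four alternating segments of a \emph{cycle} give four cycle adjacencies plus the two cross-connections inside the two (connected) sides, hence a $K_4$; for a \emph{path} you only get a path on the segments plus two chords, i.e.\ $K_4$ minus an edge. Concretely, take $V=\{x_1,x_2,x_3,y_1,y_2\}$ with supply edges $x_1x_2$, $x_2x_3$, $y_1y_2$, $x_1y_1$, $x_2y_1$, $x_2y_2$, $x_3y_2$. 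This graph has only three vertices of degree at least $3$, so it has no $K_4$ minor and is series-parallel; yet the simple path $x_1\,y_1\,x_2\,y_2\,x_3$ crosses the central cut $\delta_G(\{x_1,x_2,x_3\})$ four times. Now recall that routing one unit along a $u$--$v$ path reduces $\sigma(S)$ by $2\lfloor n_S/2\rfloor$ where $n_S$ is the number of crossings; so a flow path that crosses a \emph{non-tight} central cut of surplus exactly $2$ four times (the demand does not cross such a cut, since $n_S$ is even) drives its surplus to $-2$. Your tight-cut saturation argument is correct but irrelevant here, because the offending cut is not tight, and the Eulerian condition only gives surplus $\ge 2$, not $\ge 4$. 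Controlling exactly this situation is where the paper's proof does its real work: it never routes along a single flow path. Instead it embeds $G$ with $u,v$ on the outer face, takes a non-crossing decomposition $P_1,\ldots,P_k$, first \emph{pushes} the demand onto the vertices of $P_1\cap P_k$ (Lemma~\ref{lem:push}, which needs surplus $\ge 2$ on the relevant cuts and gets it from the fractional flow), and then routes each unit across each cycle of $P_1\cup P_k$ along the side containing no vertex linked to $v$ (Lemma~\ref{lem:link}); that choice is what guarantees every central cut is double-crossed in at most one cycle (Lemma~\ref{lem:twice}), so each surplus drops by at most $2$, with tight cuts handled by the saturation argument inside Lemma~\ref{lem:unit}. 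Unless you can exhibit a specific fractional solution and path selection ruling out four crossings of a surplus-$2$ cut, the inductive step as written can destroy the cut condition.

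A secondary but also genuine problem is the termination bound for your algorithm: series-parallel graphs can have exponentially many central cuts (in $K_{2,m}$, one degree-$m$ vertex together with any subset of the degree-$2$ vertices is central), so ``each iteration newly tightens a central cut'' does not give polynomially many iterations. The paper instead obtains polynomial time by sending $\lfloor f_P\rfloor$ units in bulk along each path of a fractional decomposition, which leaves at most $m|F|$ residual units to route with the one-unit-at-a-time procedure.
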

Since an instance that does not satisfy the cut condition cannot have
a solution, integral or otherwise, we only need to prove the other
direction.

For any demand $d=(u,v)$ and vertex $w$ in a multiflow instance,
\emph{pushing a unit of $d$ to $w$} consists of removing one unit of
demand $d$, and creating two demand edges of unit demand from $u$ to
$w$ and $w$ to $v$. This can be seen as taking the decision of routing
at least one unit of the demand $d$ through $w$.

For any demand $d$ whose endpoints are connected by a path $P$,
\emph{routing a unit of $d$ along $P$} consists of removing one unit of
capacity along each edge of $P$, and removing one unit of demand from
$d$. Supply edges whose capacity falls to zero are removed from $G$, and
demand edges whose demand falls to zero are removed from $H$. For each
$S\subseteq V$, define $n_S=|\delta_G(S)\cap P|$. The operation reduces
the surplus $\sigma(S)$ by $2\lfloor n_S/2\rfloor$: it reduces the total
of capacities crossing $\delta_G(S)$ by $n_S$; and if $n_S$ is odd, then
$d\in\delta_H(S)$ and it reduces the total demand crossing $\delta_H(S)$
by $1$. Thus, the surplus of any cut is reduced by an even number.

Suppose we are given a series-parallel instance that is cut-sufficient, 
Eulerian, and satisfies the cut condition, with a demand $d=(u,v)$.  We prove 
that (A) there is a sequence of push operations to move a unit of demand $d$
to a path $Q$ of unit demands from $u$ to $v$ without breaking the cut
condition; and (B) the unit demands in $Q$ can all be
routed without breaking the cut condition. Thus, the demands in $Q$ fall to
zero, and are removed. The two operations are equivalent to
routing one unit of $d$; thus, we get a smaller instance which has the
same properties. We can therefore recursively build a solution to the
whole problem.

We embed $G$ in the plane such that the endpoints $u$ and $v$ are on
the outside face.  (Lemma~\ref{lem:sepa-embed})
Any path $P$ from $u$ to $v$ thus partitions $G\setminus P$ into two sides, 
one to the left and on to the right of $P$. Two paths 
$P$ and $P'$ \emph{cross} if $P'$ contains vertices on both sides of $P$. 
We decompose the flow of the fractional solution routing the demand 
$d$ into paths in the series-parallel supply graph such that no two 
paths cross. This gives an ordering of the path $P_1,\ldots,P_k$ such 
that if $P_1$ and $P_k$ have a common vertex, then all paths $P_j$, 
$j=1,\ldots,k$ go through that vertex. We examine the subgraph 
$P_1\cup P_k$. Since $u$ and $v$ are on the outside face of $G$, the 
graph $P_1\cup P_k$ is composed of a family of cycles (whenever $P_1$ 
and $P_k$ are disjoint) connected by paths (whenever $P_1$ and $P_k$
coincide). Let $C_1,\ldots,C_j$ be the cycles in $P_1\cup P_k$, and
for any cycle $C_i$, let $a_i$ and $b_i$ be the two vertices of $C_i$
contained in both $P_1$ and $P_k$.  See Figure~\ref{fig:p1pk}.

\begin{figure}
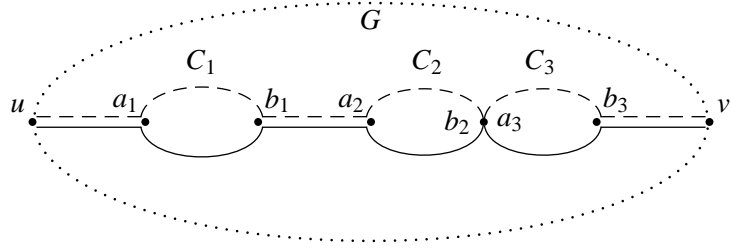

\begin{center}
    \psset{unit=0.8cm,xunit=1cm,arrows=-,shortput=nab,linewidth=0.5pt,arrowsize=2pt 5,labelsep=3.5pt}
    \pspicture(0,0.8)(9,3.5)
    \psellipse[linestyle=dotted,linewidth=1pt](4.5,2)(4.5,2)
    \rput(4.5,3.7){$G$}
    \rput(2.25,3){$C_1$}
    \rput(5.25,3){$C_2$}
    \rput(6.75,3){$C_3$}
    \dotnode(0,2){u}
    \dotnode(1.5,2){a}
    \dotnode(3,2){b}
    \dotnode(4.5,2){c}
    \dotnode(6,2){d}
    \dotnode(7.5,2){e}
    \dotnode(9,2){v}
    \nput{130}{u}{$u$}
    \nput{50}{v}{$v$}
    \nput{130}{a}{$a_1$}
    \nput{50}{b}{$b_1$}
    \nput{130}{c}{$a_2$}
    \nput{180}{d}{$b_2$}
    \nput{0}{d}{$a_3$}
    \nput{50}{e}{$b_3$}
    \psset{offset=2pt,linestyle=dashed}
    \ncline{u}{a}
    \ncarc[arcangle=80]{a}{b}
    \ncline{b}{c}
    \ncarc[arcangle=80,offsetB=0pt]{c}{d}
    \ncarc[arcangle=80,offsetA=0pt]{d}{e}
    \ncline{e}{v}
    \psset{offset=-2pt,linestyle=solid}
    \ncline{u}{a}
    \ncarc[arcangle=-80]{a}{b}
    \ncline{b}{c}
    \ncarc[arcangle=-80,offsetB=0pt]{c}{d}
    \ncarc[arcangle=-80,offsetA=0pt]{d}{e}
    \ncline{e}{v}
\endpspicture
\end{center}
\caption{Illustration of the planar embedding with $u$ and $v$ on the
  outside face. The dotted cycle represents the outside face of $G$,
  the paths $P_1$ is in solid, and the path $P_k$ in
  dashed.}\label{fig:p1pk}
\end{figure}

\begin{lemma}\label{lem:push}
  In any instance of the multiflow problem satisfying the cut
  condition, if there is a fractional solution such that all paths
  $P_1,\ldots,P_k$ routing demand $d=(u,v)$ go through the same vertex
  $w$, then it is possible to push a unit of the demand $d$ to the
  vertex $w$ without breaking the cut condition.
\end{lemma}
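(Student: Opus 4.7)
The plan is to analyze which cuts have their surplus decreased by the push operation, and then to exploit the hypothesis that every routing path of $d$ in the fractional solution visits $w$ to obtain a factor-$2$ lower bound on the surplus of exactly those cuts.

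First, I would carry out a straightforward case analysis on the relative positions of $u$, $v$, $w$ with respect to a cut $C$. The push replaces one unit of demand on $d=(u,v)$ by one unit on $(u,w)$ plus one unit on $(w,v)$. In each case, one compares the number of crossings of $\delta_H(C)$ before and after. The conclusion is that $\sigma(C)$ is unchanged when $C$ separates $u$ from $v$ (the single demand $d$ on that cut is replaced by exactly one of $(u,w),(w,v)$) and when $u,v,w$ lie on the same side of $C$, while $\sigma(C)$ drops by exactly $2$ precisely when $C$ separates $\{u,v\}$ from $w$. So I only need a surplus of at least $2$ on the latter ``critical'' cuts.

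Second, I would establish $\sigma(C) \ge 2$ for every critical cut. Assume without loss of generality that $u,v \in C$ and $w \notin C$. Since each path $P_i$ starts in $C$, passes through $w \notin C$, and returns to $v \in C$, it must use at least two edges of $\delta_G(C)$. Writing $f_i$ for the flow along $P_i$ in the fractional solution (and noting $\sum_i f_i \ge D_d$), the capacity consumed on $\delta_G(C)$ by the routing of $d$ is
\[
\sum_i f_i\,|P_i \cap \delta_G(C)|\;\ge\;2\sum_i f_i\;\ge\;2D_d.
\]
For every other demand $i' \in \delta_H(C)$, each routing path crosses $\delta_G(C)$ at least once, so the remaining capacity consumed on $\delta_G(C)$ is at least $\sum_{i' \in \delta_H(C)}D_{i'}$ (note $d \notin \delta_H(C)$ since $u,v\in C$). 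Using feasibility of the fractional solution, $\sum_{e\in\delta_G(C)} c_e$ is at least the sum of these two contributions, giving $\sigma(C)\ge 2D_d \ge 2$. Here I invoke $D_d\ge 1$, which is automatic in the integral Eulerian context where the lemma is applied.

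Combining the two steps, $\sigma(C)$ decreases under the push by at most $2$, and only on cuts where it was already at least $2$; hence all cuts retain nonnegative surplus after the push. The argument is essentially self-contained and presents no real obstacle: the only subtlety is making sure that the ``$2$'' gained from each routing path visiting $w$ lines up exactly with the ``$2$'' lost by the new demand edges, which is precisely why the lemma targets $w$ instead of an arbitrary vertex.
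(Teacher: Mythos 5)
Your proof is correct, and its second step takes a genuinely more direct route than the paper's. You agree with the paper on the first step: the only cuts whose surplus changes are those separating $\{u,v\}$ from $w$, and each such surplus drops by exactly $2$, so it suffices to show these cuts have surplus at least $2$. For that bound, the paper argues by modifying the instance: it removes one unit of demand from $d$ and subtracts a total of one unit of flow from the paths $P_1,\ldots,P_k$ (removing the same amounts of capacity along them), observes that the remaining flow still routes the reduced instance, hence the reduced instance satisfies the cut condition, and concludes that each critical cut, whose surplus dropped by at least $2$ under the modification, originally had surplus at least $2$. You instead do a direct capacity-versus-flow accounting on a fixed critical cut $C$ with $u,v\in C$, $w\notin C$: every path routing $d$ crosses $\delta_G(C)$ at least twice, every path routing any other demand of $\delta_H(C)$ crosses at least once, and $d\notin\delta_H(C)$, so feasibility gives $\sigma(C)\ge 2D_d\ge 2$. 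Both arguments pivot on the same observation (paths through $w$ cross critical cuts twice), but yours avoids the reduce-and-recheck-routability detour, is slightly more elementary, and even yields the stronger bound $\sigma(C)\ge 2D_d$. Your explicit appeal to $D_d\ge 1$ is not a real gap: the push operation itself removes one unit of demand from $d$ and so already presupposes $D_d\ge 1$, exactly as in the paper's proof, and this holds where the lemma is used (integral demands).
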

\begin{proof}
  Let $\mathcal{C}_{uv,w}$ be the set of cuts separating $u$ and $v$
  from $w$. If we push a unit of $d$ to $w$, only the surpluses of
  cuts in $\mathcal{C}_{uv,w}$ are modified, and each surplus is
  reduced by two units. It is thus sufficient to prove that all cuts
  in $\mathcal{C}_{uv,w}$ have a surplus of at least two.

  We execute the following operations on the multiflow problem and its
  fractional solution. We reduce the demand of $d$ by one unit. Let
  $f_1,\ldots,f_k$ be the flows of the fractional solution routed on
  paths $P_1,\ldots,P_k$. We chose quantities $0\leq g_i\leq f_i$,
  $i=1,\ldots,k$, such that $\sum_ig_i=1$. We remove successively
  from each edge in $P_i$ a quantity $g_i$ of capacity, and subtract
  $g_i$ from $f_i$, with $i=1,\ldots,k$.

  Since each path $P_i$ crosses every cut in $\mathcal{C}_{uv,w}$ at
  least twice, these operations reduce the surplus of every cut in
  $\mathcal{C}_{uv,w}$ by at least two. The remainder flow of
  $f_1,\ldots,f_k$ on paths $P_1,\ldots,P_k$ gives a fractional
  solution routing the reduced demand, and so the instance still
  satisfies the cut condition. So for each $S\in\mathcal{C}_{uv,w}$,
  $\sigma(S)\geq 0$ after $\sigma(S)$ has been reduced by at least
  two, so $\sigma(S)$ was at least two in the original instance.
\end{proof}

We push the demand $d$ to every vertex in $P_1\cap P_k$. By
Lemma~\ref{lem:push}, we can do this without breaking the cut
condition, since all paths routing $d$ in the fractional solution go
through these vertices. This creates a path $Q$ of unit demands from
$u$ to $v$, such that the vertices of $Q$ are the vertices in both
$P_1$ and $P_k$.  This completes part (A).  

We next argue that we can route the demands in $Q$.  We need to identify
paths in $G$ to do this.
The path $Q$ has a unit demand parallel to every edge
in the paths connecting the cycles $C_1,\ldots,C_j$, and a unit demand
from $a_i$ to $b_i$ for every cycle $C_i$, $i=1,\ldots,j$.  We will
route the demands in $Q$ along the paths connecting the cycles, and
then along one side of each cycle.  The side we pick is guided
by the next two lemmas.

For any cycle $C_i$ not containing $v$, we say a vertex $w\in C_i$ is
\emph{linked to $v$} if there is in $G$ a path from $w$ to $v$
containing only the vertex $w$ in $C_i$.
\begin{lemma}
\label{lem:link}
  Any cycle $C_i$ not containing $v$ contains at most one
  vertex apart from $b_i$ that is linked to $v$.
\end{lemma}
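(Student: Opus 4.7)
The plan is to argue by contradiction: suppose two distinct vertices $w_1, w_2 \in C_i\setminus\{b_i\}$ were both linked to $v$; I will exhibit a $K_4$-minor of $G$, contradicting the hypothesis that $G$ is series-parallel.

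First, observe that $b_i$ itself is linked to $v$: the subpath $R_b$ of $P_1$ from $b_i$ to $v$ is simple and meets $C_i$ only at $b_i$, because $P_1\cap C_i$ is exactly the $(a_i,b_i)$-arc of $C_i$ lying on $P_1$, and the portion of $P_1$ after $b_i$ lies strictly outside that arc. Let $R_1,R_2$ be the paths witnessing that $w_1,w_2$ are linked to $v$, each meeting $C_i$ only at its $C_i$-endpoint.

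Now $C_i$ decomposes into three internally disjoint arcs $A_{12},A_{2b},A_{b1}$ between $w_1,w_2,b_i$ (each arc avoiding the third vertex). I propose the four branch sets
\[
B_1=\{w_1\},\quad B_2=\{w_2\}\cup \mathrm{int}(A_{12}),\quad B_b=\{b_i\}\cup \mathrm{int}(A_{2b})\cup \mathrm{int}(A_{b1}),\quad B_v=V(R_1\cup R_2\cup R_b)\setminus\{w_1,w_2,b_i\}.
\]
The required verifications are: (i) the four sets are pairwise disjoint --- $B_v$ is disjoint from $C_i$'s vertices other than $\{w_1,w_2,b_i\}$ precisely because each $R_j$ meets $C_i$ only at its $C_i$-endpoint; (ii) each set induces a connected subgraph --- the non-trivial case is $B_v$, which is connected because each $R_j$ minus its $C_i$-endpoint is a subpath containing $v$, so the three pieces share $v$ and their union is connected; (iii) every pair of branch sets has at least one edge of $G$ between them --- the three cycle arcs provide the edges among $B_1,B_2,B_b$, and the edges of $R_1,R_2,R_b$ incident to $w_1,w_2,b_i$ provide the edges from $B_v$ to each of the others.

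The only mild subtlety is the corner case where some arc of $C_i$ has empty interior (two of $w_1,w_2,b_i$ are adjacent on $C_i$); in that case the corresponding direct edge of $C_i$ suffices, and the argument goes through unchanged. The main conceptual point is that the three linking paths may share interior vertices, so I deliberately take $B_v$ as the union of all three minus the three cycle vertices, rather than seeking three internally disjoint $v$-to-$C_i$ paths (which would give a $K_4$-subdivision rather than just a minor); this is the step where having the common endpoint $v$ is essential. Having exhibited a $K_4$-minor of $G$, we contradict series-parallelness, concluding the proof.
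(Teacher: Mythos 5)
Your proof is correct and takes essentially the same route as the paper's: three vertices of $C_i$ linked to $v$, together with a connected piece of $G\setminus C_i$ containing $v$, force a $K_4$ minor --- the paper gets this in one line by contracting the connected component of $G\setminus C_i$ containing $v$, while you spell out explicit branch sets using the union of the three linking paths. Your explicit verification that $b_i$ itself is linked to $v$ (via the tail of $P_1$) is a point the paper leaves implicit, but it is the same underlying argument.
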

\begin{proof}
  Contract the connected component of $G\setminus C_i$ containing $v$.
  The resulting vertex is connected by an edge to any vertex of $C_i$
  that is linked to $v$. If there are three, this forms a $K_4$.
\end{proof}

We define the path $P$ from $u$ to $v$ by choosing for each cycle
$C_i$ the side of $C_i$ from $a_i$ to $b_i$ that does not contain a
vertex linked to $v$. This is always possible by Lemma~\ref{lem:link}.

If $\delta_G(S)$ is a central cut, by Lemma~\ref{lem:cycle} it crosses
a cycle $C_i$ either twice, or not at all.  For any $u$-to-$v$ path
$P'$ in $P_1\cup P_k$ obtained by choosing for each cycle $C_i$
either $C_i \cap P_1$ or $C_i \cap P_k$, the
cut $\delta_G(S)$ crosses $P'\cap C_i$ zero, once, or twice for every
$i=1,\ldots,j$.  Our choice of $P$ given above is special:

\begin{lemma}\label{lem:twice}
  For any central cut $\delta_G(S)$, there is at most one cycle $C_i$ such that
  $\delta_G(S)$ crosses $P\cap C_i$ twice.
\end{lemma}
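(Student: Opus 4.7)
The plan is to prove this by contradiction. Suppose that for some central cut $\delta_G(S)$, two cycles $C_i$ and $C_{i'}$ with $i<i'$ both have $P\cap C_i$ and $P\cap C_{i'}$ crossed twice. Write $\bar P_i$ for the arc of $C_i$ from $a_i$ to $b_i$ not lying on $P$. I will exhibit a vertex $w_1\in (P\cap C_i)\setminus\{a_i,b_i\}$ that is linked to $v$, contradicting the construction of $P$ via Lemma~\ref{lem:link}. Note $v\notin C_i$: if $v$ lies in any cycle of $P_1\cup P_k$ it must be the last one $C_j$, while $i<i'\le j$.

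First, by Lemma~\ref{lem:cycle}, $C_i$ and $C_{i'}$ are each crossed at most twice in total, so $\bar P_i$ and $\bar P_{i'}$ are uncrossed and $a_i,b_i$ lie on a common side of $\delta_G(S)$ (likewise $a_{i'},b_{i'}$). Let $M_i$ denote the set of vertices of $P\cap C_i$ strictly between the two crossings; these lie on the side opposite to $a_i$ and $b_i$. Swapping the names of $S$ and $V\setminus S$ if needed, I may assume $M_i\subseteq S$, so $C_i\cap S=M_i$. The analogous statement for $C_{i'}$ shows that $C_{i'}$ has vertices on both sides, hence $C_{i'}\cap S\ne\emptyset$.

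Next, centrality of $\delta_G(S)$ implies $G[S]$ is connected, so there exists a simple path in $G[S]$ from $M_i$ to $C_{i'}\cap S$. Truncating it at the last vertex it shares with $M_i$ (traversing from the $C_{i'}$-end) produces a path $P^*$ starting at some $w_1\in M_i$, ending at some $z\in C_{i'}$, lying in $S$, with $P^*\cap C_i=\{w_1\}$ (because $C_i\cap S=M_i$). Then extend $P^*$ along an arc of $C_{i'}$ from $z$ to $b_{i'}$, choosing the arc that avoids $a_{i'}$ whenever $a_{i'}$ is shared with $C_i$; and finally follow $P$ from $b_{i'}$ through $C_{i'+1},\ldots,C_j$ and the shared segments of $P_1\cap P_k$ to reach $v$. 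Because the cycles of $P_1\cup P_k$ are pairwise vertex-disjoint (except possibly at one shared boundary vertex, handled above) and $i<i'$, the extension never meets $C_i$. So $w_1$ is linked to $v$, contradicting our choice that $P\cap C_i$ contains no interior vertex linked to $v$ (guaranteed possible by Lemma~\ref{lem:link}).

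The main obstacle is the bookkeeping needed to guarantee the constructed $w_1$-to-$v$ path meets $C_i$ only at $w_1$. Two subtleties arise: an arbitrary simple path in $G[S]$ might re-enter $M_i$, which is resolved by the truncation step; and adjacent cycles $C_i,C_{i+1}$ may share the vertex $b_i=a_{i+1}$, which is resolved by choosing the appropriate arc of $C_{i+1}$ (noting $z\ne b_i$ since $z\in S$ while $b_i\in V\setminus S$). Edge cases such as $i'=j$ with $v=b_{i'}$ trivialize the final $P$-tail and cause no difficulty.
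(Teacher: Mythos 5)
Your proof is correct and follows essentially the same route as the paper's: assume two cycles are crossed twice, use Lemma~\ref{lem:cycle} to place the interior arc $M_i$ of $P\cap C_i$ on one side (WLOG inside $S$), use centrality of $S$ to connect $M_i$ to $C_{i'}\cap S$ within $S$, and conclude that an interior vertex of $P\cap C_i$ is linked to $v$, contradicting the construction of $P$ via Lemma~\ref{lem:link}. The only difference is that you spell out explicitly (truncation at $M_i$, choice of arc in $C_{i'}$, disjointness of later cycles from $C_i$) the extension of this path to $v$ avoiding $C_i$, which the paper leaves implicit.
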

\begin{proof}
  Suppose that there is a set $S$ defining a cut that crosses $P\cap
  C_i$ twice and $P\cap C_l$ twice, for $i<l$. Then $S$ either
  contains both $a_i$ and $b_i$, or neither of them. Suppose without
  loss of generality that it contains neither. Then $S$ contains some
  vertices in $P\cap C_i\setminus\{a_i,b_i\}$. Since $\delta_G(S)$ also
  intersects $C_l$, the set $S$ also contains some vertex in $C_l$.
  As $S$ is central, there must be a path from $(P\cap
  C_i)\setminus\{a_i,b_i\}$ to $C_l$, which means that some vertex of
  $(P\cap C_i)\setminus\{a_i,b_i\}$ is linked to $v$. This contradicts
  our choice of $P$.
\end{proof}

\begin{lemma}\label{lem:unit}
  We can route the unit demands in $Q$ along the path $P$ without
  breaking the cut condition.
\end{lemma}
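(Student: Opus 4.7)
The plan is to verify, for every central cut $\delta_G(S)$, that routing the unit demands in $Q$ along $P$ leaves $\sigma(S)$ nonnegative. For each unit demand in $Q$ parallel to a connecting edge, routing along that edge contributes zero to the change in $\sigma(S)$: one unit of capacity is lost exactly when one unit of demand is lost, namely when the edge lies in $\delta_G(S)$. For the cycle demand $(a_i,b_i)$ routed along $P\cap C_i$, set $n_i = |(P\cap C_i)\cap \delta_G(S)|$; a direct calculation gives a contribution of $-n_i + [(a_i,b_i)\in\delta_H(S)] = -2\lfloor n_i/2\rfloor$, since $a_i$ and $b_i$ lie on the same side of $S$ precisely when $n_i$ is even. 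By Lemma~\ref{lem:cycle}, $n_i\in\{0,1,2\}$, so this contribution is $0$ when $n_i\leq 1$ and $-2$ when $n_i=2$; and by Lemma~\ref{lem:twice}, at most one cycle has $n_i=2$ for a given central cut. Thus the total decrease in $\sigma(S)$ from routing all of $Q$ is either $0$ or exactly $2$.

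Because both pushing and routing preserve the Eulerian condition, the intermediate surplus $\sigma_{\text{int}}(S)$ after all pushes is a nonnegative even integer. It therefore suffices to rule out the case $\sigma_{\text{int}}(S) = 0$ coexisting with some $n_i = 2$. To do this, I will exploit the original fractional flow $f$ routing $d$ via $P_1,\ldots,P_k$ with positive weights $f_1,\ldots,f_k$: peel off one total unit of $d$-flow in the proportions $f_\ell / D_d$, and reassign each resulting subflow on $P_\ell$ as the routing of the corresponding unit demand in $Q$ along the matching sub-path of $P_\ell$. This yields a valid fractional solution for the intermediate instance in which the demand $(a_i,b_i)$ is routed with positive weight $f_\ell / D_d$ along each of $P_1\cap C_i$ and $P_k\cap C_i$.

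Now, if $\sigma_{\text{int}}(S) = 0$, a saturation argument (total capacity across $\delta_G(S)$ equals total demand across $\delta_H(S)$, and each demand contributes at least its minimum number of crossings to the total) forces every positively weighted path in this solution to cross $\delta_G(S)$ the minimum number of times for its demand, that is, $0$ or $1$. Hence both $P_1\cap C_i$ and $P_k\cap C_i$ cross $\delta_G(S)$ at most once, and since by definition $P\cap C_i$ coincides with one of them, $n_i\leq 1$, contradicting $n_i=2$. The main technical obstacle is engineering the fractional reassignment so that positive flow for $(a_i,b_i)$ appears on both sides of every cycle simultaneously; this is what lets the tight-cut forcing argument constrain $n_i$ irrespective of which side of $C_i$ the path $P$ uses. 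Once the cut condition is established for central cuts, \cite[Theorem~70.4]{SchrijverBook} extends it to arbitrary cuts, completing the proof.
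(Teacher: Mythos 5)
Your proof is correct and follows essentially the same route as the paper's: the identical accounting that each connecting-edge demand changes no surplus and each cycle demand costs $2\lfloor n_i/2\rfloor$, with Lemmas~\ref{lem:cycle} and~\ref{lem:twice} bounding the total decrease of any central cut by $2$, followed by the same key point that a cut crossed twice by a flow-carrying side of a cycle cannot be tight, so the Eulerian condition forces its surplus to be at least $2$. Your explicit peeling of one unit of the $d$-flow and its reassignment to the demands of $Q$ (so that $(a_i,b_i)$ carries positive flow on both $P_1\cap C_i$ and $P_k\cap C_i$ in the post-push instance) is just a more careful rendering of the paper's terser appeal to the positive flow on $P_1,\dots,P_k$ in the original fractional solution.
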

\begin{proof}
  The path $P$ goes through both extremities of every demand we
  created by pushing $d$. Routing any demand parallel to a supply edge
  consists of removing one unit of capacity from the supply edge and
  removing the unit demand. The surplus of any cut crossing such a
  demand is not affected by this. Routing a demand across a cycle
  $C_i$, from $a_i$ to $b_i$, consists of removing one unit of
  capacity of each supply edge in $P\cap C_i$, and removing the unit
  demand. If a central cut $\delta_G(S)$ separates $a_i$ from $b_i$,
  it crosses $P\cap C_i$ exactly once, and so its surplus $\sigma(S)$
  is not affected by this. If a central cut $\delta_G(S)$ does not
  separate $a_i$ from $b_i$, then its surplus is reduced by two or
  unchanged, depending on whether it crosses $P\cap C_i$ twice or not at
  all. For any central cut $\delta_G(S)$, there is at most one cycle
  $C_i$ such that $\delta_G(S)$ crosses $P\cap C_i$ twice, by
  Lemma~\ref{lem:twice}. So the surplus of any cut is reduced at most
  by two. As there is a positive flow routing demand $d$ along path
  $P$ in the fractional solution, no cut that crosses $P$ more than
  once is tight: because in any solution to the multiflow problem, the
  supply edges crossing a tight cut have their capacity completely
  used to route the demands that also cross it. As the instance is
  Eulerian, any cut that is not tight has a surplus of at least
  two. And so routing one unit along $P$ does not break the cut
  condition.
\end{proof}

The flow routing all the demands created by pushing $d$ is also a way
of routing one unit of $d=(u,v)$ in the original problem; so we have
found a path $P$ from $u$ to $v$ such that routing one unit of $d$
along this path does not break the cut condition. After doing this,
the reduced instance still does not have any odd spindle as a minor,
since no demand edges were introduced; is still Eulerian, and still
satisfies the cut condition. By induction, we can find an integral
routing for the instance.

\subsection{Polynomial-Time Algorithm}
\label{sec:alg}
The method described in the proof of Theorem~\ref{thm:integral2}
routes one unit of flow at a time. We first show that each unit can be
routed in polynomial-time.  This gives us a pseudo-polynomial-time
algorithm for an instance $(G,H,c,D)$; the algorithm is polynomial in
the size of $G=(V,E)$, $H=(V,F)$ and the bit-size of $c$, but only
polynomial in $D$, the demands assigned to edges of $H$, instead of in
the \emph{bit-size} of $D$. We then give a fully-polynomial-time
algorithm, that reduces the instance to another one in which $D$ is
polynomial in the size of $G$ and $H$, and then uses the
pseudo-polynomial-time algorithm.

First, it is possible to find a fractional solution to the problem in
polynomial-time by linear programming. The problem can indeed be
solved by a polynomial-sized linear program, by having one variable
$f^i_e$ indicating the amount of commodity $i$ flowing through edge
$e$, for every $i\in F$ and $e\in E$ (e.g. Section 70.6 of
\cite{SchrijverBook}). This linear program can be then solved
efficiently in polynomial time using interior point methods.

The second step is to embed the planar graph $G'$ into the plane.
This can be done in time linear in the number of
vertices~\cite{Schnyder}.

We then decompose the flow of the fractional solution routing a demand
into paths $P_1,\ldots,P_k$. Let $m=|E|$. The flow decomposition has
$k\leq m$ paths, and can be found in $O(m^2)$ time, given the
fractional flow.

Finally, we find for each of the $O(m)$ cycles in $P_1\cup P_k$ which
side has a vertex linked to $v$. This can be done by an exploration
algorithm in $O(m)$ time, which makes $O(m^2)$ time in total. The
operation of routing a unit through the path $P$ is done in $O(m)$
time.

So routing one unit of demand can be done in polynomial-time, with a
theoretical complexity dominated by the resolution of the linear
program finding a fractional solution.

We now present a polynomial-time algorithm. We start by finding a
fractional solution to the problem, solving the polynomial-sized
linear program. For each demand $i\in F$, we do a path decomposition
of the flow routing $i$. This yields $k\leq m$ paths $P_1,\ldots,P_k$
per demand $i$. For each path $P$ routing a quantity $f^i_{P}$ of flow
between endpoints of $i$, we send $\lfloor f^i_{P}\rfloor$ units of
flow on $P$. After this, each path $P_j$ routes an amount of flow
smaller than $1$, and since there are no more than $m$ paths routing
each demand, we are left with at most $m|F|$ units of demand to
route. We use then the pseudo-polynomial algorithm presented
above. The theoretical complexity of the algorithm is dominated by
that of this last step, which solves at most $m|F|$ linear programs
finding a fractional solution.


\section{Discussion}\label{sec:conclusion}
In this paper, we give a complete characterization for cut-sufficient
multiflow problems in series-parallel instance. A pair $(G,H)$ is
\emph{minimally cut-insufficient} if it is not cut-sufficient, but
deleting any edge or demand or contracting any edge makes it
cut-sufficient. Since any pair that is not cut-sufficient contains a
pair that is minimally cut-insufficient as a minor, then our results
show that odd spindles are the only minimally cut-insufficient
pairs with $G$ series-parallel.

A natural extension of this result is to planar pairs, i.e., pairs
where the supply graph is planar. There are planar pairs that are not
cut-sufficient, yet do not have an odd spindle as a minor. A
\emph{bad-$K_4$-pair} is the example in Figure~\ref{fig:k4},
attributed by~\cite{SchrijverBook} to Papernov, which is of particular
interest. Apart from odd spindles, it is the only minimally
cut-insufficient pair we know of.

\begin{figure}[ht]
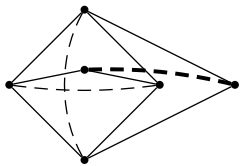

\begin{center}
\psset{unit=1cm,yunit=1cm,arrows=-,shortput=nab,linewidth=0.5pt,arrowsize=2pt 5,labelsep=1.5pt}
\pspicture(0,.3)(3,2)
\dotnode(0,1){u1}
\dotnode(1,1.2){u2}
\dotnode(2,1){u3}
\dotnode(1,0){v1}
\dotnode(3,1){v2}
\dotnode(1,2){v3}
\ncline{u1}{v1}
\ncline{u1}{v3}
\ncline{u3}{v1}
\ncline{u3}{v3}
\ncline{u1}{u2}
\ncline{u2}{u3}
\ncline{v1}{v2}
\ncline{v2}{v3}
\psset{linestyle=dashed}
\ncarc{u3}{u1}
\ncarc[arcangle=30]{v1}{v3}
\ncarc[linewidth=1.5pt]{u2}{v2}
\endpspicture
\end{center}
\caption{Planar pair without odd spindle as a minor, and not
  cut-sufficient. Supply edges are solid, and demands are dashed.  If
  the thick dashed edge has demand $2$, and all other capacities and
  demands are $1$, the instance is Eulerian and satisfies the cut
  condition, but is not routable.}\label{fig:k4}
\end{figure}

\begin{conjecture}
  Odd spindles and the bad-$K_4$-pair are the only minimally
  cut-insufficient pairs $(G,H)$, with $G$ planar.
\end{conjecture}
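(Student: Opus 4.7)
The plan is to follow the LP-duality framework of Section~\ref{sec:new} and extend the argument of Theorem~\ref{thm:fraction} from series-parallel to planar supply graphs. Let $(G,H)$ be a minimally cut-insufficient pair with $G$ planar. By Lemma~\ref{lem:simple} we may assume $(G,H)$ is simple, producing a general solution $(c^*,D^*,l^*,d^*,f^*,x^*)$ whose positive $x^*$-values, by Lemma~\ref{lem:tight}, are supported on central tight cuts. We fix a plane embedding of $G$. The goal is to show that $(G,H)$ must contract to an odd spindle or to the bad-$K_4$-pair.

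First I would dispose of the case in which $G$ is series-parallel by appealing directly to Theorem~\ref{thm:fraction}, leaving the case in which $G$ contains a $K_4$ minor. Two ingredients of the series-parallel proof break down here: Lemma~\ref{lem:cycle} fails, because a simple cycle in a planar graph may cross a central cut more than twice; and the acyclic orientation of Lemma~\ref{lem:orient}, which underlies the bracketing argument of Lemma~\ref{lem:existbubbles}, no longer exists. I would instead use planar duality, under which a central cut in $G$ corresponds to a simple closed curve in the embedding (a cycle in the dual graph). This lets one speak of one tight central cut being \emph{nested inside} another, giving a geometric replacement for the bracketing partial order of Section~\ref{sec:graph-props}.

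Next I would reuse Lemma~\ref{lem:bubbles} as the odd-spindle arm of the argument, checking that its proof goes through unchanged whenever the bubbles for a demand $(u,v)$ are not arranged around a common $K_4$ minor. The delicate step is the bubble-coverage analogue of Lemma~\ref{lem:existbubbles}: select an ``innermost'' non-compliant demand $(u,v)$ using the nested-curve order above, then apply Theorem~\ref{thm:bubble} together with an analogue of Observation~\ref{obs:odd-cross}. If $\mathcal{P}[u,v]$ is not covered by bubbles, the existing case analysis of Claims~\ref{clm:ub2} and~\ref{clm:ub1} should produce either a strictly smaller non-compliant demand (contradicting our choice of $(u,v)$), or else a $K_4$-subdivision substructure that obstructs the contradiction; the crux is to show the second outcome yields a bad-$K_4$-pair minor.

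The main obstacle is precisely this second outcome. When a path of $\mathcal{P}[u,v]$ crosses a single central tight cut three or more times and cannot be shortcut, the series-parallel argument forced a smaller non-compliant demand, whereas the planar case allows the combinatorics to persist because a $K_4$ subdivision lets the path leave and re-enter the cut. I expect the key step is to show that when this happens, the tight cuts arising from the $K_4$ subdivision, together with the demand $(u,v)$ and the two extra demands discovered by applying Claim~\ref{clm:new} symmetrically to both $S_u^*$ and $S_v^*$, force the demand pattern of Figure~\ref{fig:k4} on a minor. A careful bookkeeping using the surplus identities Lemma~\ref{lem:easy1}$\langle c \rangle$ and $\langle e \rangle$ across the three pairs of opposite edges of a $K_4$ subdivision should pin down which demand edges must be present, and contraction of everything outside the subdivision should yield a bad-$K_4$-pair. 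The genuinely hard part will be ruling out hybrid obstructions that interleave bubble-based and $K_4$-based failures, showing that the case analysis terminates in exactly the two minor-minimal families claimed.
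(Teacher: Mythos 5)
The statement you are attacking is not a theorem of the paper but its closing \emph{conjecture}: the paper offers no proof, and to the best of the authors' presentation the planar characterization is open. Your text is accordingly a research plan rather than a proof, and it has concrete gaps at exactly the points where the series-parallel machinery is load-bearing. First, Lemma~\ref{lem:bubbles} and its claims are not portable: the disjointness of bubbles when $|F_{u,v}|\ge 3$, the $2$-vertex-cut of Claim~\ref{clm:2cut}, the centrality argument of Claim~\ref{clm:abcentral} (which cites a lemma stated for series-parallel graphs), and the contraction to $K_{2,m}$ all invoke $K_4$-minor-freeness, and the final step appeals to Theorem~\ref{thm:k2m}, a base case proved only for $K_{2,m}$ supply graphs. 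In the planar setting you would need both new structural claims and a new base-case theorem characterizing cut-sufficiency for the $K_4$-type configurations (an analogue of Theorem~\ref{thm:k2m} covering the bad-$K_4$ family); your proposal supplies neither, it only asserts that the proof ``goes through unchanged'' away from a common $K_4$ minor, which is precisely what must be established.

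Second, the selection of the demand to analyze collapses. The series-parallel argument of Lemma~\ref{lem:existbubbles} rests on the split-pair orientation of Lemma~\ref{lem:orient}, the terminal/bracketing machinery of Lemma~\ref{lem:terminals}, and the cited existence of a non-compliant demand; none of these has a planar counterpart in your sketch. Replacing bracketing by nesting of dual cycles is a reasonable idea, but you do not define compliance without the orientation, do not prove that a minimal demand in your nesting order exists with the properties the contradiction needs, and do not show that Claims~\ref{clm:ub2} and~\ref{clm:ub1} (whose surplus computations use Lemma~\ref{lem:cycle}-style crossing bounds that fail for planar graphs) still yield the required smaller non-compliant demand. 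Finally, the decisive steps are explicitly deferred in your own wording (``should produce'', ``I expect'', ``the genuinely hard part will be ruling out hybrid obstructions''): showing that the $K_4$-subdivision outcome forces a bad-$K_4$-pair minor, and showing that no other minimally cut-insufficient planar pair exists, are the entire content of the conjecture. As written, the proposal identifies the right obstacles but does not close any of them, so it cannot be accepted as a proof of the statement.
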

This would imply that a planar pair is cut-sufficient if and only if
it does not contain an odd spindle or the bad-$K_4$-pair in
Figure~\ref{fig:k4} as a minor.


\paragraph{Acknowledgments:} The third author gratefully
thanks Bruce Shepherd for many discussions.

\bibliographystyle{abbrv} \bibliography{stoc12-article}

\end{document}